    \patchcmd{\lsthk@SelectCharTable}{%
      \lst@ifbreaklines\lst@Def{`)}{\lst@breakProcessOther)}\fi}{}{}{}
    \newcounter {subsubsubsection}[subsubsection]
    \renewcommand\thesubsubsubsection{\thesubsubsection .\@alph\c@subsubsubsection}
    \newcommand\subsubsubsection{\@startsection{subsubsubsection}{4}{\z@}%
                                         {-3.25ex\@plus -1ex \@minus -.2ex}%
                                         {1.5ex \@plus .2ex}%
                                         {\normalfont\normalsize\bfseries}}
    \newcommand\l@subsubsubsection{\@dottedtocline{3}{10.0em}{4.1em}}
    \newcommand{\subsubsubsectionmark}[1]{}
\newcommand*{\colorboxed}{}
\def\colorboxed#1#{%
  \colorboxedAux{#1}%
}
\newcommand*{\colorboxedAux}[3]{%
  \begingroup
    \colorlet{cb@saved}{.}%
    \color#1{#2}%
    \boxed{%
      \color{cb@saved}%
      #3%
    }%
  \endgroup
}
\newcommand\R{\mathbb{R}}
\newcommand\N{\mathbb{N}}
\newcommand{\bl}[1]{{\color{Black}{#1}}}
\DeclareMathOperator{\tr}{tr}
\newcounter{exo}
\newcommand{\vertiii}[1]{{\left\vert\kern-0.25ex\left\vert\kern-0.25ex\left\vert #1 
    \right\vert\kern-0.25ex\right\vert\kern-0.25ex\right\vert}}
\renewcommand{\textbf}[1]{\begingroup\bfseries\mathversion{bold}#1\endgroup}
\def\thmhead@plain#1#2#3{%
  \thmname{#1}\thmnumber{\@ifnotempty{#1}{ }\@upn{#2}}%
  \thmnote{{\the\thm@notefont: \textbf{#3}}}}
\let\thmhead\thmhead@plain
\newtheorem{prop}{Proposition}[section]
\newtheorem{theorem}{Theorem}[section]
\newtheorem{rem}{Remark}
\newtheorem{workassump}{Working assumption}
\begin{document}
\title{The best of both worlds: combining population genetic and quantitative genetic models}
\date{\today}
\author{L. Dekens\footnote{Corresponding author: \href{mailto:leonard.dekens@gmail.com}{leonard.dekens@gmail.com},\href{mailto:dekens@math.univ-lyon1.fr}{dekens@math.univ-lyon1.fr}} \footnote{Institut Camille Jordan, UMR5208 UCBL/CNRS, Universit\'e de Lyon, 69100 Villeurbanne, France.} \footnote{Université Paris Cité, CNRS, MAP5, F-75006, Paris, France.}, S.P. Otto\footnote{Department of Zoology, University of British Columbia, Vancouver, BC Canada.} and V. Calvez\footnotemark[2] \footnote{Team-projet Inria Dracula, Lyon, France.}}

\maketitle

\begin{abstract}
	Numerous traits under migration-selection balance are shown to exhibit complex patterns of genetic architecture \bl{with large variance in effect sizes}. \bl{However, the conditions under which such genetic architectures are stable have yet to be investigated}, because studying the influence of a large number of small allelic effects on the maintenance of spatial polymorphism is mathematically challenging, due to the high complexity of the systems that arise. \bl{In particular, in the most simple case of a haploid population in a two-patch environment, while it is known from population genetics that polymorphism at a single major-effect locus is stable \bl{in the symmetric case}, there exists no analytical predictions on how this polymorphism holds when a polygenic background also contributes to the trait}. Here we propose to answer this question by introducing a new eco-evo methodology that allows us to take into account the combined contributions of a major-effect locus and of a quantitative background resulting from small-effect loci, where inheritance is encoded according to an extension to the infinitesimal model. In a regime of small variance contributed by the quantitative loci, we justify that traits are concentrated around the major alleles, according to a normal distribution, using new convex analysis arguments. This \bl{allows a reduction in} the complexity of the system using \bl{a separation of time scales} approach. We predict an undocumented phenomenon of loss of polymorphism at the major-effect locus despite strong selection for local adaptation, \bl{because} the quantitative background slowly disrupts the \bl{rapidly} established polymorphism at the major-effect locus, which is confirmed by individual-based simulations. \bl{Our study highlights how segregation of a quantitative background can greatly impact the dynamics of \bl{major-effect loci by provoking migrational meltdowns}.} \bl{We} also provide a comprehensive toolbox designed to describe how to apply our method to more complex population genetic models.
\end{abstract}
\small	
  \textbf{\textit{Keywords---}} quantitative genetics, population genetics, heterogeneous environment, genetic architecture, polymorphism, asymptotic analysis
\newpage
\tableofcontents
\section{Introduction}

\paragraph{Biological motivation.}
Many species, if not most, evolve in heterogeneous habitats, where varying selection acts upon phenotypic traits in a manner that causes local adaptation. The genetic architecture that underlies those traits is known to present an array of possibilities, from major responses at one particular gene to diffuse polygenic reponses (\cite{Slate_2005,Walsh_Lynch_2018}). {\color{Black}However, despite the \bl{boom in} of genome sequencing of the last four decades, global conclusions on the conditions leading to a major gene or a polygenic response to local adaptation are yet to be drawn from empirical studies.} For example, as reviewed in \textcite{Walsh_Lynch_2018}, different conclusions on the genetic basis of the evolution of resistance to the insecticide BT toxin have emerged between field and lab experiments. Indeed, in the field, major-effects are more often found to be the main drivers of evolution of resistance, whereas a polygenic response is more commonly found in the lab (\cite{MCKENZIE1994166}), even if intensity of selection might not differ (\cite{Groeters_Tabashnik_2000}). {\color{Black} In more recent studies, divergent conclusions about the genetic basis of pathogen resistance in cattle have been reached in different regions of the world (major-effect in Australia: \cite{Turner_Harrison_Bunch_Neto_Li_Barendse_2010}, polygenic in the tropics: \cite{Porto-Neto_Reverter_Prayaga_Chan_Johnston_Hawken_Fordyce_Garcia_Sonstegard_Bolormaa_2014}). Other empirical studies also highlight cases where the genetic basis of local adaptation has a large variance in effect size, thus combining major and polygenic responses (see e.g. \bl{\cite{Koch_Ravinet_Westram_Johannesson_Butlin_2022} about the genetic architecture of local adaptation in \emph{Littorina saxatilis} and} \cite{Gagnaire_Gaggiotti_2016} for a review for marine species). We are therefore interested in investigating the following biological question: \bl{What are the stability conditions of} either major gene responses or polygenic responses (with either a small or large variance in effect size) underlying species' evolution in patchy environments?}

{\color{Black}From a theoretical point of view, the genetic basis of adaptation has been the subject of an ongoing debate since the early days of evolutionary biology. On the one hand, the field of population genetics explicitly describes and models the dynamics of a few major genes and alleles that have discrete \bl{Mendelian} effects, like eye color. On the other hand, the quantitative genetic field explores the evolution of quantitative and continuous traits, like limb size, which are thought to arise from the combined small effects of many genes. A first theoretical milestone in the relationship between the two fields was reached in 1919, when Fisher proposed the infinitesimal model to formalize how such a polygenic trait can be inherited, using the Mendelian framework, clarifying the connection between the two genetic approaches (\cite{Fisher_1919}). His framework was subsequently made more precise (\cite{Bulmer_1971,Lange_1978}) and recently justified in various situations using a multi-loci model and a central limit theorem approach (\cite{Barton_Etheridge_Veber_2017}). This debate on the genetic basis of adaptation can be illustrated by the tension between the textbook prediction of \textcite{Orr_1998} of an exponential distribution of allelic effect sizes \bl{following} adaptation \bl{in} a homogeneous environment and the review in \textcite{Rockman_2012}, which \bl{presents} several lines of evidence highlighting infinitesimal polygenic basis of quantitative traits. Here, we would like to revisit the classical prediction of an exponential distribution in allelic effects from \textcite{Orr_1998}, not in the context of weak selection in panmictic populations, but rather in the context of \emph{spatial heterogeneity}. Although this has been explored through individual-based simulation studies (see \cite{Yeaman_Whitlock_2011, Yeaman_2022}), we aim at providing analytical predictions \bl{that} can yield mechanistic insights on the distribution of effect sizes \bl{likely to be observed following adaptation in} patchy environments.}

{\color{Black} \paragraph{Aims.} The adaptation of species to heterogeneous environment \bl{at a small number of loci} has been extensively studied in the population genetic field (see \cite{Nagylaki_Lou_2001, Burger_Akerman_2011} for one or two-locus models, \cite{Yeaman_Otto_2011} for a model including the effect of drift, \cite{Geroldinger_Bürger_2014} for a two-deme two-locus model). In particular, we would like to draw attention \bl{to} the predictions from the \textit{simplest one-locus model describing the dynamics of local adaptation of a haploid species to a symmetrical two-deme environment}. In the case-study where two alleles segregate at a single locus, each allele being favoured in one deme and selected against in the other, it can be shown that \textit{polymorphism is always maintained at this locus, independently of the migration \bl{rate or selection strength}} (unless the population goes extinct - see a proof of this result in \ref{prop:stability_S0}). However, it is not clear whether this polymorphism would similarly be maintained if, in addition to this biallelic major-effect locus, local adaptation was also influenced by very small contributions from a large number of unlinked loci. The main aim of this paper is therefore to answer the following question: \textit{Could a polygenic background constituted by very small allelic effects topple the polymorphism at the major-effect locus, even though the latter is a priori beneficial for local adaptation when considered on its own?} 

From the point of view of population genetics, answering this question in heterogeneous environments would require \bl{the analysis of} models whose complexity would quickly grow as the number of small effect loci considered increases (however, note that multi-loci models in heterogeneous environments exist, but either focus on the case where all the alleles have equal effects - see \cite{Lythgoe_1997,Szep_sachedva_barton_2021} - or on panmictic populations - see \cite{Vladar_Barton_2014,Jain_Stephan_2017,Höllinger_Pennings_Hermisson_2019}). In this work, \textit{we propose to circumvent this limitation with a new eco-evo model and methodology. It merges the point of views of population genetics and quantitative genetics and considers the combined contributions of a quantitative background} (summarizing the polygenic background' small effects contributions) \textit{and a major-effect locus} on the focal trait determining local adaptation (note that the latter is typically not considered in quantitative genetic models; see \cite{Ronce_Kirkpatrick_2001,Hendry_Day_Taylor_2001,Debarre_Ronce_Gandon_2013,Mirrahimi_2017,Mirrahimi_Gandon_2020,hamel_roques_lavigne,Dekens_2022}). 

This approach has the immediate benefit that each individual is only described by two variables (major-effect allele and quantitative background) instead of potentially many (for each alleles). The drawback is that how to implement efficiently the inheritance of the quantitative background becomes less obvious, which adds a methodological challenge to our objectives. One way to proceed would be to make the ad-hoc assumption that the quantitative background only adds Gaussian noise around the major-effects. This was employed in \textcite{Lande_1983} in order to investigate the genetic architecture of adaptation to a shifting environment (\bl{via} major-effect allelic sweeps or subtle shifts in the frequency of many small effect alleles). However, our proposed method aims \bl{to avoid} any prior assumption on the distribution of the quantitative background and rather analyze the \bl{distribution} that naturally emerge from the dynamics of adaptation.\bl{Instead, we focus} on the within-family distribution by extending Fisher's infinitesimal model (\cite{Fisher_1919,Bulmer_1971,Lange_1978,Bulmer_1980,Turelli_Barton_1994,Barton_Etheridge_Veber_2017}).}

\paragraph{Contributions.}
We show that our model for composite traits gives new analytical insights on the stability of polymorphism at a major-effect locus underlying local adaptation in a symmetrical heterogeneous environment in the presence of a quantitative background due to a large number of small effect loci. Due to small perturbations induced by the quantitative component of the trait, polymorphism at the major-effect locus is lost both at low and high \bl{strengths} of selection, below a certain level of migration. The first region of loss of polymorphism, at low selection intensities, is intuitively expected, as migration blends more strongly than selection differentiates. More surprising is the lost of polymorphism at high intensities of selection, where one would expect polymorphism at the major-effect locus to be strongly favoured. \bl{To} our knowledge, this phenomenon, where quantitative differences displace polymorphism at a major-effect locus, has not yet been documented. We confirm that our analysis is consistent with individual-based simulations.

This \bl{case study} suggests that the long-term influence of a quantitative polygenic background on the polymorphic equilibrium at major-effect loci can lead to unforeseen phenomena. In this work, we present an integrative framework that is meant to help analytically bridge population genetics and quantitative genetics. Our method goes deeper than previous models (\cite{Lande_1983}) by justifying in a certain regime of small variance that the traits are normally distributed around the major-effect alleles effects, thanks to new arguments of convex analysis. It allows a \bl{separation of time scales}, \bl{which} ultimately leads to the conditions for when the infinitesimal quantitative background slowly disrupts the rapidly established symmetrical polymorphism at the major-effect locus. 

Furthermore, we provide a comprehensive toolbox that describes how to apply our methodology to more general cases in terms of number of major-effect loci, number of patches, and form of selection for haploid or diploid populations (see Appendices \ref{toolbox} and \ref{app:generalization}).

{\color{Black}\section{Methods}
\subsection{Model}
\label{sec:model}
\subsubsection{From a generic quantitative genetic model to a composite model.}}
We consider a haploid population reproducing sexually and characterized by a quantitative trait $\boldsymbol{\zeta}$ in a heterogeneous environment with two habitats connected by constant migration at rate $\boldsymbol{m_{1}}$ (from habitat 1 to habitat 2) and $\boldsymbol{m_{2}}$ (from habitat 2 to habitat 1). Following classical models of quantitative genetics, we model each habitat $i$ selecting toward a different optimum $\boldsymbol{\theta_i}$ with strength $\boldsymbol{g_i}$. Maladaptation and local uniform competition for resources (with intensity $\boldsymbol{\kappa_i}$ in deme $i$) are sources of mortality leading to a per capita decline at rate:
\[-\boldsymbol{g_i}(\boldsymbol{\zeta} - \boldsymbol{\theta_i})^2- \boldsymbol{\kappa_i}\,\boldsymbol{N_i},\]
for individuals of trait $\boldsymbol{\zeta}$ in habitat $i$ ($\boldsymbol{N_i}$ denotes the local population size).
At time $\boldsymbol{t}\geq 0$, let $\boldsymbol{n_1}(\boldsymbol{t},\boldsymbol{\zeta})$ and $\boldsymbol{n_2}(\boldsymbol{t},\boldsymbol{\zeta})$ be the local trait densities in patches 1 and 2, and $\boldsymbol{\mathcal{B}}[\boldsymbol{n_i}](\boldsymbol{t},\boldsymbol{\zeta})$ the number of individuals born with a trait $\boldsymbol\zeta$ in habitat $i$, with reproduction occuring at rate $\boldsymbol{\lambda_i}$. The dynamics of the local trait densities read:

\begin{equation}
\label{systnonstat}
\begin{aligned}
\begin{cases}
\frac{\partial \boldsymbol{n_{1}}}{\partial \boldsymbol{t}}(\boldsymbol{t},\boldsymbol{\zeta}) = \boldsymbol{\lambda_1}\,\boldsymbol{\mathcal{B}}[\boldsymbol{n_{1}}](\boldsymbol{t},\boldsymbol{\zeta}) - \boldsymbol{g_1}\,(\boldsymbol{\zeta}-\boldsymbol{\theta_1})^2\,\boldsymbol{n_{1}}(\boldsymbol{t},\boldsymbol{\zeta}) - \boldsymbol{\kappa_1}\boldsymbol{N_{1}}(\boldsymbol{t})\,\boldsymbol{n_{1}}(\boldsymbol{t},\boldsymbol{\zeta})\\\qquad\qquad\qquad\qquad\qquad\qquad\qquad\qquad\qquad\qquad\qquad\qquad\qquad\;+\boldsymbol{m_{2}}\,\boldsymbol{n_{2}}(\boldsymbol{t},\boldsymbol{\zeta})-\boldsymbol{m_{1}}\,\boldsymbol{n_{1}}(\boldsymbol{t},\boldsymbol{\zeta}), \\
\\
\frac{\partial \boldsymbol{n_{2}}}{\partial \boldsymbol{t}}(\boldsymbol{t},\boldsymbol{\zeta}) = \boldsymbol{\lambda_2}\,\boldsymbol{\mathcal{B}}[\boldsymbol{n_{2}}](\boldsymbol{t},\boldsymbol{\zeta}) - \boldsymbol{g_2}\,(\boldsymbol{\zeta}-\boldsymbol{\theta_2})^2\,\boldsymbol{n_{2}}(\boldsymbol{t},\boldsymbol{\zeta}) - \boldsymbol{\kappa_2} \,\boldsymbol{N_{2}}(\boldsymbol{t})\,\boldsymbol{n_{2}}(\boldsymbol{t},\boldsymbol{\zeta})\\\qquad\qquad\qquad\qquad\qquad\qquad\qquad\qquad\qquad\qquad\qquad\qquad\qquad\;+\boldsymbol{m_{1}}\,\boldsymbol{n_{1}}(\boldsymbol{t},\boldsymbol{\zeta})-\boldsymbol{m_{2}}\,\boldsymbol{n_{2}}(\boldsymbol{t},\boldsymbol{\zeta}).\end{cases}
\end{aligned}
\end{equation}

We can define the trait axis such that: $\boldsymbol{\theta} := \boldsymbol{\theta_2} = - \boldsymbol{\theta_1}>0$ without loss of generality. We next describe the novel aspect of this work, which allows the trait $\boldsymbol{\zeta}$ to be \bl{the} sum of two components, a major-effect locus and a quantitative background $\boldsymbol{z}$. We furthermore describe the sexual reproduction operator used.

\paragraph{major-effect.} The first component comes from a locus where two alleles $A/a$ are segregating. They have a major-effect on the trait: $\boldsymbol{\eta_A}$ and $\boldsymbol{\eta_a}$. Inheritance of this locus is Mendelian.

\paragraph{Quantitative background.} The second component, denoted by $\boldsymbol{z}\in\R$, represents the quantitative background due to infinitesimally small additive contributions to the trait from a large number of unlinked alleles. Although it comes from infinitesimally small contributions, $\boldsymbol{z}$ should not be thought \bl{of} as being necessarily small, due to the large number of alleles contributing to it. We also assume that the major-effect locus is effectively unlinked with the small-effect ones.

\paragraph{Inheritance of the trait: an extension of the infinitesimal model.}

Let us recall that the infinitesimal model, first introduced in \textcite{Fisher_1919}, provides a way to encode efficiently the inheritance of complex traits coming from a large number of alleles, each with small effects. The classical version states that an offspring receives a trait $\mathcal{Z}$ from its parents with traits $\mathcal{Z}_1$ and $\mathcal{Z}_2$, where $\mathcal{Z}$ differs from the mean parental trait $\frac{\mathcal{Z}_1+\mathcal{Z}_2}{2}$ following a centered Gaussian law, with variance $\frac{\sigma^2}{2}$. The latter accounts for the stochasticity of segregation, and therefore the variance is called the segregational variance. Specifically:

\begin{equation*}
 \mathcal{Z}|\mathcal{Z}_1,\mathcal{Z}_2 \sim \frac{\mathcal{Z}_1+\mathcal{Z}_{2}}{2} + \mathcal{Y}, \quad \mathcal{Y}\sim \mathcal{N}\left(0,\frac{\sigma^2}{2}\right),\;\mathcal{Y}\perp \mathcal{Z}_1,\mathcal{Z}_2.
    \label{eq:infmodel}
\end{equation*}

The Mendelian view of the infinitesimal model has been discussed in \textcite{Fisher_1919,Bulmer_1971,Lange_1978}: the common interpretation is that the trait results from a large number of small additive contributions at unlinked loci. 
For a more in depth description, see \textcite{Barton_Etheridge_Veber_2017}. 

Because the trait we are considering is a composite of a major-effect locus inherited according to Mendelian laws and an infinitesimal background, it is natural to use an extension of the infinitesimal model for this composite case. Now, the offspring's trait $(\mathcal{A},\mathcal{Z})$ given their parents $(\mathcal{A}_1, \mathcal{Z}_1)$ and $(\mathcal{A}_2, \mathcal{Z}_2)$ reads:
\begin{equation}
 (\mathcal{A},\mathcal{Z},)\;|\;(\mathcal{A}_1,\mathcal{Z}_1),(\mathcal{A}_2,\mathcal{Z}_2) \sim \left(X\mathcal{A}_1+(1-X)\mathcal{A}_2,\;\frac{\mathcal{Z}_1+\mathcal{Z}_2}{2} +  \mathcal{Y}\right),
    \label{infmodel_ext_intro}
\end{equation}
where $\mathcal{Y} \sim \mathcal{N}\left(0,\frac{\boldsymbol\sigma^2}{2}\right)$ follows a centered Gaussian law of variance $\frac{\boldsymbol\sigma^2}{2}$ and $X\sim B\left(\frac{1}{2}\right)$ follows a Bernoulli law with parameter $\frac{1}{2}$ (assuming fair meiosis). The \bl{random variables} are independent of each other and of $\mathcal{Z}_1,\mathcal{Z}_2,\mathcal{A}_1,\mathcal{A}_2$.

{\color{Black}\paragraph{Modified reproduction operator.}}
Let us translate \cref{infmodel_ext_intro} into a continuous density model. Let $\boldsymbol{n^A_i(z)}$ (respectively $\boldsymbol{n^a_i(z)}$) denote the density of individuals of patch $i$ carrying allele $A$ (respectively $a$) along with an infinitesimal background $\boldsymbol{z}$, therefore having a trait $\boldsymbol{\zeta} = \boldsymbol{\eta^A + z}$ (respectively, $\boldsymbol{\eta^a+z}$). In agreement with \cref{infmodel_ext_intro}, the number of offspring born with the allele $A$ and an infinitesimal contribution $\boldsymbol{z}$ in habitat $i$ then reads:
\begin{align*}
    \boldsymbol{\mathcal{B}^A_\sigma[n^A_i,n^a_i](z)} &= \int_{\R^2}\frac{1}{\sqrt{\pi}\boldsymbol{\sigma}}\exp\left[-\frac{\left(\boldsymbol{z}-\frac{\boldsymbol{z_1}+\boldsymbol{z_2}}{2}\right)^2}{\boldsymbol{\sigma}^2}\right]\times\\
    &\qquad\frac{1}{N_i}\left[\boldsymbol{n^A_i(z_1)}\,{\boldsymbol{n^A_i(z_2)} + \frac{1}{2}\left[\boldsymbol{n^A_i(z_1)}\,\boldsymbol{n^a_i(z_2)}+\boldsymbol{n^a_i(z_1)}\boldsymbol{n^A_i(z_2)}\,\right]}\right]d\boldsymbol{z_1}\,d\boldsymbol{z_2}\\
    &= \int_{\R^2}\frac{1}{\sqrt{\pi}\boldsymbol{\sigma}}\exp\left[-\frac{\left(\boldsymbol{z}-\frac{\boldsymbol{z_1}+\boldsymbol{z_2}}{2}\right)^2}{\boldsymbol{\sigma}^2}\right]\,\boldsymbol{n^A_i(z_1)}\,\frac{\boldsymbol{n^A_i(z_2)} + \boldsymbol{n^a_i(z_2})}{\boldsymbol{N_i}}\,d\boldsymbol{z_1}\,d\boldsymbol{z_2}.
\end{align*}
Similarly, the corresponding number of offspring born with the allele $a$ and an infinitesimal part $\boldsymbol{z}$ reads: 
\begin{align*}
    \boldsymbol{\mathcal{B}^a_\sigma[n^A_i,n^a_i](z)}&= \int_{\R^2}\frac{1}{\sqrt{\pi}\boldsymbol{\sigma}}\exp\left[-\frac{\left(\boldsymbol{z}-\frac{\boldsymbol{z_1}+\boldsymbol{z_2}}{2}\right)^2}{\boldsymbol{\sigma}^2}\right]\times\\
    &\qquad\frac{1}{N_i}\left[\boldsymbol{n^a_i(z_1)}\,{\boldsymbol{n^a_i(z_2)} + \frac{1}{2}\left[\boldsymbol{n^A_i(z_1)}\,\boldsymbol{n^a_i(z_2)}+\boldsymbol{n^a_i(z_1)}\boldsymbol{n^A_i(z_2)}\,\right]}\right]d\boldsymbol{z_1}\,d\boldsymbol{z_2}\\
    &= \int_{\R^2}\frac{1}{\sqrt{\pi}\boldsymbol{\sigma}}\exp\left[-\frac{\left(\boldsymbol{z}-\frac{\boldsymbol{z_1}+\boldsymbol{z_2}}{2}\right)^2}{\boldsymbol{\sigma}^2}\right]\,\boldsymbol{n^a_i(z_1)}\,\frac{\boldsymbol{n^a_i(z_2)} + \boldsymbol{n^A_i(z_2})}{\boldsymbol{N_i}}\,d\boldsymbol{z_1}\,d\boldsymbol{z_2}.
\end{align*}
The operator reproduction $\boldsymbol{\mathcal{B}_\sigma}$ indicates that it is more relevant to model the dynamics of the two local allelic densities $\boldsymbol{n^a_i},\boldsymbol{n^A_i}$, instead of $\boldsymbol{n_i}$ (which is their sum). From now on, we will therefore adopt this point of view.

\bl{\begin{rem}[Bridging a population genetic model and a quantitative genetic model]
    Our model described above bridges the following population genetic and quantitative genetic models: 
\begin{enumerate}
    \item \ul{The one-locus haploid model in a two-patch environment}, which considers two alleles $A$ and $a$ segregating at the same locus, each improving the survival chance in one of the habitats and being deleterious in the other. We recall that with symmetrical migration and selection, \ul{this model predicts that polymorphism at the focal locus is always stable}, whenever the metapopulation persists (see \cref{rem:OLHM_fast_eq} and \ref{prop:stability_S0} for a proof of this fact).
    \item \ul{The quantitative genetic model} from \textcite{Dekens_2022}, which studies the eco-evo dynamics of a quantitative trait in a heterogeneous environment, where the trait is inherited according to the standard version of the infinitesimal model. Our work can be seen as an extension of this model, to which we add the segregation of two major-effect alleles at a single locus. Moreover, one can notice that if one major-effect allele fixes (loss of polymorphism), the two models are equivalent. \bl{Because} \textcite{Dekens_2022} gives a complete analytical description of the outcomes of their system (in the small segregation variance regime), the outcomes for our present study are known given the fixation of a major-effect allele. Therefore, \ul{our study focuses on the description of polymorphism at the major-effect locus and its stability}.
\end{enumerate}
\label{rem:fixation_alleles_system}
\end{rem} }

{\color{Black}\subsubsection{Dimensionless system.}}
Let us rescale \cref{systnonstat} according to:
\[\eta^A := \frac{\boldsymbol{\eta^A}}{\boldsymbol{\theta}},\quad\! z:=\frac{\boldsymbol{z}}{\boldsymbol{\theta}},\quad\! g_i:=\frac{\boldsymbol{g_i}\boldsymbol{\theta}^2}{\boldsymbol{\lambda_1}},\quad\! m_i:=\frac{\boldsymbol{m_i}}{\boldsymbol{\lambda_1}},\quad\!\varepsilon := \frac{\boldsymbol{\sigma}}{\boldsymbol{\theta}},\quad\! t:=\varepsilon^2\boldsymbol{\lambda_1}\boldsymbol{t},,\quad\! \alpha := \frac{\boldsymbol{\kappa_1}}{\boldsymbol{\kappa_2}},\quad\! \lambda:=\frac{\boldsymbol{\lambda_2}}{\boldsymbol{\lambda_1}},\]
and introduce the rescaled trait densities:

\[n^A_{\varepsilon,i}(t,z):=\frac{\boldsymbol{\kappa_i}}{\boldsymbol{\lambda_1}}\,\boldsymbol{n^A_i(\boldsymbol{t},\boldsymbol{z})},\quad n^a_{\varepsilon,i}(t,z):=\frac{\boldsymbol{\kappa_i}}{\boldsymbol{\lambda_1}}\,\boldsymbol{n^a_i(\boldsymbol{t},\boldsymbol{z})}.\]
so that \cref{systnonstat} reads:

\begin{equation}
\label{systscaled}
\begin{aligned}
\begin{cases}
\varepsilon^2\frac{\partial n^A_{\varepsilon,1}}{\partial t}(t,z) = \mathcal{B_\varepsilon}^A(n^A_{\varepsilon,1},n^a_{\varepsilon,1})(t,z) - g_1(z+\eta^A+1)^2\,n^A_{\varepsilon,1}(t,z) -N_{\varepsilon,1}(t)\,n^A_{\varepsilon,1}(t,z)\\
\qquad\qquad\qquad\qquad\qquad\qquad\qquad\qquad\qquad\qquad\qquad\qquad+\alpha\, m_2\,n^A_{\varepsilon,2}(t,z)-m_1\,n^A_{\varepsilon_1}(t,z),  \\
\varepsilon^2\frac{\partial n^a_{\varepsilon,1}}{\partial t}(t,z) = \mathcal{B_\varepsilon}^a(n^a_{\varepsilon,1},n^A_{\varepsilon,1})(t,z) - g_1(z+\eta^a+1)^2\,n^a_{\varepsilon,1}(t,z) -N_{\varepsilon,1}(t)\,n^a_{\varepsilon,1}(t,z)\\\qquad\qquad\qquad\qquad\qquad\qquad\qquad\qquad\qquad\qquad\qquad\qquad+\alpha \,m_2\, n^a_{\varepsilon,2}(t,z)-m_1\,n^a_{\varepsilon_1}(t,z), 
\\
\varepsilon^2\frac{\partial n^A_{\varepsilon,2}}{\partial t}(t,z) = \lambda\,\mathcal{B_\varepsilon}^A(n^A_{\varepsilon,2},n^a_{\varepsilon,2})(t,z) - g_2(z+\eta^A-1)^2\,n^A_{\varepsilon,2}(t,z) -N_{\varepsilon,2}(t)\,n^A_{\varepsilon,2}(t,z)\\\qquad\qquad\qquad\qquad\qquad\qquad\qquad\qquad\qquad\qquad\qquad\qquad+\frac{m_1}{\alpha}\,n^A_{\varepsilon,1}(t,z)-m_2\,n^A_{\varepsilon_2}(t,z),  \\
\varepsilon^2\frac{\partial n^a_{\varepsilon,2}}{\partial t}(t,z) = \lambda\,\mathcal{B_\varepsilon}^a(n^a_{\varepsilon,2},n^A_{\varepsilon,2})(t,z) - g_2(z+\eta^a-1)^2\,n^a_{\varepsilon,2}(t,z) -N_{\varepsilon,2}(t)\,n^a_{\varepsilon,2}(t,z)\\\qquad\qquad\qquad\qquad\qquad\qquad\qquad\qquad\qquad\qquad\qquad\qquad+\frac{m_1}{\alpha}\,n^a_{\varepsilon,1}(t,z)-m_2\,n^a_{\varepsilon_2}(t,z), \end{cases}
\end{aligned}
\end{equation}
where the rescaled reproduction operator is given by:

\begin{equation}
\label{operator_2}
\mathcal{B^A_\varepsilon}(n^A_{\varepsilon,i},n^a_{\varepsilon,i})(t,z) = \frac{1}{\sqrt{\pi}\varepsilon}\int_{\R^2} \exp\left[\frac{-(z-\frac{z_1+z_2}{2})^2}{\varepsilon^2}\right]n^A_{\varepsilon,i}(t,z_1)\frac{n^A_{\varepsilon,i}(t,z_2)+n^a_{\varepsilon,i}(t,z_2)}{N_{\varepsilon,i}(t)}dz_1\, dz_2.
\end{equation}

\subsection{Derivation of a moment-based system in the regime of small variance $\varepsilon^2
\ll 1$}
\label{sec:pertubative_section}
{\color{Black}In this subsection, we explain how we derive a closed moment-based ODE system on which the separation of time scale analysis will be \bl{conducted}, \bl{starting} from the PDE system \eqref{systscaled} based on the trait distributions, in the regime of small variance $\varepsilon^2
\ll 1$. \bl{To do so, we justify that the quantitative background values are approximately normally distributed among bearers of the same major-effect allele. Moreover, the mean of these quantitative background values is the same for individuals in the same patch. This implies in particular that the main driver for trait divergence within each habitat is the major-effect locus.} }

\subsubsection{Gaussian approximations of quantitative background values in the regime of small variance: a formal analysis.}
\label{subsec:formal_analysis}
We choose to place our study in a regime where the amount of diversity introduced by the segregation of the infinitesimal background at each event of reproduction is small in comparison to the difference between the habitats' optima:
\[\frac{\boldsymbol{\sigma^2}}{\boldsymbol\theta^2}\ll 1 \implies \varepsilon^2\ll 1.\]
In this regime of small variance, the trait distributions are expected {\color{Black}to converge to} Dirac masses. Our focus is to {\color{Black}give an accurate description of the distribution near this limit. To do so, we extend a small variance methodology introduced by \textcite{Diekmann_Jabin_Mischler_Perthame_2005} for asexual populations and adapted recently to sexual populations with the standard infinitesimal model \bl{(\cite{Calvez_Garnier_Patout_2019, patout2020cauchy, Garnier_2022})} and develop new convex analysis arguments. Throughout this section, the time dependency will be omitted for the sake of clarity.}

\paragraph{Presentation of the methodology.}

Almost two decades ago, \textcite{Diekmann_Jabin_Mischler_Perthame_2005} introduced a methodology to determine the dynamics of the trait values around which trait distributions get concentrated \bl{as Dirac masses} under the regime of small variance. This methodology has since been used successfully to study several evolutionary questions, initially for asexual models, where the diversity generated by mutations of small variance is modelled by a linear operator translating the distribution of mutational effects (\cite{Perthame_Barles_2008,Barles_Mirrahimi_Perthame_2009,Mirrahimi_2017, Mirrahimi_Gandon_2020}). It has recently been adapted to study sexually reproducing populations with the infinitesimal model operator in homogeneous spaces (\cite{Garnier_2022,Calvez_Garnier_Patout_2019,patout2020cauchy,Dekens_2022}). 

As the analytical crux heavily relates to the singular nature of the trait distributions $n_\varepsilon$ as Dirac masses, the method consists in defining proxies $U_\varepsilon$ from $n_\varepsilon$ through a suitable transformation so that such proxies are regular functions (by comparison to Dirac masses) and their asymptotic analysis is easier. Studying them often induces a reduction in the complexity of the system while still retaining fundamental quantitative information about the distributions, such as around which traits they are concentrated. Here, we follow quantitative genetic studies that use the infinitesimal model according to the same methodology (\cite{Garnier_2022,Calvez_Garnier_Patout_2019,patout2020cauchy,Dekens_2022}) and define the proxies $U^A_{\varepsilon,i}$ (resp. $U^a_{\varepsilon,i}$):
\begin{equation}
\label{eq:HCprel}
    n^A_{\varepsilon,i} = \frac{1}{\sqrt{2\pi}\varepsilon}e^{-\frac{U^A_{\varepsilon,i}}{\varepsilon^2}},\quad n^a_{\varepsilon,i} = \frac{1}{\sqrt{2\pi}\varepsilon}e^{-\frac{U^a_{\varepsilon,i}}{\varepsilon^2}}.
\end{equation}
{\color{Black}A helpful analogy is to take the example of a spiky Gaussian distribution with small variance $\varepsilon^2$ for $n^A_{\varepsilon,i}$. Then $U^A_{\varepsilon,i}$ is a smooth quadratic function (even when $\varepsilon \ll 1$). \Cref{figure:hopf_cole} displays an example of this kind of exponential transformation (called Hopf-Cole transformation in scalar conservation laws). A key observation to deduce the traits around which the distribution concentrates is that it does so at the minima (zero) of $U_\varepsilon$.}
\begin{figure}
    \centering
    \includegraphics[width =\textwidth]{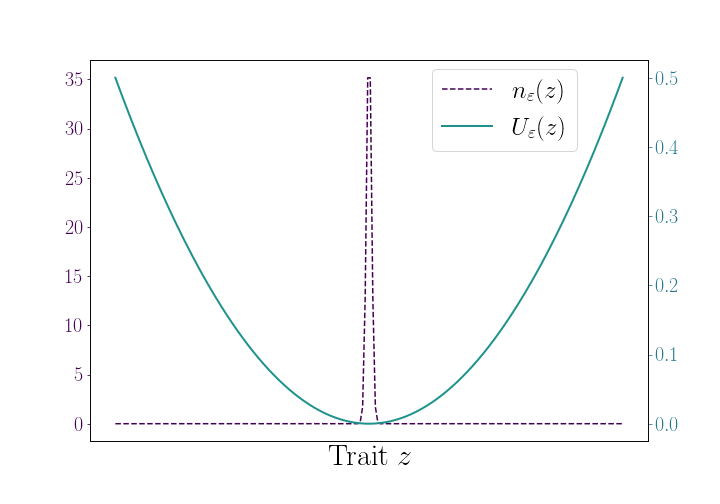}
    \caption{\textbf{Illustration of the Hopf-Cole transform to study concentration phenomena}. This transformation unfolds singular distributions $n_\varepsilon$ close to a Dirac mass (in purple), by defining more regular proxies: $U_\varepsilon$ (in green) such that $n_{\varepsilon} = \frac{1}{\sqrt{2\pi}\varepsilon}e^{-\frac{U_{\varepsilon}}{\varepsilon^2}}$. This figure suggests that, when $\varepsilon$ vanishes, the limit $U$ is regular and positive and cancels at the support of the limit measure $n$.}
    \label{figure:hopf_cole}
\end{figure}
As the proxies $U^A_\varepsilon$ and $U^a_\varepsilon$ are expected to be more regular in the regime of small variance, they are thought to be the right object on which to perform a Taylor expansion series to gain information on the asymptotic distributions in the limit of small variance (see \cite{Calvez_Garnier_Patout_2019}). We therefore define $u^A_{0,i}$ (resp. $u^a_{0,i}$) as the leading term in the Taylor expansion of $U^A_{\varepsilon,i}$ (resp. $U^a_{\varepsilon,i}$) :
\begin{equation}
\label{eq:taylor_u}
U^A_{\varepsilon,i} = u^A_{0,i} + \varepsilon^2\,u^A_{1,i}+\varepsilon^4\,v^A_{\varepsilon,i}, \qquad U^a_{\varepsilon,i} = u^a_{0,i} + \varepsilon^2\,u^a_{1,i}+\varepsilon^4\,v^a_{\varepsilon,i}
\end{equation}
where $u^A_{1,i}$ and $u^a_{1,i}$ are the next term in the Taylor expansion, and $\varepsilon^4v^A_{\varepsilon,i}$ and $\varepsilon^4v^a_{\varepsilon,i}$ are the residues. \textcite{Calvez_Garnier_Patout_2019} provides the tools to control these residues and thus rigorously justify that \eqref{eq:taylor_u} is an admissible Taylor expansion; adapting them is left for future work.
\paragraph{Characterization of the main terms $u_{0,i}^A$ and $u_{0,i}^a$ to justify Gaussian approximations.}
The first step of the analysis in the regime of small variance is the characterization of the main terms $u_{0,i}^A$ and $u_{0,i}^a$. {\color{Black} Indeed, in the regime of small variance, these have to satisfy a strong constraint \bl{that} arises naturally for the contribution of the infinitesimal model reproduction operator term to remain well-balanced within \eqref{systscaled}. In \textcite{Garnier_2022} \bl{and} \textcite{Dekens_2022}, where the standard infinitesimal model operator is used, this constraint yields the analogous main term to be quadratic, which implies that the trait distribution is approximately Gaussian, with a small variance $\varepsilon^2$. However, here, the arguments given in \textcite{Garnier_2022} and used in \textcite{Dekens_2022} are not sufficient, due to the mixing of alleles \bl{between patches} and the discrete nature of Mendelian inheritance. However, we extend the convex analysis to circumvent this limitation (\ref{prop:constraints}) and \bl{identify} $u_{0,i}^A$ and $u_{0,i}^a$ \bl{as} the same quadratic function $z\mapsto\frac{(z-z_i^*)^2}{2}$, where $z^*_i \in \R$ is to be determined later in the analysis. Assuming that \eqref{eq:taylor_u} is an admissible Taylor expansion (which is suggested by the analysis of \textcite{Calvez_Garnier_Patout_2019}), this result is crucial as it justifies the following formal Gaussian approximations of $n^A_{\varepsilon,i}$ and $n^a_{\varepsilon,i}$ ($i\in\{1,2\})$:
\begin{equation}
 n^A_{\varepsilon,i}(z) = \frac{e^{-\frac{-(z-z_i^*)^2}{2\varepsilon^2}}}{\sqrt{2\pi}\varepsilon}
 e^{-u^A_{1,i}(z)+\mathcal{O}(\varepsilon^2)},\qquad n^a_{\varepsilon,i}(z) = \frac{e^{-\frac{-(z-z_i^*)^2}{2\varepsilon^2}}}{\sqrt{2\pi}\varepsilon}e^{-u^a_{1,i}(z)+\mathcal{O}(\varepsilon^2)}.
    \label{eq:n_eps_gaussian}
\end{equation}
\ul{Hence, to the leading order, $n_{\varepsilon,i}^A$ and $n_{\varepsilon,i}^a$ are formally Gaussian, centered at the same quantitative contribution $z_i^*$, with the same variance $\varepsilon^2$.} However, they differ in the \bl{next-order}, which involves the corrector terms $u^A_{1,i}$ and $u^a_{1,i}$, which generate asymmetries in the distributions.

To support \eqref{eq:n_eps_gaussian}, we first derive the following constraints \eqref{constraint_u0} on the main terms $u^A_{0,i}$ and $u^a_{0,i}$. In order for the contribution of both reproduction operators $\mathcal{B}_\varepsilon^A$ and $\mathcal{B}_\varepsilon^a$ to remain well-balanced with the other biological phenomena in the regime of small variance in \eqref{systscaled}, $u^A_{0,i}$ and $u^a_{0,i}$ formally need to satisfy the following (see Appendix \ref{app:constraints} for the details):
\begin{equation}
\tag{C}
\begin{aligned}
\begin{cases}
\forall z \in \R, \quad 
    \max\Big[\underset{z_1,z_2}{\sup}\;u_{0,i}^A(z)&- \left(z-\frac{z_1+z_2}{2}\right)^2-u_{0,i}^A(z_1)-u_{0,i}^A(z_2),\\ &\underset{z_1,z_2}{\sup}\; u_{0,i}^A(z) - \left(z-\frac{z_1+z_2}{2}\right)^2 -u_{0,i}^A(z_1)-u_{0,i}^a(z_2) \Big] = 0,\\
    \forall z \in \R, \quad \max\Big[\underset{z_1,z_2}{\sup}\;u_{0,i}^a(z)& - \left(z-\frac{z_1+z_2}{2}\right)^2-u_{0,i}^a(z_1)-u_{0,i}^a(z_2),\\ &\underset{z_1,z_2}{\sup}\; u_{0,i}^a(z) - \left(z-\frac{z_1+z_2}{2}\right)^2 -u_{0,i}^A(z_1)-u_{0,i}^a(z_2) \Big] = 0.
\end{cases}
\end{aligned}
    \label{constraint_u0}
\end{equation}
We next state the following proposition, which characterizes the main terms $u^A_{0,i}$ and $u^a_{0,i}$ as aforementioned.
\begin{prop}
\label{prop:constraints}
Let $u^A_0$ and $u^a_0$ satisfying \cref{constraint_u0} positive almost everywhere and cancelling somewhere. Then, there exists $z^* \in \R$ such that:
\begin{equation}
\forall z \in \R, \quad 
    u_0^A(z) = u_0^a(z) = \frac{(z-z^*)^2}{2}.
\end{equation}
\end{prop}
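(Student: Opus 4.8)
The plan is to first recast the constraint \eqref{constraint_u0} in a workable form. Since $u_0^A(z)$ does not depend on $(z_1,z_2)$, each supremum is turned into an infimum, and the two lines of \eqref{constraint_u0} become the coupled fixed-point system
\begin{equation*}
u_0^A = \min\left(T^{AA},\,T^{Aa}\right),\qquad u_0^a = \min\left(T^{aa},\,T^{Aa}\right),
\end{equation*}
where $T^{XY}(z) := \inf_{z_1,z_2}\left[\left(z-\tfrac{z_1+z_2}{2}\right)^2+u_0^X(z_1)+u_0^Y(z_2)\right]$ is a Moreau-type infimal convolution. This encodes both the family of inequalities $u_0^A\le T^{XY}$ (for all $z_1,z_2$) and the fact that at every $z$ at least one envelope is attained and equals $u_0^A(z)$. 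Two consequences will be used repeatedly: choosing $z_1=z_2=w$ gives the two-point inequality $u_0^A(z)\le (z-w)^2+2\,u_0^A(w)$; and since the integrands are nonnegative, $T^{XY}\ge 0$. Finally, $T^{XY}(z)-z^2$ is an infimum of affine functions of $z$, hence concave, so $u_0^A$ and $u_0^a$ are semiconcave, in particular continuous — a regularity fact needed below.

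Next I would prove the upper bound $u_0^A(z)\le \tfrac12(z-z_A^*)^2$, where $z_A^*$ is a zero of $u_0^A$ (it exists since $u_0^A$ cancels somewhere). Evaluating the two-point inequality at $w=z_A^*$ gives the crude bound $u_0^A(z)\le (z-z_A^*)^2$. Feeding an ansatz $u_0^A(\cdot)\le a(\cdot-z_A^*)^2$ back into the two-point inequality and optimizing over $w$ improves the constant through the map $a\mapsto \tfrac{2a}{1+2a}$, whose iterates from $a_0=1$ decrease monotonically to the fixed point $\tfrac12$. In the limit this yields $u_0^A(z)\le \tfrac12(z-z_A^*)^2$, and symmetrically $u_0^a(z)\le \tfrac12(z-z_a^*)^2$ for a zero $z_a^*$ of $u_0^a$.

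The third step identifies the centers. Using the cross envelope at the midpoint $m=\tfrac{z_A^*+z_a^*}{2}$ with the choice $z_1=z_A^*,\,z_2=z_a^*$ gives $T^{Aa}(m)\le 0$, hence $u_0^A(m)=u_0^a(m)=0$: the midpoint of a zero of $u_0^A$ and a zero of $u_0^a$ is a common zero of both. Iterating this bisection produces common zeros at every dyadic point of $[z_A^*,z_a^*]$, and by continuity a dense set of zeros forces $u_0^A\equiv 0$ on the whole interval, contradicting positivity almost everywhere unless the interval degenerates. Hence $z_A^*=z_a^*=:z^*$, and the same bisection shows the zero set of each function is exactly $\{z^*\}$.

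The crux is the matching lower bound $u_0^A(z)\ge \tfrac12(z-z^*)^2$. Expanding the fixed-point identity recursively represents $u_0^A(z)$, for each depth $n$, as an infimum over binary expansions of $\sum_{\text{internal }v}(z_v-\bar z_v)^2+\sum_{\text{leaves }v}u_0(z_v)$, where $\bar z_v$ is the midpoint of the children of $v$. Writing $S_n$ for the uniformly weighted barycenter of the depth-$n$ nodes and telescoping $z-S_n=\sum_k (S_k-S_{k+1})$, a Cauchy–Schwarz estimate with weights $2^{-k}$ bounds the internal cost below by $\tfrac12(z-S_n)^2$; combining this with a quadratic control of the leaves $u_0(z_v)\ge c(z_v-z^*)^2$ and Jensen's inequality gives, for every expansion, a total at least $\tfrac12(z-S_n)^2+c\,2^n(S_n-z^*)^2$, which minimized over $S_n$ and passed to $n\to\infty$ tends to $\tfrac12(z-z^*)^2$ for any $c>0$. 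As the depth-$n$ infimum equals $u_0^A(z)$ for every $n$, this gives $u_0^A(z)\ge \tfrac12(z-z^*)^2$, and with the upper bound we conclude $u_0^A=u_0^a=\tfrac12(z-z^*)^2$. The main obstacle is precisely the seed of this argument: securing a crude lower bound $u_0\ge c(\cdot-z^*)^2$ with $c>0$. Knowing only that the unique zero is $z^*$ does not suffice, since a semiconcave positive function may be strictly sub-quadratic near its minimum, and such flatness would let the expansion route its leaves cheaply and undercut the bound. I would close this gap using the rigidity of the fixed-point identity near $z^*$ together with coercivity of the correctors (compact sublevel sets shrinking to $\{z^*\}$), which prevents the leaves from drifting away from $z^*$; this, rather than the telescoping estimate, is where the real work lies, along with the routine justification of attainment of the infima defining $T^{XY}$ and of the recursive expansion.
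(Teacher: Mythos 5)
Your first three steps are sound: the semiconcavity/regularity observation, the iterated two-point inequality giving the upper bound $u_0^A\le\frac{1}{2}(\cdot-z_A^*)^2$ (the map $a\mapsto\frac{2a}{1+2a}$ indeed contracts to $\frac12$), and the midpoint-bisection argument forcing a unique common zero $z^*$ are all correct; the midpoint argument is exactly the one the paper uses in the proof of \ref{prop:constraints_extension}. The problem is your fourth step, and you have diagnosed it yourself: the tree-expansion lower bound needs a seed estimate $u_0\ge c(\cdot-z^*)^2$ with $c>0$, and nothing established up to that point can deliver it. This is a genuine gap, not a routine verification. Consider (taking $z^*=0$) the function
\begin{equation*}
u(z)=\min\left(\frac{|z|^3}{1+|z|},\;\frac{z^2}{2}\right),
\end{equation*}
which is continuous, nonnegative, vanishes exactly at $0$, satisfies your upper bound, and is semiconcave (one checks $\frac{|z|^3}{1+|z|}-z^2=-\frac{z^2}{1+|z|}$ is concave, and a minimum of concave functions is concave), yet is strictly sub-quadratic at its zero. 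So every "soft" property you hold at that stage is compatible with $c=0$, in which case your telescoping/Jensen estimate degenerates: the leaf term $c\,2^n(S_n-z^*)^2$ vanishes, the minimization over $S_n$ returns $0$, and no lower bound survives the limit $n\to\infty$. Your proposed repair ("rigidity of the fixed-point identity near $z^*$ together with coercivity of the correctors") is a hope rather than an argument — compactness of sublevel sets gives no quadratic rate, and the rate is precisely what is needed; any genuine fix must extract quantitative information from the constraint \eqref{constraint_u0} itself near $z^*$, which is the hard part.

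For contrast, the paper closes exactly this gap by working on the dual side. Defining the Legendre conjugates $\hat{u_0}^A(y)=\sup_z\,(z-z^*)y-u_0^A(z)$, the constraint becomes an exact self-similar functional equation,
\begin{equation*}
\hat{u_0}^A(y)=\max\left[\frac{y^2}{4}+2\,\hat{u_0}^A\left(\frac{y}{2}\right),\;\frac{y^2}{4}+\hat{u_0}^A\left(\frac{y}{2}\right)+\hat{u_0}^a\left(\frac{y}{2}\right)\right],
\end{equation*}
which can be iterated with no seed whatsoever: writing the recursion down to scale $y/2^n$ gives $\frac{y^2}{2}(1-2^{-n})$ plus a term $2^n\hat{u_0}(y2^{-n})$ that converges using only convexity and one-sided differentiability at $0$. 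This yields $\hat{u_0}^A(y)=\frac{y^2}{2}+\beta y$ for $y>0$ and $\frac{y^2}{2}+\alpha y$ for $y<0$, after first showing $\hat{u_0}^A=\hat{u_0}^a$ by a comparison-of-halves argument. Biconjugation then supplies the missing lower bound for free, since the biconjugate is always below $u_0^A$ and is its lower convex envelope, and the unique-zero property forces $\alpha=\beta=0$ at the kink points of that envelope, giving $u_0^A=u_0^a=\frac{(\cdot-z^*)^2}{2}$. In short: your primal-side scheme is salvageable only if you can prove the quadratic seed bound, and the known way to do that is essentially to pass to the conjugate variables as the paper does; as it stands, your proof is incomplete at its central step.
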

The conditions on $u^A_{0,i}$ and $u^a_{0,i}$ in \ref{prop:constraints} (positive everywhere and cancelling somewhere) are explained in Appendix \ref{app:constraints}.  In the \cref{app:generalization}, we actually state and prove a stronger result \ref{prop:constraints_extension}, which generalizes \ref{prop:constraints} to more complex genetic architecture\bl{s}.}

Consequently, assuming that \eqref{eq:taylor_u} is the correct ansatz so that we can control the residues in \eqref{eq:n_eps_gaussian} (which the analysis of \textcite{Calvez_Garnier_Patout_2019} suggests and provides a framework to show), using the result of \ref{prop:constraints} in \eqref{eq:HCprel} and \eqref{eq:taylor_u} leads to \eqref{eq:n_eps_gaussian}.
{\color{Black}\subsubsection{Moment-based system in the regime of small variance}}

{\color{Black}This section follows directly the results of the previous one, where we showed formally that, in each habitat, the two allelic trait distributions $n^A_{\varepsilon,i}$ and $n^a_{\varepsilon,i}$ can be approximated by the same Gaussian distribution. We present here how the latter allows us to close the moment-based system obtained from integrating \eqref{systscaled}.

First, we derive formal expansions of the first moments (population size, mean trait, variance and skew) of $n_{\varepsilon,i}^A$ and $n_{\varepsilon,i}^a$ when $\varepsilon^2\ll 1$, thanks to \eqref{eq:taylor_u} and \eqref{eq:n_eps_gaussian}} (as in \cite{Dekens_2022}):
\begin{equation}
\scriptsize
\label{momentest}
\begin{aligned}
\begin{cases}
N^A_{\varepsilon,i}:= \displaystyle\int_\R n^A_{\varepsilon,i}(z)\,dz &= e^{-u^A_{1,i}(z_i^*)}{\tiny\left[1+\varepsilon^2\left(\frac{\left(\partial_z{u^A_{1,i}}(z_i^*)\right)^2}{2}-\frac{\partial_{zz}u^A_{1,i}(z_i^*)}{2}-v^A_{i,\varepsilon}(z_i^*)\right)\right]}+\mathcal{O}(\varepsilon^4),\\
\overline{z_{\varepsilon,i}^A} := \displaystyle\int_\R z\,\frac{n^A_{\varepsilon,i}(z)}{N^A_{\varepsilon,i}}\,dz &= z_i^*-\varepsilon^2\partial_zu^A_{1,i}(z_i^*)+\mathcal{O}(\varepsilon^4),\\
\left({{\sigma}_{\varepsilon,i}^A}\right)^2 := \displaystyle\int_\R (z-\overline{z^A_{\varepsilon,i}})^2\,\frac{n^A_{\varepsilon,i}(z)}{N^A_{\varepsilon,i}}\,dz &= \varepsilon^2+\mathcal{O}(\varepsilon^4),\\
\left({{\psi}_{\varepsilon,i}^A} \right)^3 := \displaystyle\int_\R (z-\overline{z^A_{\varepsilon,i}})^3\,\frac{n^A_{\varepsilon,i}(z)}{N^A_{\varepsilon,i}}\,dz &= \mathcal{O}(\varepsilon^4).
\end{cases}
\end{aligned}    
\end{equation}

{\color{Black}Using \eqref{momentest} when integrating \eqref{systscaled}, we can close the infinite system of moments in the regime of small variance, producing a system of eight ODEs governing the dynamics of the four allelic subpopulation sizes $N^a_{\varepsilon,1}, N^A_{\varepsilon,1}, N^a_{\varepsilon,2}, N^A_{\varepsilon,2}$ and the four allelic local mean quantitative traits $\overline{z_{\varepsilon,1}^a}, \overline{z_{\varepsilon,1}^A}, \overline{z_{\varepsilon,2}^a}, \overline{z_{\varepsilon,2}^A}$:}
\begin{equation}
\label{eq:syst_moment_small_var}
    \begin{aligned}
\begin{cases}
\varepsilon^2 \frac{d\,N^a_{\varepsilon,1}}{dt} = N^a_{\varepsilon, 1} - \left(N^A_{\varepsilon,1}+N^a_{\varepsilon,1}\right) N^a_{\varepsilon,1} - g_1\left[\overline{z^a_{\varepsilon,1}}+\eta^a+1\right]^2 N^a_{\varepsilon,1}+\alpha\,m_2\,N^a_{\varepsilon,2}-m_1\,N^a_{\varepsilon,1}, \\\qquad\qquad\qquad\qquad\qquad\qquad\qquad\qquad\qquad\qquad\qquad\qquad\qquad\qquad\qquad\qquad\quad +\mathcal{O}(\varepsilon^2),
\\\\
\varepsilon^2 \frac{d\,N^A_{\varepsilon,1}}{dt} = N^A_{\varepsilon,1} - \left(N^A_{\varepsilon,1}+N^a_{\varepsilon,1}\right) N^A_{\varepsilon,1} - g_1\left[\overline{z^A_{\varepsilon,1}}+\eta^A+1\right]^2 N^A_{\varepsilon,1} +\alpha\,m_2\,N^A_{\varepsilon,2}-m_1\,N^A_{\varepsilon,1}\\\qquad\qquad\qquad\qquad\qquad\qquad\qquad\qquad\qquad\qquad\qquad\qquad\qquad\qquad\qquad\qquad\quad +\mathcal{O}(\varepsilon^2),
\\\\
\varepsilon^2 \frac{d\,N^a_{\varepsilon,2}}{dt} = \lambda\,N^a_{\varepsilon,2} - \left(N^A_{\varepsilon,2}+N^a_{\varepsilon,2}\right) N^a_{\varepsilon,2} - g_2\left[\overline{z^a_{\varepsilon,2}}+\eta^a-1\right]^2 N^a_{\varepsilon,2}+\frac{m_1}{\alpha}\,N^a_{\varepsilon,1}-m_2\,N^a_{\varepsilon,2} \\\qquad\qquad\qquad\qquad\qquad\qquad\qquad\qquad\qquad\qquad\qquad\qquad\qquad\qquad\qquad\qquad\quad  +\mathcal{O}(\varepsilon^2),
\\\\
\varepsilon^2 \frac{d\,N^A_{\varepsilon,2}}{dt} = \lambda\,N^A_{\varepsilon,2} - \left(N^A_{\varepsilon,2}+N^a_{\varepsilon,2}\right) N^A_{\varepsilon,2} - g_2\left[\overline{z^A_{\varepsilon,2}}+\eta^A-1\right]^2 N^A_{\varepsilon,2} +\frac{m_1}{\alpha}\,N^a_{\varepsilon,1}-m_2\,N^a_{\varepsilon,2}\\\qquad\qquad\qquad\qquad\qquad\qquad\qquad\qquad\qquad\qquad\qquad\qquad\qquad\qquad\qquad\qquad\quad +\mathcal{O}(\varepsilon^2),
\\\\
\varepsilon^2 \frac{d\,\overline{z_{\varepsilon,1}^a}}{dt} = \varepsilon^22g_1\left[-1-\eta^a-\overline{z^a_{\varepsilon,1}}\right] +\left(\frac{\overline{z^A_{\varepsilon,1}}-\overline{z^a_{\varepsilon,1}}}{2}\right)\frac{N^A_{\varepsilon,1}}{N_{\varepsilon,1}}+\alpha\,m_2\frac{N^a_{\varepsilon,2}}{N^a_{\varepsilon,1}}\left(\overline{z^a_{\varepsilon,2}}-\overline{z^a_{\varepsilon,1}}\right)+\mathcal{O}(\varepsilon^4),\\
\\
\varepsilon^2 \frac{d\,\overline{z_{\varepsilon,1}^A}}{dt} = \varepsilon^22g_1\left[-1-\eta^A-\overline{z^A_{\varepsilon,1}}\right] +\left(\frac{\overline{z^a_{\varepsilon,1}}-\overline{z^A_{\varepsilon,1}}}{2}\right)\frac{N^a_{\varepsilon,1}}{N_{\varepsilon,1}}+\alpha\,m_2\frac{N^A_{\varepsilon,2}}{N^A_{\varepsilon,1}}\left(\overline{z^A_{\varepsilon,2}}-\overline{z^A_{\varepsilon,1}}\right)
+\mathcal{O}(\varepsilon^4),\\
\\
\varepsilon^2 \frac{d\,\overline{z_{\varepsilon,2}^a}}{dt} = \varepsilon^22g_2\left[1-\eta^a-\overline{z^a_{\varepsilon,2}}\right] +\left(\frac{\overline{z^A_{\varepsilon,2}}-\overline{z^a_{\varepsilon,2}}}{2}\right)\frac{N^A_{\varepsilon,2}}{N_{\varepsilon,2}}+\frac{m_1}{\alpha}\frac{N^a_{\varepsilon,1}}{N^a_{\varepsilon,2}}\left(\overline{z^a_{\varepsilon,1}}-\overline{z^a_{\varepsilon,2}}\right)+\mathcal{O}(\varepsilon^4),\\
\\
\varepsilon^2 \frac{d\,\overline{z_{\varepsilon,2}^A}}{dt} = \varepsilon^22g_2\left[1-\eta^A-\overline{z^A_{\varepsilon,2}}\right] +\left(\frac{\overline{z^a_{\varepsilon,2}}-\overline{z^A_{\varepsilon,2}}}{2}\right)\frac{N^a_{\varepsilon,2}}{N_{\varepsilon,2}}+\frac{m_1}{\alpha}\frac{N^A_{\varepsilon,1}}{N^A_{\varepsilon,2}}\left(\overline{z^A_{\varepsilon,1}}-\overline{z^A_{\varepsilon,2}}\right)
+\mathcal{O}(\varepsilon^4).
\end{cases}
\end{aligned}
\end{equation}
{\color{Black}\paragraph{Biological description of the equations of the moment-based system \eqref{eq:syst_moment_small_var}.}
The first four equations encoding the dynamics of the allelic subpopulations sizes involve four terms, that we describe using the first equation \bl{for} $N^a_{\varepsilon, 1}$. The first term $N^a_{\varepsilon, 1}$ is a growth term, the second one $-(N^A_{\varepsilon, 1}+N^a_{\varepsilon, 1})N^a_{\varepsilon, 1}$ is a non-linear negative death term by competition, proportional to the total subpopulation size. The third one $-g_1\left[\overline{z^a_{\varepsilon,1}}+\eta^a+1\right]^2 N^a_{\varepsilon,1}$ is a negative death term by selection (with strength $g_1$), which is more lethal when the allelic local mean trait $\overline{z^a_{\varepsilon, 1}}+\eta^a$ is far from the local optimum $-1$. The last migration term $\alpha\,m_2\,N^a_{\varepsilon,2}-m_1\,N^a_{\varepsilon,1}$ represents the asymmetrical transfer of populations between the two patches.

The last four equations encoding the dynamics of the allelic local mean quantitative traits involve three different terms that we describe, taking for reference the first of these equations \bl{for} $\overline{z^a_{\varepsilon, 1}}$. The first term is the selection gradient that pushes the total mean trait $\overline{z^a_{\varepsilon, 1}}+\eta^a$ towards the local optimum $-1$, with an intensity $2\varepsilon^2 g_1$, proportional to the intensity of selection $g_i$ and the small variance of the quantitative trait $\varepsilon^2$ (in agreement with the Gaussian approximation \eqref{eq:n_eps_gaussian}). The second term $\left(\frac{\overline{z^A_{\varepsilon,1}}-\overline{z^a_{\varepsilon,1}}}{2}\right)\frac{N^A_{\varepsilon,1}}{N_{\varepsilon,1}}$ does not exist in the analogous moment-based system in \textcite{Dekens_2022} (without the major-effect locus), as it originates from the segregation of $A/a$ at the major-effect locus. It describes a force which pushes each allelic mean quantitative component towards one another within the same habitat \bl{due to} the mixing effect of the infinitesimal model. It is consistent with the result provided by \ref{prop:constraints} and the Gaussian approximations \eqref{eq:n_eps_gaussian}, which are centered at the same quantitative component $z^*_i$, close to both $\overline{z^A_{\varepsilon,i}}$ and $\overline{z^a_{\varepsilon,i}}$ according to the second line of the expansions \eqref{momentest}. The last term $\alpha\,m_2\frac{N^a_{\varepsilon,2}}{N^a_{\varepsilon,1}}\left(\overline{z^a_{\varepsilon,2}}-\overline{z^a_{\varepsilon,1}}\right)$ relates to the effect of the transfer of population by migration onto the mean quantitative component: it pushes the local mean quantitative components corresponding to the same major-effect allele $\overline{z^a_{\varepsilon,2}}$ and $\overline{z^a_{\varepsilon,1}}$ towards one another.
\begin{rem}[Selection shifts the allelic local mean quantitative trait slowly]
\label{rem:slow_selection}
In the last four equations of \eqref{eq:syst_moment_small_var}, there is a noticeable difference between the first term, proportional to $\varepsilon^2$, and the other two terms, which are of order 1. This \bl{demonstrates} the fact that, in the regime of small variance, selection shifts the local mean quantitative traits very slowly toward the local optima compared to how fast the other two terms intervene in the equation \bl{(describing selection on the major-effects alleles and migration)}. Notice also that the time scale in which the differential system \eqref{eq:syst_moment_small_var} is written ($\varepsilon^2\frac{d\cdot}{dt}$) is the correct one to capture this slow shift.
\end{rem}
\begin{rem}[Magnitude of the residues in \eqref{eq:syst_moment_small_var}]
In the system \eqref{eq:syst_moment_small_var}, the difference in the system between the residues in the first four equations on the local sizes of population of order $\mathcal{O}(\varepsilon^2)$ and the ones in the last four equations on the mean quantitative components of order $\mathcal{O}(\varepsilon^4)$ is consistent with the analysis of \textcite{patout2020cauchy} (see in particular Theorem 1.4).
\end{rem}}

\subsection{Separation of time scales: slow-fast analysis}
\label{sec:slow_fast}
{\color{Black}As highlighted by \cref{rem:slow_selection}, the shift of allelic local mean quantitative components $\overline{z^A_{\varepsilon, i}}$ and $\overline{z^a_{\varepsilon, i}}$ occurs on a slower time scale \bl{than} growth, death and transfer of populations for the allelic subpopulation sizes (first four equations of \eqref{eq:syst_moment_small_var}) and \bl{than} the two relaxing forces of gene flow and segregation for the allelic local mean quantitative traits (last two terms of the last four equations of \eqref{eq:syst_moment_small_var}). Therefore, in this subsection, we show that the moment-based system \eqref{eq:syst_moment_small_var} has a particular structure (up to a change in variables) \bl{that} allows the possibility to separate two different time scales, which can be interpreted as fast ecological time scale (including selection on the major-effects locus) and slow quantitative evolutionary time scales. 

First, we need to transform \eqref{eq:syst_moment_small_var} into an equivalent system which has a suitable form to prove the separation of time scales. This requires the following change of variables, which is motivated by the formal analysis of \cref{sec:pertubative_section} (especially the results of \ref{prop:constraints}):}
\begin{equation}
    \label{eq:slow_fast_variables}\small{
    \delta^a_\varepsilon = \frac{\overline{z_{\varepsilon,2}^a}-\overline{z_{\varepsilon,1}^a}}{2\varepsilon^2},\quad \delta^A_\varepsilon = \frac{\overline{z_{\varepsilon,2}^A}-\overline{z_{\varepsilon,1}^A}}{2\varepsilon^2},\quad \delta_\varepsilon = \frac{\overline{z_{\varepsilon,1}^A}+\overline{z_{\varepsilon,2}^A}-\overline{z_{\varepsilon,1}^a}-\overline{z_{\varepsilon,2}^a}}{4\varepsilon^2}, \quad Z_\varepsilon = \frac{\overline{z_{\varepsilon,1}^A}+\overline{z_{\varepsilon,2}^A}+\overline{z_{\varepsilon,1}^a}+\overline{z_{\varepsilon,2}^a}}{4}}.
\end{equation}
$Z_\varepsilon$ can be interpreted as the mean infinitesimal part of the metapopulation, $\delta_\varepsilon$ the spatial average of the local difference between the two allelic mean infinitesimal parts, $\delta^A_\varepsilon$ and $\delta^a_\varepsilon$ \bl{the equivalent term among bearers of $A$ and $a$, respectively} (see an illustration of those new variables in \cref{fig:slow_fast_variables}). {\color{Black}The quantities defining $\delta_\varepsilon, \delta^A_\varepsilon, \delta^a_\varepsilon$ are divided by $\varepsilon^2$ because \cref{rem:slow_selection}
 suggests that $\overline{z^A_{\varepsilon, 1}}$, $\overline{z^a_{\varepsilon, 1}}$, $\overline{z^A_{\varepsilon, 2}}$ and $\overline{z^a_{\varepsilon, 2}}$ all relax quickly towards the same value due to the fast action of gene flow and segregation, with an error of order $\varepsilon^2$.}
\begin{figure}
   \begin{tikzpicture}
   \node[] {\includegraphics[width =\linewidth]{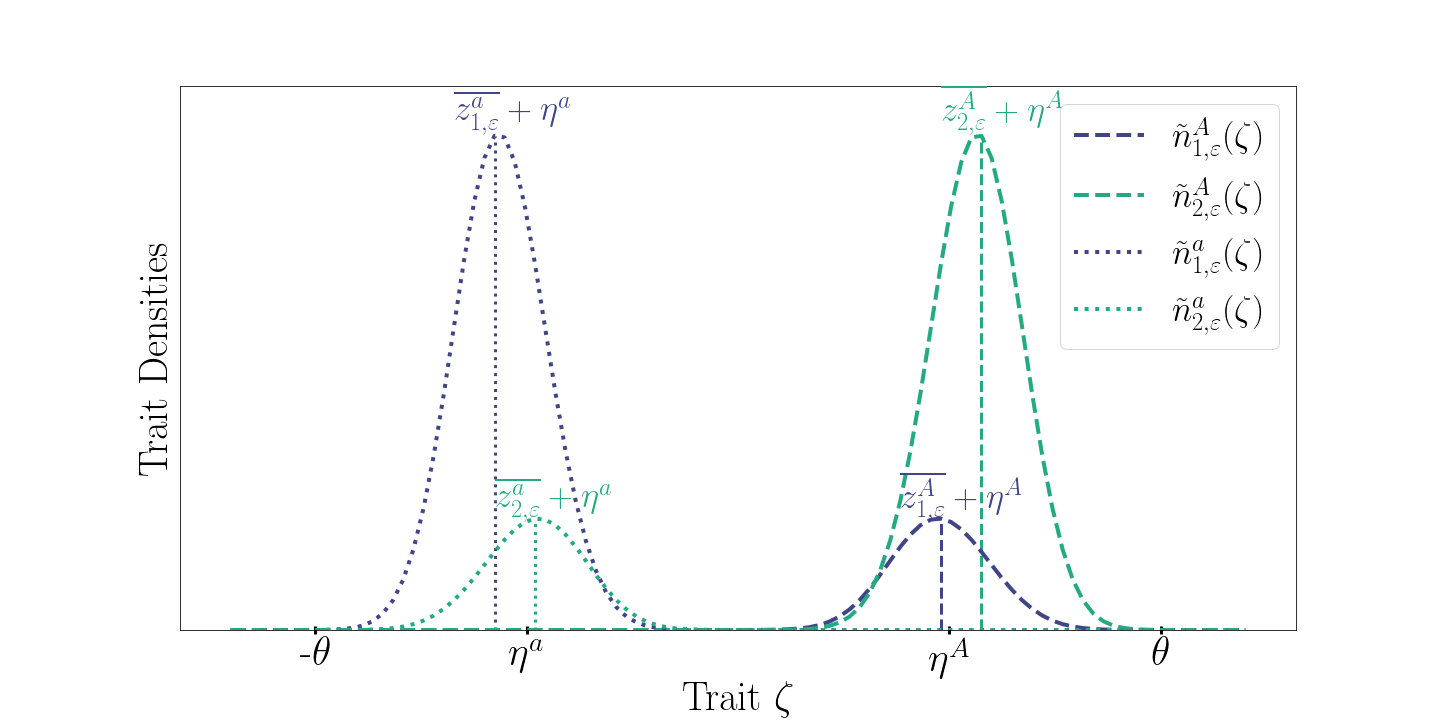}};
   
   \draw[-, color = BrickRed] (-2.4, -4.1) -- (-2.4, -3.9);
   \draw[->, color = BrickRed] (-2.4, -4) -- (-2, -4);
   \node[color = BrickRed] at (-2.2, -4.5) {$\varepsilon^2\,\delta^a_\varepsilon$};
   
   \draw[-, color = BrickRed] (2.45, -4.1) -- (2.45, -3.9);
   \draw[->, color = BrickRed] (2.45, -4) -- (2.85, -4);
   \node[color = BrickRed] at (2.65, -4.5) {$\varepsilon^2\,\delta^A_\varepsilon$};
   
   \draw[-, color = BrickRed] (-2.2, -5.1) -- (-2.2, -4.9);
   \draw[->, color = BrickRed] (-2.2, -5) -- (2.65, -5);
   \node[color = BrickRed] at (0.1, -5.5) {$\varepsilon^2\,\delta_\varepsilon+2\eta$};
   
   \draw[-, color = BrickRed] (0.225, -3.05) -- (0.225, -2.85);
   \node[color = BrickRed] at (0.225, -2.5) {$Z_\varepsilon$};
   \end{tikzpicture}
   \caption{\textbf{Illustration of the slow-fast variables $Z_\varepsilon$, $\delta_\varepsilon$, $\delta^A_\varepsilon$ and $\delta^a_\varepsilon$ (in red)}, introduced in \eqref{eq:slow_fast_variables}. This figure displays a situation where \bl{the two} major-effect alleles are segregating in both habitats in a symmetrical fashion. The graph represents the two local trait densities for each of the two alleles: $\tilde{n}^A_{1,\varepsilon}$, $\tilde{n}^A_{2,\varepsilon}$, $\tilde{n}^a_{1,\varepsilon}$, $\tilde{n}^a_{2,\varepsilon}$ (the same color is for the same deme, and the same linestyle is for the same major-effect allele), as a function of the trait $\zeta = z +\eta^A$ (resp. $z+\eta^a$), where $z$ is the infinitesimal contribution and $\eta^A$ (resp $\eta^a$) is the effect of the major-effect allele. In red, we indicate a visualization of the new variables introduced in \eqref{eq:slow_fast_variables}. $Z_\varepsilon$ is the mean infinitesimal part of the metapopulation, $\delta_\varepsilon$ the spatial average of the local difference between the two allelic mean infinitesimal parts, $\delta^A_\varepsilon$ and $\delta^a_\varepsilon$ the spatial discrepancies in the mean infinitesimal parts per allele. Note the difference in notation between the trait densities $\tilde{n}^A_{i,\varepsilon}$ and the infinitesimal contribution densities $n^A_{i,\varepsilon}$ (which are the ones used in the analysis), which are linked by $n^A_{i,\varepsilon}(z) = \tilde{n}^A_{i,\varepsilon}(z+\eta^A)$ (respectively $z+\eta^a$ for $\tilde{n}^a_{i,\varepsilon}$).}
   \label{fig:slow_fast_variables}
\end{figure}

{\color{Black}After the change in variables \eqref{eq:slow_fast_variables}, we obtain a new system (shown in its explicit form \cref{eq:syst_moment_small_var_fast_slow_friendly} in \cref{app:slow_fast_analysis}) which can be written compactly as follows
\begin{equation*}
\label{eq:slowfastvareps}
    \tag{$P_\varepsilon$}
    \begin{aligned}
        \begin{cases}
        \varepsilon^2 \frac{d\Bar{Y_\varepsilon}}{dt} = G(\Bar{Y_\varepsilon},Z_\varepsilon) + \varepsilon^2\,\nu^G_{\varepsilon}(t),\\
        \frac{dZ_\varepsilon}{dt} = -(g_1+g_2)\,Z_\varepsilon+F(\Bar{Y_\varepsilon}) + \varepsilon^2\,\nu^F_{\varepsilon}(t),
        \end{cases}
    \end{aligned}
\end{equation*}
where $\Bar{Y}_\varepsilon := \left(N_{1,\varepsilon}^a, N_{1,\varepsilon}^a,N_{1,\varepsilon}^A,N_{2,\varepsilon}^A,\delta^a_\varepsilon,\delta_\varepsilon^A,\delta_\varepsilon\right)$ denotes the vector of fast variables, located in a set denoted $\Omega := \left(\R_+^*\right)^4\times\R^3$. The two smooth functions $G \in C^\infty(\Omega\times\R)$ and $F \in C^\infty(\Omega)$ encode respectively the fast and slow dynamics. Moreover, the functions $\nu_\varepsilon^G$ and $\nu_\varepsilon^F$ are residues \bl{that} are uniformly bounded w.r.t $\varepsilon$.

\paragraph{Biological interpretation of the different timescales in \eqref{eq:slowfastvareps}.}
As the first four coordinates of the fast variable $\Bar{Y}_\varepsilon$ are the allelic subpopulation sizes, the function $G(\cdot, Z_\varepsilon)$ describes the fast dynamics of growth, death and transfer of populations occurring when the mean quantitative component is at the value $Z_\varepsilon$. The fast timescale of the dynamics of $\Bar{Y}_\varepsilon$ can be interpreted as the \ul{ecological time scale} (including selection on the major-effects locus). On the contrary, the dynamics of the slow variable $Z_\varepsilon$, which is the mean quantitative component, are driven by the shift by selection $-(g_1+g_2)Z_\varepsilon$ and the demographic feedback $F(\bar{Y}_\varepsilon)$, on a slower timescale, which we interpret as the \ul{quantitative evolutionary time scale}. Indeed, notice that the time derivatives are different between the two lines of \ref{eq:slowfastvareps}: the first line involves $\varepsilon^2 \frac{d\cdot}{dt}$, whereas the small factor $\varepsilon^2$ is absent in the second line. 
\paragraph{Convergence to a simplified limit system.}

The slow-fast analysis developed in \cref{app:slow_fast_analysis} is dedicated to show that, when $\varepsilon$ goes to 0, the solutions of \eqref{eq:slowfastvareps} converge to the solutions of the following limit system which separates the ecological and evolutionary time scales
\begin{equation}
\label{eq:slowfastvarlimit}
    \tag{$P_0$}
    \begin{aligned}
        \begin{cases}
        G(\Bar{Y},Z)=0,\\
        \frac{dZ}{dt} = -(g_1+g_2)\,Z+F(\Bar{Y}).
        \end{cases}
    \end{aligned}
\end{equation}
The first line of \eqref{eq:slowfastvarlimit} is an algebraic system defining the slow manifold, constituted by the fast ecological equilibria $\bar{Y}$ corresponding to a value $Z$ of the evolutionary variable (these are formally defined by $\{\bar{Y} \in \Omega, \text{ such that }G(\bar{Y}, Z) =0\}$). The second line describes the dynamics of the slow variable $Z$ constrained to occur on the slow manifold. 

The convergence result linking \eqref{eq:slowfastvareps} to \eqref{eq:slowfastvarlimit} is stated by the following:}
\begin{theorem}
\label{thm:slow_fast_theorem}
For $(\Bar{Y},Z)$ a solution of \eqref{eq:slowfastvarlimit}, there exists $T^*>0$ such that, for $0<\varepsilon<1$, any solution $(\Bar{Y_\varepsilon},Z_\varepsilon)$ of \eqref{eq:slowfastvareps}  on $[0,T^*]$ converges to $(\Bar{Y},Z)$ uniformly on $[0,T^*]$, as $\varepsilon$ goes to 0 and $(\Bar{Y_\varepsilon}(0),Z_\varepsilon(0))$ goes to $(\Bar{Y}(0),Z(0))$.
\end{theorem}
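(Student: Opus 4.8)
The plan is to establish \Cref{thm:slow_fast_theorem} as a Tikhonov-type singular perturbation (slow--fast) result, treating $\Bar{Y_\varepsilon}$ as a fast variable slaved to the slow variable $Z_\varepsilon$ through the algebraic constraint $G(\Bar{Y},Z)=0$. The first ingredient I would set up is a quantitative description of the slow manifold. Along the given limit trajectory $(\Bar{Y}(t),Z(t))$ one has $G(\Bar{Y}(t),Z(t))=0$, and I would verify that the partial Jacobian $D_{\Bar{Y}}G$ is invertible there, so that the implicit function theorem produces a smooth map $h$ with $\Bar{Y}=h(Z)$ solving $G(h(Z),Z)=0$ locally, and $\Bar{Y}(t)=h(Z(t))$. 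Crucially, I would also need the stronger property that the fast equilibrium is \emph{attracting}: that the spectrum of $D_{\Bar{Y}}G(h(Z),Z)$ lies in the open left half-plane, uniformly along the compact trajectory. This normal hyperbolicity is what drives the relaxation onto the manifold, and is the structural property on which the whole argument rests.

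Granting this, I would introduce the deviation-from-manifold variable $W_\varepsilon := \Bar{Y_\varepsilon}-h(Z_\varepsilon)$. Differentiating and combining the chain rule with the slow equation of \eqref{eq:slowfastvareps} gives $\varepsilon^2\dot W_\varepsilon = G(\Bar{Y_\varepsilon},Z_\varepsilon)+\varepsilon^2 R_\varepsilon$, where $R_\varepsilon$ collects $\nu^G_\varepsilon$ and $Dh(Z_\varepsilon)\dot Z_\varepsilon$ and is uniformly bounded since $\dot Z_\varepsilon=\mathcal{O}(1)$. Because $G(h(Z_\varepsilon),Z_\varepsilon)=0$, a first-order Taylor expansion writes $G(\Bar{Y_\varepsilon},Z_\varepsilon)=A_\varepsilon(t)\,W_\varepsilon$ with $A_\varepsilon(t)$ close to $D_{\Bar{Y}}G(h(Z_\varepsilon),Z_\varepsilon)$ for small $W_\varepsilon$, hence negative in the Lyapunov sense. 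Using the Lyapunov functional $V=\langle P(Z_\varepsilon)W_\varepsilon,W_\varepsilon\rangle$, with $P$ solving the associated Lyapunov equation, I would obtain a differential inequality of the form $\varepsilon^2\frac{d}{dt}V\le -c\,V+\varepsilon^2 C\sqrt{V}$. Together with the well-prepared initial data $W_\varepsilon(0)\to 0$, this forces $\|W_\varepsilon\|_\infty\to 0$ uniformly on $[0,T^*]$ (in fact $W_\varepsilon=\mathcal{O}(\varepsilon^2)$ after the initial instant).

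The remaining step controls the slow variable. Rewriting the second line of \eqref{eq:slowfastvareps} as $\dot Z_\varepsilon=-(g_1+g_2)Z_\varepsilon+F(h(Z_\varepsilon))+\big[F(\Bar{Y_\varepsilon})-F(h(Z_\varepsilon))\big]+\varepsilon^2\nu^F_\varepsilon$, the bracketed term is $\mathcal{O}(|W_\varepsilon|)$ and the residue is $\mathcal{O}(\varepsilon^2)$, so $Z_\varepsilon$ solves the limit ODE $\dot Z=-(g_1+g_2)Z+F(h(Z))$ up to a forcing that tends to $0$ uniformly. A Grönwall estimate exploiting the Lipschitz continuity of $Z\mapsto F(h(Z))$ then gives $\|Z_\varepsilon-Z\|_\infty\to 0$, and combining with $\Bar{Y_\varepsilon}=h(Z_\varepsilon)+W_\varepsilon\to h(Z)=\Bar{Y}$ yields the claimed uniform convergence. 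To make all constants uniform, I would carry out these estimates inside a continuation (bootstrap) argument: fix a compact tubular neighborhood of $(\Bar{Y},Z)|_{[0,T^*]}$ inside $\Omega$, consider the maximal time on which the perturbed solution stays in it, and show that for $\varepsilon$ small the estimates above preclude exit before $T^*$; this is also what pins down a $T^*$ independent of $\varepsilon$.

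I expect the main obstacle to be verifying the attractivity (left-half-plane spectrum) of $D_{\Bar{Y}}G$ along the trajectory, since $G$ is the relatively intricate field governing growth, competition, selection on the major-effect locus, migration, and the segregation-driven relaxation of the allelic mean differences $\delta^a_\varepsilon,\delta^A_\varepsilon,\delta_\varepsilon$. Establishing this will likely require exploiting the competitive/monotone structure of the subpopulation-size block together with the manifestly contractive nature of the gene-flow and segregation terms in the $\delta$-block, probably through an explicit block decomposition of the Jacobian. A secondary technical point is ensuring the fast relaxation does not drive the subpopulation sizes to the boundary of $\Omega$ (loss of positivity), which the continuation argument must exclude.
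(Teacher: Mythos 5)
Your proposal is correct and takes essentially the same route as the paper: a Tikhonov-type slow--fast argument resting on (i) the algebraic description of the slow manifold by fast equilibria (the paper's \ref{prop:solut_S0} and \ref{prop:linear_syst}, which in fact yield a \emph{unique} equilibrium at every level $Z$) and (ii) the attractivity of those equilibria. The only difference is one of emphasis: the convergence machinery you spell out (Lyapunov functional, Gr\"onwall estimate, continuation argument) is what the paper merely cites from \textcite{Levin_1954} and \textcite{Dekens_2022}, while the step you defer as the main obstacle---the Hurwitz property of $\partial_{\bar{Y}}G$---is where the paper's appendix concentrates its actual work, and it is established exactly as you anticipate: the fast Jacobian is block triangular (the subpopulation-size equations decouple from the $\delta$-equations), so stability reduces to the Routh--Hurwitz criterion applied separately to $J_{S_0}$ and $J_{S_L}$ (\ref{prop:stability_S0}, \ref{prop:stability_linear}), with some of the resulting positivity conditions verified symbolically.
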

{\color{Black}The proof the \cref{thm:slow_fast_theorem} is detailed in \cref{app:slow_fast_analysis}. The main argument relies crucially on the stability of the fast equilibria at any level defined by a value of the slow variable $Z\in]-1,1[$ (\ref{prop:stability_S0}, \ref{prop:stability_linear}), ensuring that, at the limit, the fast dynamics converge quickly toward the slow manifold and not away from it. The stability argument is completed by the algebraic description of the slow manifold: we show that, for every level $Z\in ]-1,1[$, there exists a single ecological equilibria $\bar{Y}$ satisfying $G(\bar{Y}, Z) = 0$ (\ref{prop:solut_S0}, \ref{prop:linear_syst}). We also summarize in \cref{fig:slow_fast_layout} the links between the different systems, propositions and theorem involved in the slow-fast analysis.}

\begin{rem}[{\color{Black}The one-locus haploid model's equilibrium is part of the fast equilibrium corresponding to the level $Z=0$}]
\label{rem:OLHM_fast_eq}
{\color{Black}The one-locus haploid model is equivalent to the first four differential equations of \eqref{eq:syst_moment_small_var} on the allelic sizes of each subpopulation, with $(\overline{z^a_{\varepsilon, 1}}, \overline{z^a_{\varepsilon, 1}}, \overline{z^a_{\varepsilon, 1}}, \overline{z^a_{\varepsilon, 1}}) = (0, 0, 0, 0)$} (no infinitesimal part - we can obtain from these equations a system describing the allelic frequencies and local population sizes $(p_1, p_2, N_1, N_2) := \left(\frac{N_1^A}{N_1^A+N_1^a},\frac{N_2^A}{N_2^A+N_2^a},N_1^A+N_1^a,N_2^A+N_2^a\right)$, dropping the $\varepsilon$ that is a parameter of the infinitesimal part). Applying \ref{prop:solut_S0} with $Z=0$ gives a unique equilibrium satisfying the first four equations, which is the one found with the one-locus haploid model. One can thus interpret the symmetrical polymorphic equilibrium of the one-locus haploid model as a fast equilibrium in the model presented in this article. It is therefore stable (\ref{prop:stability_S0}) whenever it entails positive population sizes (same condition as in \ref{prop:solut_S0}).
\end{rem}
\begin{rem}[{\color{Black}Degrees of freedom of the slow manifold compared to \cite{Dekens_2022}}]
\label{rem:degrees_of_freedom}
{\color{Black}\ref{prop:solut_S0} states that for every level $Z\in]-1,1[$, there exists a single fast equilibrium $\bar{Y}$ such that $G(\bar{Y}, Z) = 0$. This implies that there are fewer degrees of freedom in the subsystem \eqref{S_0} defining the four allelic subpopulations sizes (see the details in \cref{app:slow_fast_analysis}) than in the analogous system of two equations from the analysis done in \textcite{Dekens_2022}, that can be obtained in the case where one allele has fixed (up to a translation).} Indeed, \textcite{Dekens_2022} shows that the analogous system can have up to three algebraic solutions depending on the parameters. The result of \ref{prop:solut_S0} is thus unexpected, since \ref{S_0} has twice the number of equations and variables.
\end{rem}\section{Results: stability of polymorphism at the major-effect locus in the limit system}
This section follows naturally the separation of timescales shown in \cref{sec:slow_fast} and focuses on the study of the stability of polymorphism at the major-effect locus in the limit system \eqref{eq:slowfastvarlimit}{\color{Black}, in \bl{the} presence of a quantitative background contributing additively to the trait under selection.} To be able to derive analytical conditions, we assume henceforth a symmetrical environment setting (in migration rates, selection strengths, carrying capacities, reproduction rates and major-effect allelic effects):

\[m:=m_1 = m_2,\quad g:=g_1=g_2, \quad\alpha = 1,\quad  \lambda = 1, \quad\eta:=\eta^A = - \eta^a>0.\]

{\color{Black}Under these symmetrical conditions, and in the absence of any quantitative background, we recall that there exists a symmetrical polymorphic equilibrium in the one-locus haploid model, which \ul{is always stable} (see \ref{prop:stability_S0} for a proof). This symmetrical polymorphic equilibrium in the one-locus model corresponds, in our model which considers additionally the additive contribution of a quantitative background on the trait, to the fast equilibrium $Y^*$ associated to the level $Z^* = 0$ ($Z^* = 0$ corresponds to the average quantitative trait between patches cancelling). Because the property of the fast equilibrium does not necessarily transpose to a global equilibrium over multiple timescales, we are therefore interested in the following questions: 

\begin{enumerate}
    \item Does a symmetrical polymorphic equilibrium for the global limit system \eqref{eq:slowfastvarlimit} exist at the level $Z^* = 0$, ie: does the pair of variables $(Z^*, \bar{Y}^*)$ defined above cancel \ul{both} the first line \underline{and} the \bl{right-hand side} of the second line of \eqref{eq:slowfastvarlimit}?
    \item When the symmetrical polymorphic equilibrium exists, is \bl{it} always stable ? Or, in the long-run, \ul{can the slowly evolving infinitesimal background undermine the rapidly established polymorphism at the major-effect locus}, even though the latter is \bl{appears} favored for local adaptation?
    \item If so, can our analysis \ul{predict in which range of parameters} of migration rate $m$, selection strength $g$ and major-effect $\eta$ does that phenomenon occur?
\end{enumerate}

In a first part, we present the results of our analysis \bl{to answer these questions}. We also provide illustrations of the complex patterns \bl{that} can emerge in terms of parameters range, as the studied phenomenon of disturbance of the polymorphism at the major-effect locus by the infinitesimal background exhibits non-monotonic behaviours with regard to each parameter.

In a second part, we confirm the results of the first part thanks to individual-based simulations.}
\subsection{Analytical predictions}
\label{sec:stability}

The results of this section indicate that the unconditional stability of the polymorphism in the OLM can be disturbed by the presence of a quantitative background, for a substantial range of parameters, including, surprisingly, at the strongest selection levels. The interpretation of \cref{rem:OLHM_fast_eq} offers the idea that the infinitesimal background slowly disrupts the rapidly established symmetrical polymorphism at the major-effect locus. 

\paragraph{Existence of a symmetrical polymorphic equilibrium.}
We first show that a symmetrical polymorphic equilibrium can exist under a range of parameters specified in \ref{prop:sym_dim_eq}, as a stationary state of the limit system \eqref{eq:slowfastvarlimit}, hence a solution of the explicit version of the latter:
\begin{equation}
\label{eq:limit_stationary_system}
\begin{aligned}
\begin{cases}
 N^a_1 - \left[N^A_1+N^a_1\right] N^a_1 - g\left[Z-\eta+1\right]^2 N^a_1 +m(N^a_2-N^a_1) &=0,
\\
N^A_1 - \left[N^A_1+N^a_1\right] N^A_1 - g\left[Z+\eta+1\right]^2 N^A_i+m(N^A_2-N^A_1)&=0,
\\N^a_2 - \left[N^A_2+N^a_2\right] N^a_2 - g\left[Z-\eta-1\right]^2 N^a_2 +m(N^a_1-N^a_2) &=0,
\\
N^A_2 - \left[N^A_2+N^a_2\right] N^A_2 - g\left[Z+\eta-1\right]^2 N^A_2+m(N^A_1-N^A_2)&=0,
\\
 2g - m\,\delta^a\left[\frac{N^a_2}{N^a_1}+\frac{N^a_1}{N^a_2}\right]+\frac{\delta^A-\delta^a}{4}\left[\frac{N_2^A}{N_2^a+N_2^A}+\frac{N_1^A}{N_1^a+N_1^A}\right]+\frac{\delta}{2}\left[\frac{N_2^A}{N_2^a+N_2^A}-\frac{N_1^A}{N_1^a+N_1^A}\right]
&=0,\\
2g - m\,\delta^A\left[\frac{N^A_2}{N^A_1}+\frac{N^A_1}{N^A_2}\right]+\frac{\delta^a-\delta^A}{4}\left[\frac{N_2^a}{N_2^a+N_2^A}+\frac{N_1^a}{N_1^a+N_1^A}\right]+\frac{\delta}{2}\left[\frac{N_1^A}{N_2^a+N_2^A}-\frac{N_2^a}{N_1^a+N_1^A}\right]
&=0,\\
- \frac{\delta}{2}-2\,g\,\eta+m\left(\frac{\delta^A}{2}\left[\frac{N^A_2}{N^A_1}-\frac{N^A_1}{N^A_2}\right]-\frac{\delta^a}{2}\left[\frac{N^a_2}{N^a_1}-\frac{N^a_1}{N^a_2}\right] \right)
&=0,\\
\\
-2\,g\,Z +m\left(\frac{\delta^a}{2}\left[\frac{N^a_2}{N^a_1}-\frac{N^a_1}{N^a_2}\right]+\frac{\delta^A}{2}\left[\frac{N^A_2}{N^A_1}-\frac{N^A_1}{N^A_2}\right]\right)+\frac{\delta^A-\delta^a}{4}\left[\frac{N^A_2}{N^A_2+N^a_2}-\frac{N^A_1}{N^A_1+N^a_1}\right]& \\ \qquad\qquad\qquad\qquad\qquad\qquad\qquad\qquad\qquad\qquad\qquad\qquad+\frac{\delta}{2}\left[\frac{N^A_1}{N^A_1+N^a_1}+\frac{N^A_2}{N^A_2+N^a_2}-1\right]& = 0.
\end{cases}
\end{aligned}
\end{equation}
\begin{prop}
\label{prop:sym_dim_eq}
There exists a unique polymorphic equilibrium corresponding to the infinitesimal average $Z=0$ under the condition:
\begin{equation}
    \left[g(\eta^2+1) <1 \right]\lor \left[m < \frac{2\,g^2\,\eta^2}{g(\eta^2+1)-1} - g(\eta^2+1)+1\right].
    \label{eq:condition_viability_0}
\end{equation}
The allelic local population sizes corresponding to this equilibrium satisfy the property:
\[N_{1}^{a,*} = N_{2}^{A,*},\quad N_{2}^{a, *} = N_{1}^{A,*},\]
and for both alleles, the spatial discrepancies between the mean infinitesimal parts of the two patches per allele are the same:
\[\delta^{A,*} = \delta^{a,*}.\]
Therefore, this polymorphic equilibrium at $Z^*=0$ is called symmetrical.
\end{prop}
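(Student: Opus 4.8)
The plan is to combine the reflection symmetry of the symmetric environment with the existence and uniqueness of fast equilibria already granted by \ref{prop:solut_S0} and \ref{prop:linear_syst}. The starting observation is that, under the symmetric parameters, the stationary system \eqref{eq:limit_stationary_system} is invariant under the involution $\Phi$ coming from the physical symmetry of the two-patch model: exchange the patches $1\leftrightarrow 2$, the alleles $A\leftrightarrow a$, and reflect the quantitative background $z\mapsto -z$. On the moment variables $\Phi$ acts by $N^a_1\leftrightarrow N^A_2$, $N^A_1\leftrightarrow N^a_2$, $\delta^a\leftrightarrow\delta^A$, $\delta\mapsto\delta$, and $Z\mapsto -Z$; in particular it fixes the level $Z=0$. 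Invariance of the four size equations is immediate because $(1-\eta)^2=(\eta-1)^2$ interchanges the selection coefficients of the $(a,\text{patch }1)$ and $(A,\text{patch }2)$ equations, and likewise for the other pair; the analogous bookkeeping on the three $\delta$-equations shows that the fast block $G(\cdot,0)$ maps solutions to solutions under $\Phi$, while the slow response obeys $F\circ\Phi=-F$ (the sign stemming from $Z\mapsto -Z$).

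First I would fix $Z=0$ and solve the fast system $G(\bar{Y},0)=0$, that is the first seven equations of \eqref{eq:limit_stationary_system}. The four size equations admit, by \ref{prop:solut_S0} specialized to the level $Z=0$, a unique solution with all four allelic sizes positive, and this positivity is exactly the viability condition \eqref{eq:condition_viability_0}. Given these sizes, the three remaining equations are linear in $(\delta^a,\delta^A,\delta)$, so \ref{prop:linear_syst} provides a unique solution for them. Together these yield a single fast equilibrium $\bar{Y}^*$ at the level $Z=0$.

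Next I would read the symmetry properties off uniqueness: since the solution set of $G(\cdot,0)=0$ is $\Phi$-invariant and the solution is unique, $\bar{Y}^*$ must be a fixed point of $\Phi$, which is precisely $N^{a,*}_1=N^{A,*}_2$, $N^{a,*}_2=N^{A,*}_1$ and $\delta^{A,*}=\delta^{a,*}$. It remains to confirm that $(\bar{Y}^*,0)$ is a genuine equilibrium of the full limit system and not only of the fast block, i.e.\ that the eighth (slow) equation holds. At $Z=0$ that equation reads $F(\bar{Y}^*)=0$, and since $F\circ\Phi=-F$ while $\bar{Y}^*$ is $\Phi$-fixed, we get $F(\bar{Y}^*)=-F(\bar{Y}^*)$, hence $F(\bar{Y}^*)=0$: the slow balance is satisfied for free. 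Uniqueness of the whole equilibrium at $Z=0$ is then inherited from uniqueness of the fast equilibrium.

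The main obstacle is the size subsystem at $Z=0$: proving it has a unique positive solution and extracting the explicit threshold \eqref{eq:condition_viability_0}. Under the $\Phi$-symmetric ansatz $x:=N^a_1=N^A_2$, $y:=N^A_1=N^a_2$, the four equations collapse to the two-patch one-locus haploid (migration--selection--competition) system $x\big(1-(x+y)-g(1-\eta)^2\big)+m(y-x)=0$ and $y\big(1-(x+y)-g(1+\eta)^2\big)+m(x-y)=0$. Writing $s_\pm:=g(1\pm\eta)^2$ and $a:=1-(x+y)-m$, elimination gives $(a-s_-)(a-s_+)=m^2$ together with $(x+y)\big(1-(x+y)\big)=s_-x+s_+y$; requiring $x,y>0$ then translates, after using $\tfrac{s_-+s_+}{2}=g(\eta^2+1)$ and $s_+-s_-=4g\eta$, into \eqref{eq:condition_viability_0}. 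Care is also needed to establish $F\circ\Phi=-F$ directly on the slow equation, as this is what makes the slow balance vanish automatically; it is the one place where the trait reflection $z\mapsto -z$, rather than a mere relabelling of patches and alleles, is essential.
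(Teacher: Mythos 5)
Your proposal is correct, but it proves the two substantive claims (the symmetry relations and the fact that the slow equation holds at $Z=0$) by a genuinely different route than the paper. The paper's proof (\cref{app:proof_sym_eq}) is purely computational: it specializes the explicit formulas \eqref{eq:YaYAN1N2} of \ref{prop:solut_S0} to $Z=0$ and symmetric parameters, observes $Y^{A,*}Y^{a,*}=1$ and $N^*_1=N^*_2$ (whence $N_1^{a,*}=N_2^{A,*}$ and $N_2^{a,*}=N_1^{A,*}$, and viability reduces to $N^*_1>0$), solves \ref{linear_syst} explicitly to read off $\delta^{A,*}=\delta^{a,*}$, and then verifies by substitution that the last equation of \eqref{eq:limit_stationary_system} is satisfied. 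You instead exploit the involution $\Phi$ (patch swap, allele swap, trait reflection): $\Phi$-invariance of the fast block at $Z=0$, combined with the uniqueness granted by \ref{prop:solut_S0} and \ref{prop:linear_syst}, forces the unique fast equilibrium to be $\Phi$-fixed, which is exactly the stated symmetry, and the equivariance $F\circ\Phi=-F$ then makes the slow balance $F(\bar{Y}^*)=0$ hold for free. Both equivariance claims do hold when checked against the explicit system \eqref{eq:syst_moment_small_var_fast_slow_friendly} with symmetric parameters (the trait reflection $z\mapsto -z$ is indeed essential there, as you note: it is what sends $\overline{z^a_2}\mapsto-\overline{z^A_1}$ and hence $\delta^a\leftrightarrow\delta^A$, $\delta\mapsto\delta$, $Z\mapsto -Z$). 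What each approach buys: the paper's computation produces the closed-form expressions for $Y^{A,*}$, $N^*_1$, $\delta^{A,*}$, $\delta^*$ that are convenient downstream (e.g.\ for the stability analysis around this equilibrium), whereas your argument is shorter, never computes the $\delta$'s, and transfers verbatim to any $\Phi$-symmetric variant of the model enjoying a uniqueness property. Finally, your elimination sketch under the symmetric ansatz ($x=N^a_1=N^A_2$, $y=N^A_1=N^a_2$) reduces viability to the same inequality $1+\sqrt{4g^2\eta^2+m^2}>g(\eta^2+1)+m$ that the paper obtains from $N^*_1>0$, so the two treatments of \eqref{eq:condition_viability_0} agree (including inheriting whatever algebraic simplification the paper performed to put that inequality in the displayed form).
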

{\color{Black}The proof uses the results of \ref{prop:solut_S0} and \ref{prop:linear_syst} and is shown in \cref{app:proof_sym_eq}.}

\paragraph{Stability of the symmetrical polymorphic equilibrium.}
Let us recall the limit system \eqref{eq:slowfastvarlimit}:
\begin{equation}
    \begin{aligned}
        \begin{cases}
        G(\Bar{Y},Z)=0,\\
        \frac{dZ}{dt} = -2\,g\,Z+F(\Bar{Y}).
        \end{cases}
    \end{aligned}
\end{equation}
{\color{Black}The stability of the symmetrical polymorphic equilibrium denoted $(0,\bar{Y}^*$) described above in \ref{prop:sym_dim_eq} is studied in the same manner as in \textcite{Dekens_2022}. Because the differential equation on $Z$ in the second line of \eqref{eq:slowfastvarlimit} involves both $Z$ and $\bar{Y}$, it is not sufficient to do a standard linear analysis. The first step is to express the solution to $G(\cdot, Z) = 0$ as a function of $Z$: $\bar{Y}(Z)$. This is possible thanks to the implicit function theorem used in the vicinity of the symmetrical polymorphic equilibrium, because $\left[\partial_{\bar{Y}}G\right]\vert_{Z=0,\bar{Y} = \bar{Y}^*}$ is invertible (thanks to\ref{prop:stability_S0} and \ref{prop:stability_linear}). This step allows us to recast \eqref{eq:slowfastvarlimit} as
\begin{equation}
    \frac{dZ}{dt} = \mathcal{F}(Z) :=-2\,g\,Z+F\left(\bar{Y}(Z)\right),\quad Z \in ]-1,1[.
    \label{eq:autonomous}
\end{equation}
The stability of the symmetrical polymorphic equilibrium $(0, \bar{Y}^*)$ can now be analysed by using the chain rule of differentiation on the \bl{right-hand side} of \eqref{eq:autonomous}.} We obtain that the symmetrical polymorphic equilibrium is asymptotically locally stable if and only if
\[0<2g + \left.\partial_{\bar{Y}}F\cdot\left(\left[\partial_{\bar{Y}}G\right]^{-1}\partial_Z G\right)\right\vert_{Z=0,\bar{Y} = \bar{Y}^*}.\]
Due to the large number of dimensions involved, the explicit formula of the latter is too long to be given here.

{\color{Black}\bl{The patterns resulting from the numerical analysis of the stability of the symmetrical polymorphic equilibrium for four values of the effect of the major-effect locus $\eta \in \{0.5,0.7,1,1.3\}$ are computed} in \cref{fig:stab_region}. For each value of $\eta$, the region of the stability of the polymorphism is indicated in yellow with selection strength ($g$, $x$-axis) and migration rate ($m$, $y$-axis) varying in $[0,3]$. We can first observe that these yellow regions have complex boundaries, and \ul{exhibit non-monotonic behaviours with regard to both migration rate $m$ and selection strength $g$}. These are not predicted by the one-locus \bl{symmetric} model (OLM), which states that polymorphism is maintained everywhere under the dashed yellow line\bl{, which represents the extinction threshold without the quantitative component (computed thanks to the viability condition \eqref{eq:condition_viability_0} stated in \ref{prop:solut_S0} for $Z=0$)}. The latter leads to the conclusion that, when it occurs, \ul{the instability of polymorphism at the major-effect locus shown by our analysis results stems from the presence of the quantitative background due to small-effect loci}.

To describe the non-monotonic behaviour with respect to increasing selection strengths, one can consider holding a constant intermediate migration rate and increase selection (going left to right on a horizontal line in \cref{fig:eta05}, \cref{fig:eta07}, \cref{fig:eta1} and \cref{fig:eta14}). While the polymorphism at the major-effect locus is not stable with weak selection, stability is gained at an intermediate level of selection that depends on the migration rate and subsequently lost at a higher level of selection. \ul{This non-monotonic behaviour when increasing selection levels is quite robust} with regard to different \bl{values of} $\eta$, as shown by the different panels in \cref{fig:stab_region} (even if the effect is attenuated when $\eta = 1$ in \cref{fig:eta1}, which means that the major-effects coincide with the local optima). When selection is weak compared to migration} (left sides of \cref{fig:eta05}, \cref{fig:eta07}, \cref{fig:eta1} and \cref{fig:eta14}), it is expected that the relative blending by migration, which is strong compared to the divergent forces of local selection, provokes the loss of polymorphism. \ul{The loss of polymorphism at the major-effect locus is more surprising and 
counter-intuitive as one would expect that the bonus provided by polymorphism at the major-effect locus}, which helps subpopulations to be locally adapted, \ul{would be even more important at stronger selection levels}, and therefore \bl{maintained}. Unfortunately, the explicit mathematical expression of \eqref{eq:autonomous} is too involved to be truly informative about what is the cause of the loss of polymorphism at the major-effect locus with strong selection. We recommend the reader interested in this to consult the next section presenting the results of individual-based simulations, which provide insights on the origin of this phenomenon.

        \begin{figure}
        \centering
        \begin{subfigure}{.47\textwidth}
            \centering
            \includegraphics[width=\linewidth]{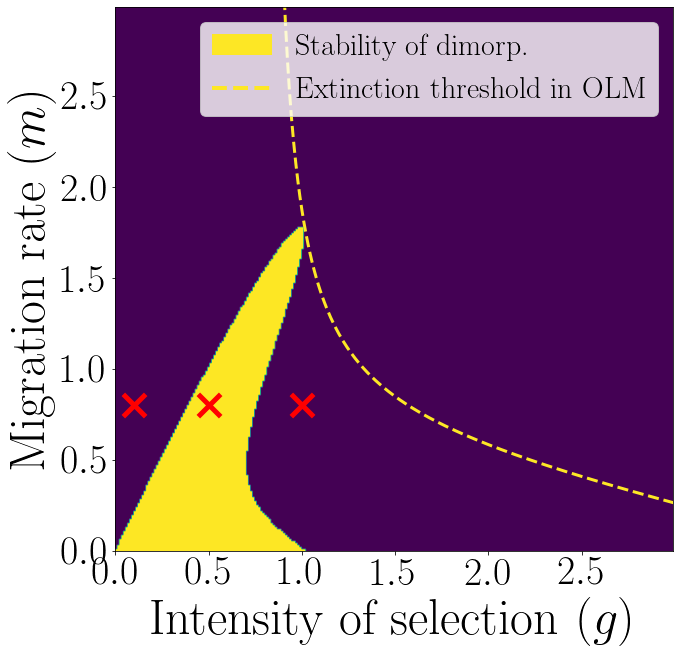}
            \subcaption{$\eta = 0.5$}
            \label{fig:eta05}
        \end{subfigure}
        \begin{subfigure}{.47\textwidth}
            \centering
            \includegraphics[width=\linewidth]{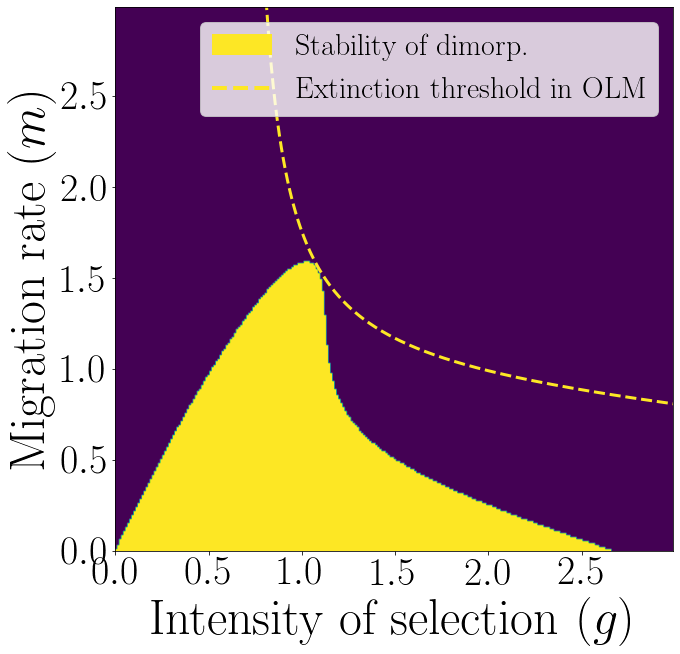}
            \subcaption{$\eta = 0.7$}
            \label{fig:eta07}
        \end{subfigure} \\
        \begin{subfigure}{.47\textwidth}
            \centering
            \includegraphics[width=\linewidth]{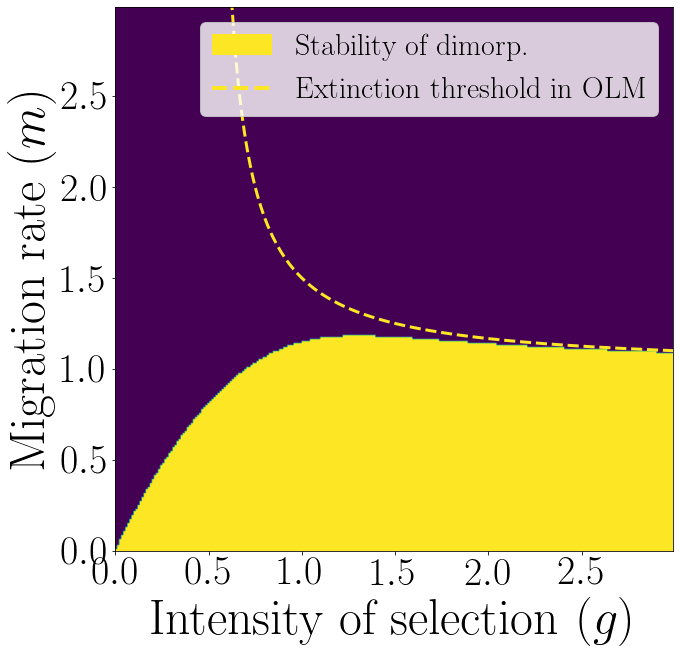}
            \subcaption{$\eta = 1$}
            \label{fig:eta1}
        \end{subfigure}
        \begin{subfigure}{.47\textwidth}
        \centering
            \includegraphics[width=\linewidth]{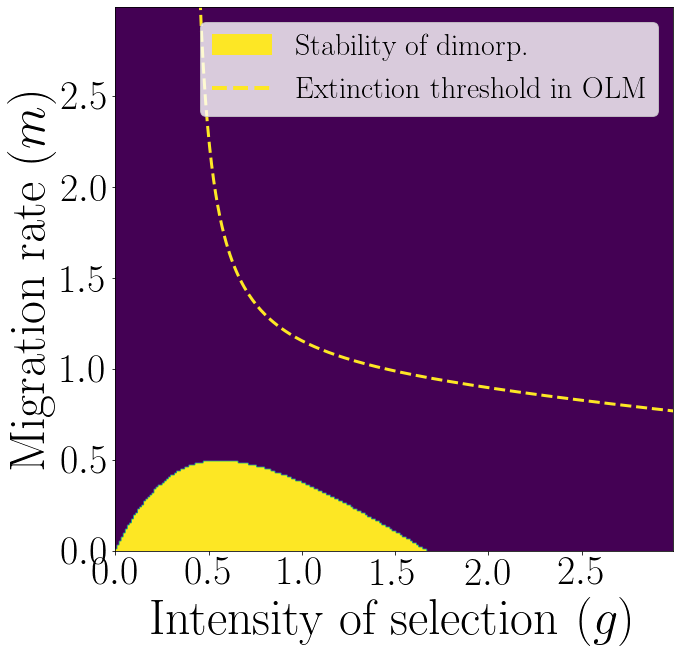}
            \subcaption{$\eta = 1.3$}
            \label{fig:eta14}
        \end{subfigure} 
            \caption{{\textbf{Stability region of the symmetrical polymorphic equilibrium (in yellow), for four major effects $\mathbf{\eta \in \{0.5,0.7,1,1.3\}}$}} (recalling that $\theta = 1$), when $m$ ($y$-axis) and $g$ ($x$-axis) vary in $[0,3]$). This figure highlights the gain and loss of polymorphism with regard to increasing selection, which is not predicted by the one-locus model (abbreviated as OLM in the legend)\bl{, according to which polymorphism is maintained below the extinction threshold represented by the dashed yellow line}. The stable region (in yellow) becomes larger as $\eta$ grows closer to 1, as the major-effect allele effects can then \bl{allow local adaptation to} the two patches on their own, then shrinks again. The red crosses in \cref{fig:eta05} indicate the parameters used for the individual-based simulations (see \cref{sec:individual_based_simulations} and \cref{fig:comp_slim_increasing_selection_sym} and \cref{fig:comp_slim_increasing_selection_asym}).}
            \label{fig:stab_region}
        \end{figure}
\paragraph{Pushing further the numerical analysis of polymorphic equilibria.}

Here, we show a numerical analysis of all the equilibria of the limit system \eqref{eq:slowfastvarlimit} in \cref{fig:dynamic}. To do so, we use the autonomous differential equation \eqref{eq:autonomous} derived previously thanks to the implicit function theorem (used on the whole interval $Z\in ]-1,1[$ thanks to \ref{prop:stability_S0} and \ref{prop:stability_linear}).
From \eqref{eq:autonomous},  $(Z,\bar{Y}(Z))$ is a polymorphic equilibrium if $\mathcal{F}(Z) = 0$, and this equilibrium is locally stable if $\mathcal{F}'(Z) <0$.

Even if the complexity of the limit system is still too great to be analytically solved (due to the implicit nature of the function $\bar{Y}$ defined by the relation $G(\bar{Y}(Z), Z) = 0$), we show in \cref{fig:dynamic} the phase lines corresponding to the limit equation \eqref{eq:autonomous}, when the migration rate and the effect size of the major-effect locus are held constant ($m=0.1, \eta = 0.5$) and the selection strength varies (the lighter the color, the stronger the selection). Solid lines indicate that the system is polymorphic, whereas dotted lines indicate that one major-effect allele has fixed. Every intersection of the zero horizontal line and a solid colored line with a negative slope indicates a locally stable polymorphic equilibrium (conversely, a positive slope indicates an unstable equilibrium). 

This figure is consistent with the analysis of \cref{sec:stability} and \cref{fig:eta05}: at $Z=0$, all the curves return to 0 ($F(Z)$), confirming that a polymorphic equilibrium exists when the mean contribution of the small-effect loci is 0. Their local slope indicates the stability of this equilibrium (stable if negative, \bl{unstable if} positive).
Furthermore, \cref{fig:dynamic} gives insights on the existence of asymmetrical polymorphic equilibria. Particularly, it seems that such equilibria exist for a narrow window of intermediate selection strength: the green curve corresponding to $g = 0.86$ displays two mirrored stable asymmetrical polymorphic equilibria at $Z\approx \pm 0.5$ (indicated by the red arrows), which is hard to predict analytically due to the high orders of polynomials involved. Moreover, such equilibria are presumably quite subtle to catch in individual-based simulations, because the window of selection and the basin of attraction are both narrow. However, this illustrates the new and unsuspected insights that can be obtained from this composite model.
\begin{figure}
    \centering
    \includegraphics[width=\textwidth]{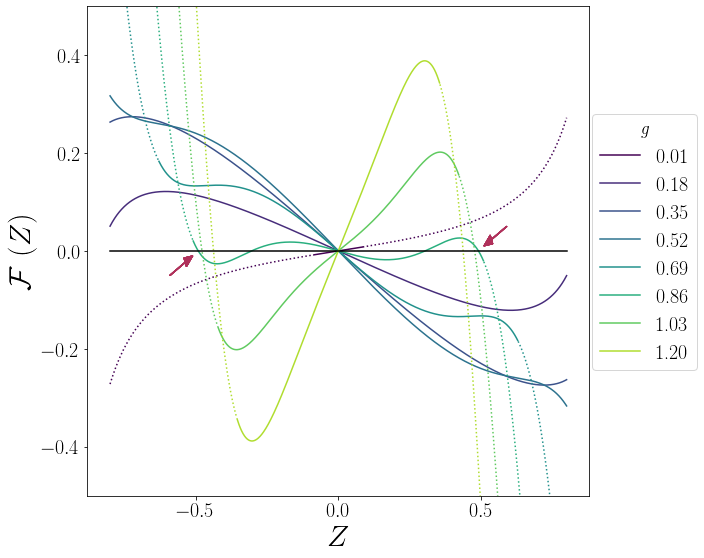}
    \caption{\textbf{Phase lines of the limit equation \eqref{eq:autonomous}, when the migration rate and the strong \bl{allelic} effect are held constant ($m=0.1, \eta = \frac{1}{2}$) and the selection strength varies} (the lighter the color, the stronger the selection). Solid curves indicate that the system is polymorphic, whereas dashed curves indicate that one major-effect allele has fixed. Every intersection of the horizontal black line and a solid colored curve with a negative (resp. positive) slope indicates a locally stable (resp. unstable) polymorphic equilibrium. The darker curve with weak selection $g=0.01$ has a positive slope at $Z=0$ (unstable), the following curves have a negative slope at $Z=0$ (stable for selection between $g=0.18$ and $g=0.86$), and finally the lightest curves have a positive slope at $Z=0$ (unstable for $g\geq 1.03$), which is consistent with \cref{fig:eta05}. Note that there exists additionally two mirrored asymmetrical polymorphic equilibria for $g = 0.86$, for $Z \approx \pm 0.5$ (indicated by the red arrows), which were unsuspected prior to this numerical analysis.}
    \label{fig:dynamic}
\end{figure}

\subsection{Individual-based simulations}
\label{sec:individual_based_simulations}
{\color{Black}In this part, we confirm the results} given by our analysis on the stability of the symmetrical polymorphic equilibrium, using individual-based simulations conducted with the software SLiM (\cite{10.1093/molbev/msy228}). We focus on the gain and loss of polymorphism with regard to increasing selection, when $\eta \neq 1$, for symmetrical and asymmetrical initial conditions. For each set of parameters, we ran 20 replicate simulations. The results \bl{for} the major-effect locus are displayed in \cref{fig:comp_slim_increasing_selection_sym} and \cref{fig:comp_slim_increasing_selection_asym}, both with a quantitative background (left panel) and without (right panel). The simulations confirm that variation is maintained only {\color{Black}for intermediate levels} of selection (as measured by $p(1-p)$, where $p$ is the local frequency of allele $A$). \bl{They also provide some insights regarding the cause of the surprising loss of polymophism at the major-effect locus with strong selection.} The \bl{simulation procedure} is detailed as follows.
\paragraph{Populations and habitats.}

The species is split in two subpopulations living in two different habitats, with local carrying capacity $\boldsymbol{K} = 10^4$. In each habitat, individuals experience selection toward a local trait optimum $\boldsymbol{\theta_i} = (-1)^i$ (for habitat $i$). Initially, the two subpopulations are at $\frac{4}{5}$ of the local carrying capacity. The genetic information of the individuals of the initial population is set as follows. In each subpopulation, all the individuals have, at the major-effect locus, the allele whose effect is the closest to the optimum of the habitat they are in ($\boldsymbol{\eta}$ in habitat 2 and $-\boldsymbol{\eta}$ in habitat 1). The polygenic background is then set randomly and uniformly. 
\paragraph{Genetic architecture.}
{\color{Black}We consider $\boldsymbol{L} = 200$ unlinked loci constituting the polygenic background. At each of these loci, two alleles segregate, having an additive effect on the trait of the individual of value $\frac{\boldsymbol \sigma_{LE}}{\sqrt{\boldsymbol L}}$ or $-\frac{\boldsymbol \sigma_{LE}}{\sqrt{\boldsymbol L}}$, where $\boldsymbol{\sigma_{LE}^2}$ is the variance at linkage equilibrum of the quantitative background. No mutation occurs at those loci. We set the variance at linkage equilibrium to $\boldsymbol{\sigma_{LE}} = 0.1$ small, so that our analysis in a small variance regime is a good approximation. (In \cref{app:50}, we consider the same framework with a smaller number of loci involved in the quantitative background $\boldsymbol{L} = 50)$. }

There is an additional locus of interest, which carries the major-effect alleles $+\boldsymbol{\eta}$ or $-\boldsymbol{\eta}$. This locus is also unlinked to all the others and no mutation occurs at this site. {\color{Black} Note also that the trait range, given by $[-\boldsymbol{\eta} - \boldsymbol{\sigma_{LE}}\sqrt{\boldsymbol{L}}, \boldsymbol{\eta} + \boldsymbol{\sigma_{LE}}\sqrt{\boldsymbol{L}}] = [-\boldsymbol{\eta} -\sqrt{2},\boldsymbol{\eta} + \sqrt{2}]$ extends beyond the local optima (-1,1), even in the absence of major-effects.}

\paragraph{Life cycle.}

The life cycle involves {\color{Black}overlapping generations of small time length $\boldsymbol{\Delta t} = 0.1$}. The life cycle proceeds as follows:
\begin{enumerate}
    \item \emph{reproduction}: each individual of the metapopulation chooses at random one mate within its subpopulation, and {\color{Black}their mating produces an offspring with probability $\boldsymbol{\Delta t}$.}
    \item \emph{selection-competition}: each individual (including offspring generated in the previous step) faces a selection-competition trial according to its trait $\boldsymbol{\zeta}$ and habitat $i$ in which they are currently living. {\color{Black}They survive with probability $\exp\left(-\boldsymbol{g}\boldsymbol{\Delta t}(\boldsymbol{\zeta}-\boldsymbol{\theta_i})^2-\boldsymbol{\Delta t} \frac{\boldsymbol{N_i}}{\boldsymbol{K}}\right)$ and are removed otherwise (here $\boldsymbol{N_i}$ denotes the size of the subpopulation $i$ after reproduction).}
    \item \emph{migration}: at each migration event, within each subpopulation $i$, a number of migrants is drawn, according to a Poisson law with parameter {\color{Black}$\boldsymbol{m\Delta t\;N_i}$ (with a hard cap of $\boldsymbol{N_i}$, which is the number of individuals currently in the subpopulation after the selection-competition step)}. Migrants are uniformly sampled accordingly within the subpopulation and are moved to the other deme. {\color{Black}We stress that a given value of the migration rate $\boldsymbol{m}=0.8$ means that, on average, a fraction $\boldsymbol{m}\boldsymbol{\Delta t} =0.08$ of the population will change deme at each generation}.
\end{enumerate}
Each simulation repeats this life cycle, first without migration for 100 generations of burn in (10 time units) and next with migration for {\color{Black}$\boldsymbol{N_\text{gen}} = 10^4$ generations ($10^3$ time units)}. We model two types of initial events when migration starts: either nothing happens, and the initial state is symmetrical, or we model a sudden catastrophic loss of population in only one of the habitat during the first generation with migration, so that the initial state is asymmetrical (results shown in \cref{app:50}). Precisely, we change the mortality of the uniform competition term only in the habitat 1, by replacing $\exp\left(-\boldsymbol{\Delta t} \frac{\boldsymbol{N_1}}{\boldsymbol{K}}\right)$ by $\exp\left(- \frac{\boldsymbol{N_1}}{\boldsymbol{K}}\right)$ (which is consistent with the interpretation that this catastrophic loss of population is very abrupt). This leads to asymmetrical initial subpopulation sizes.

\paragraph{Qualitative results of the IBS on the stability of polymorphism with regard to increasing selection.}
{\color{Black}{The \bl{solid} lines in the sufigures of \cref{fig:comp_slim_increasing_selection_sym} (symmetrical initial population sizes) and \cref{fig:comp_slim_increasing_selection_asym} (asymmetrical initial populations sizes) represent the median trajectories of the variance at the major-effect locus ($p(1-p)$, where $p$ is the local frequency of allele $A$) in each habitat (gold lines for habitat 1 and navy ones for habitat 2). When the variance $p(1-p)$ is positive, the $A/a$ polymorphism is maintained. In both \cref{fig:comp_slim_increasing_selection_sym} and \cref{fig:comp_slim_increasing_selection_asym}, selection increases from top to bottom and the polygenic background is present in the left panel and absent in the right one. \ul{When there exists a polygenic component contributing to the trait, polymorphism at the major-effect locus is lost after some time with weak selection} ($g=0.1$, \cref{fig:m03g01_sym} and \cref{fig:m03g01_asym}), \ul{is maintained with intermediate selection} ($g=0.5$, \cref{fig:m03g05_sym} and \cref{fig:m03g05_asym}) \ul{and lost again even more quickly with strong selection} ($g=1$, \cref{fig:m03g1_sym} and \cref{fig:m03g1_asym}). \ul{This is qualitatively consistent with the analytical predictions} displayed in \cref{fig:eta05}, where the red crosses indicate the three selection-migration set of parameters chosen for the IBS. Moreover, this phenomenon is robust with regard to initial conditions (\cref{fig:comp_slim_increasing_selection_sym} and \cref{fig:comp_slim_increasing_selection_asym}), although the \bl{loss} of polymophism at the major-effect locus at weak and strong selection is faster when subpopulations sizes are initially asymmetrical (\cref{fig:comp_slim_increasing_selection_asym}).}}

\paragraph{Control case without polygenic background.}

To confirm {\color{Black} that the loss of polymorphism at weak and strong selections is due to the polygenic background and not to genetic drift (although drift is unlikely to have an effect under this time range of $10^3$ time units with a population of order $10^4$)}, we additionally run an equal number of replicates for each set of parameters without any polygenic background $(\boldsymbol{L}=0, \boldsymbol{\sigma_{LE}} = 0)$. Only the major-effect alleles segregates, and this corresponds to the one-locus haploid model. \ul{Results shown in the right panel of} \cref{fig:comp_slim_increasing_selection_sym} and \cref{fig:comp_slim_increasing_selection_asym} \ul{are consistent with the one-locus haploid model analysis}, \bl{which} states that the polymorphism at this major-effect locus is stable at all level of selection (the variance at the major-effect locus remains positive and stable).

\bl{\paragraph{Explanation behind the loss of polymorphism with strong and weak selection.}
The IBS allow us to gain some insights about the cause of the major locus polymorphism's collapse. \bl{We first consider strong selection}. In particular, the dynamics of the subpopulations sizes and the local mean traits reveal that, at one point, stochasticity creates a small shift in the local mean traits. This shift is the same between bearers of $A$ and $a$ and in both patches, because the small segregational variance of the quantitative background binds the quantitative background values to be approximately the same for everyone (in the analysis, this is reflected by the change in variables \eqref{eq:slow_fast_variables} introducing $\delta^a_\varepsilon,\delta^A_\varepsilon$ and $\delta_\varepsilon$). Therefore, this shift, which is toward one of the local optima improves the adaptation of one subpopulation and is deleterious \bl{in} the other one, which causes an asymmetry in subpopulation sizes, \bl{which is} particularly pronounced when selection is strong (see the small figure embedded in \cref{fig:m03g1_sym}, built by selecting one of the two asymmetries for the sake of clarity). Because of this asymmetry, the migrants' flow is also asymmetrical and the larger population then undermines even more the small population by gene flow, which in turn raises the frequency of the major-effect allele favoured in the larger patch, which then further increases the disparity in population sizes among the two patches. This positive feedback loop creates a vortex related to the phenomenon of \textit{migrational meltdown} identified in the quantitative genetic model of \textcite{Ronce_Kirkpatrick_2001}, which eventually leads to the loss of one of the major-effect alleles. 

One crucial feature of this explanation is the dynamics of the (varying) subpopulation sizes, which our eco-evo model allows us to track. To confirm this intuition, we conducted the same IBS, but with adjusting the birth rate to compensate for the deaths at every generation, effectively keeping both subpopulations at a constant size. With constant subpopulations sizes, the polymorphism at the major-effect locus is not lost with strong selection (see \cref{fig:m03g1_fixed_sizes} in \cref{app:fixed_sizes}). This truly highlights the role of the eco-evo framework in which subpopulations sizes are variables that are allowed to vary.

We next consider weak selection, where one might wonder if the loss of polymorphism at the major-effect locus relies on the same mechanism. In this case, random fluctuations at the major-effect locus cause the quantitative trait to shift in the opposite direction, ensuring that the mean trait remains near 0 (the midpoint between the two optima). This is because migration is so high relative to selection that selection favours lineages that survive well in both patches. However, here, the small figure embedded in \cref{fig:m03g01_sym} suggests the role of varying subpopulations sizes in this phenomenon is not as important. This is confirmed by the fact that the loss of polymorphism also occurs in IBS where subpopulations sizes are kept at a constant level (\cref{fig:m03g01_fixed_sizes}). This implies that the loss of polymorphism at the major-effect locus with weak and with strong selection fundamentally differ. With strong selection, randomly generated asymmetries drive the system toward specialization in one patch and fixation of the major allele in that patch, whereas with weak selection, randomly generated asymmetries drive one major allele to fix and the quantitative trait to compensate in such a way that individuals are generalist with a mean trait near the midpoint between the patch optima.
}

\paragraph{Quantitative comparison of IBS with the continuous-time deterministic model \eqref{systnonstat}.}
{\color{Black}\bl{We also} ran deterministic numerical \bl{iterations} of \eqref{systnonstat} to check the quantitative constistency of the stochastic IBS with the deterministic model \eqref{systnonstat}. Two series were run, one for each type of initial condition (symmetrical or asymmetrical initial subpopulations sizes). The median trajectory obtained from these deterministic numerical resolutions of \eqref{systnonstat} for each set of parameters and initial conditions are displayed by the dashed lines in all the subfigures of \cref{fig:comp_slim_increasing_selection_sym} and \cref{fig:comp_slim_increasing_selection_asym}. \ul{These deterministic trajectories are in excellent agreement with the ones obtained from the stochastic IBS and provide good approximations}. We choose to distinguish the two types of initial conditions (asymmetrical or symmetrical) because the deterministic numerical resolutions are unequally sensitive to them. Indeed, since the environment is symmetrical, the symmetrical initial state is at an unstable edge between two symmetrical stable valleys for the interesting range of parameters and thus wanders for some time before choosing a valley to fall into. Therefore, we initialized the deterministic resolutions with the symmetrical initial state according to their respective IBS replicate states at 20\% of the time before the median fixation time (140 times units for \cref{fig:m03g01_sym} and 40 time units for \cref{fig:m03g1_sym}). With the asymmetrical initial conditions (shown in \cref{fig:comp_slim_increasing_selection_asym} in \cref{app:50}), this sensitivity is greatly reduced, and the deterministic resolutions are initialized according to their respective IBS replicate states when migration starts (0 time units). Furthermore, the numerical scheme for the resolution of the deterministic model \eqref{systnonstat} uses a splitting scheme to handle successively migration and the ecological dynamics internal to each habitat. For the latter, we use a discretization of the Duhamel' integral formula on time step of lengths $\boldsymbol{\Delta t}$ for the asymmetrical initial state series and $\frac{\boldsymbol{\Delta t}}{4}$ for the symmetrical initial state series (with an accordingly increased number of time steps).}

\begin{figure}
\begin{subfigure}{.45\linewidth}
\centering

        \stackinset{l}{45pt}{b}{25pt}{\includegraphics[scale=.09]{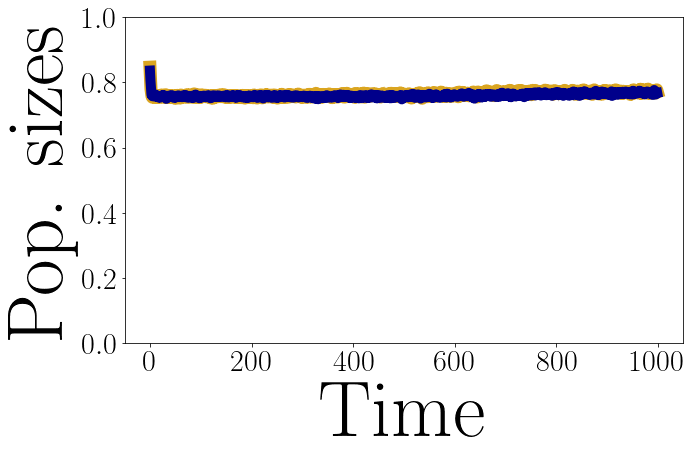}}{\includegraphics[width=\linewidth]{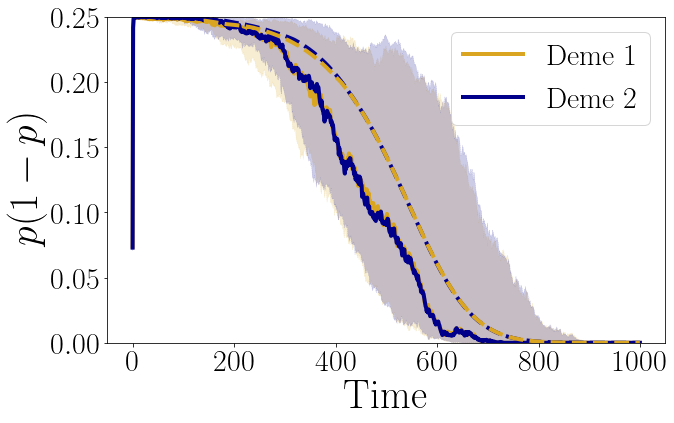}}
    \subcaption{\small{\color{Black}Major-effect locus with polygenic background: weak selection ($g=0.1$)}.}
    \label{fig:m03g01_sym}
\end{subfigure}
\qquad\begin{subfigure}{.45\linewidth}
\centering
    \includegraphics[width=\linewidth]{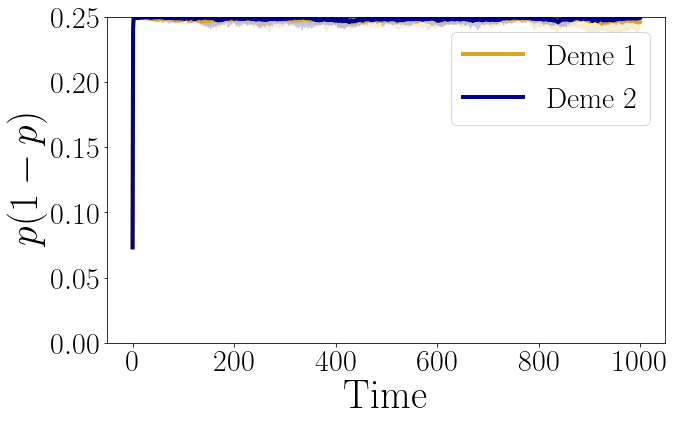}
    \subcaption{\small Control case without polygenic background: weak selection ($g=0.1$).}
    \label{fig:m03g01control_sym}
\end{subfigure}
\\
\begin{subfigure}{.45\linewidth}
\centering
    \stackinset{l}{45pt}{b}{25pt}{\includegraphics[scale=.09]{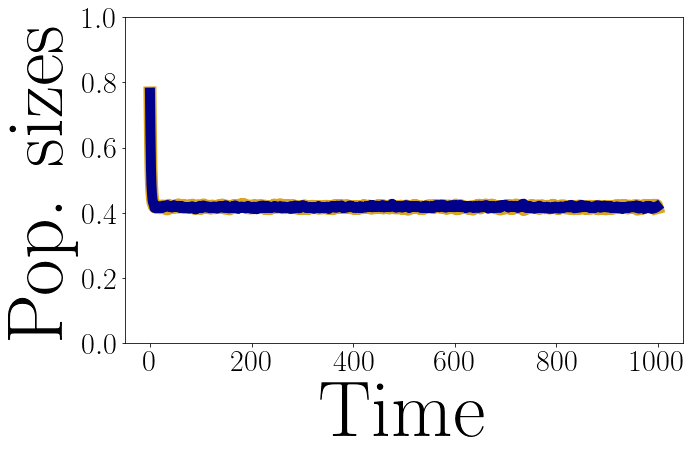}}{\includegraphics[width=\linewidth]{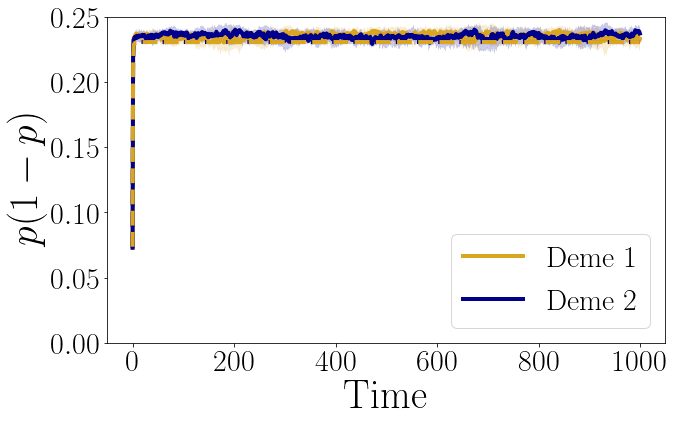}}

    \subcaption{\small Major-effect locus with polygenic background: intermediate selection ($g=0.5$).}
    \label{fig:m03g05_sym}
\end{subfigure}
\qquad\begin{subfigure}{.45\linewidth}
\centering
    \includegraphics[width=\linewidth]{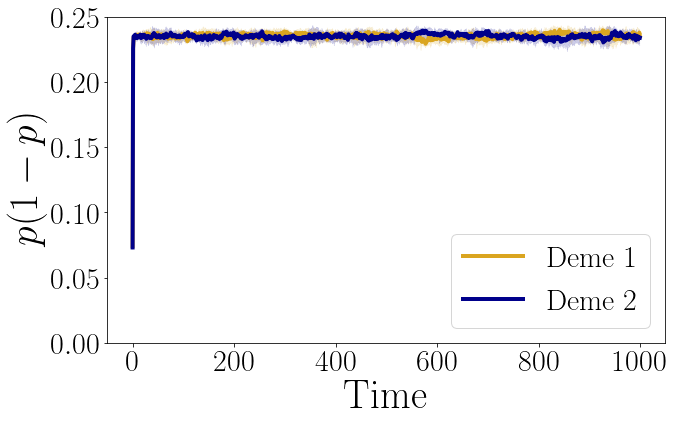}
    \subcaption{\small Control case without polygenic background: intermediate selection ($g=0.5$).}
    \label{fig:m03g05control_sym}
\end{subfigure}
\\
\begin{subfigure}{.45\linewidth}
\centering
    \stackinset{c}{55pt}{b}{25pt}{\includegraphics[scale=.09]{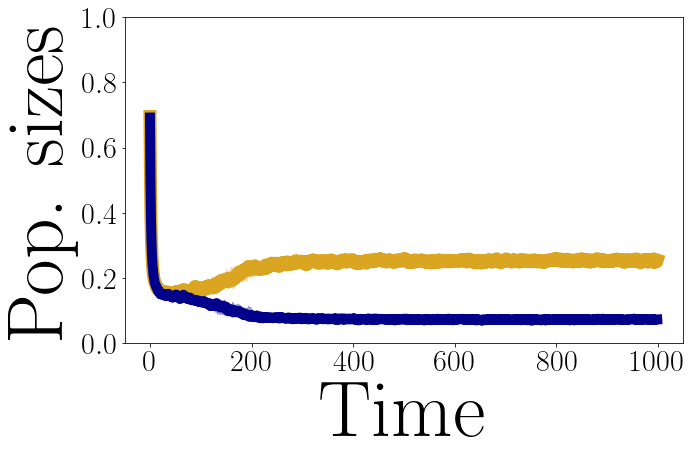}}{\includegraphics[width=\linewidth]{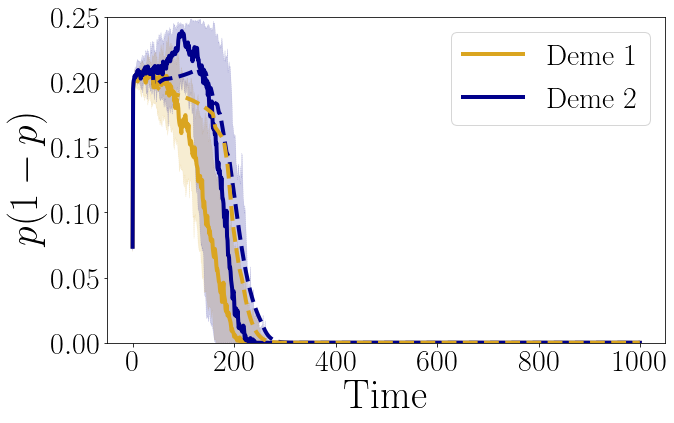}}

    \subcaption{\small{\color{Black}Major-effect locus with polygenic background: strong selection ($g=1$).}}
    \label{fig:m03g1_sym}
\end{subfigure}
\qquad\begin{subfigure}{.45\linewidth}
\centering
    \includegraphics[width=\linewidth]{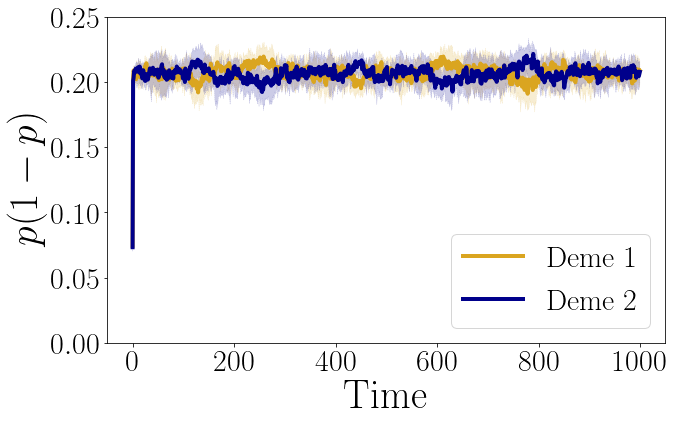}
    \subcaption{\small Control case without polygenic background: strong selection ($g=1$).}
    \label{fig:m03g1control_sym}
\end{subfigure}
\caption{\small\textbf{Variance at the major-effect locus across time for increasing selection (top to bottom: $g = 0.1, \;0.5, \;1$) at a fixed rate of migration ($m = 0.8$), {\color{Black}with symmetrical initial subpopulation sizes}}. $p$ denotes the local frequency of the major-effect allele $A$. The left panel is obtained with both a major-effect locus ($\boldsymbol{\eta} = 1/2$) and {\color{Black}\bl{a} polygenic background of $200$ loci}, whereas only the major-effect locus is present in the right panel. For each subfigure, 20 replicates simulations were run per set of parameters, according to the setting explained in \cref{sec:individual_based_simulations}. In each subfigure, the solid line represents the median trajectory and the shaded area indicates the 0.2 and 0.8 quantiles. {\color{Black}The dashed lines represent the median trajectories of the numerical resolutions of the deterministic model \eqref{systnonstat}}. This figure confirms that polymorphism of the major-effect locus is maintained only when selection is intermediate in strength \bl{(panel c)} in presence of a polygenic background \bl{(left panel)}. \bl{The small figures embedded in each figure of the left panel represent the dynamics of the subpopulation sizes ($N_1$ and $N_2$). They highlight the qualitative difference between the loss of polymorphism at the major-effect locus with weak or strong selection, as fixation occurs without change in subpopulation size with weak selection (\cref{fig:m03g01_sym}) and as subpopulations sizes become asymmetrical with strong selection (\cref{fig:m03g1_sym}).}}
\label{fig:comp_slim_increasing_selection_sym}
\end{figure}

\newpage
\section{Discussion}

\paragraph{Summary.}
In this work, we present a new eco-evo model for selection in a heterogeneous environment that combines a major-effect locus with a quantitative genetic background, without assuming that the latter is normally distributed. \bl{With this model, we aim to examine how the presence of a small quantitative background can disturb the polymorphism at the major-effect locus, which on its own would be favoured in the type of setting we consider.} This model bridges a population genetic model (one-locus haploid model) with a quantitative genetic model recently studied in a heterogeneous environment (\cite{Dekens_2022}). To do so, it introduces a new reproduction operator, inspired by the infinitesimal model, that encodes the inheritance of a major-effect and a quantitative background. The analysis goes deeper than previous studies, by formally justifying that the polygenic component of the trait is normally distributed around the major-effect allelic effects in a regime of small variance\bl{ and hence justifying the Gaussian assumption made in \textcite{Lande_1983} and \textcite{Chevin_Hospital_2008}}. To show this, we find new convex analysis arguments that leads to a separation of time scales, which allows us to study the stability of the polymorphism at the major-effect locus. We show that this polymorphism, which is maintained at intermediate selection, is subsequently lost when selection increases beyond a certain threshold, a phenomenon qualitatively confirmed by individual-based simulations. The separation of time scales' point of view offers the interpretation that the infinitesimal background slowly disrupts the \bl{rapidly} established symmetrical polymorphism at the major-effect locus. Therefore, this phenomenon cannot be predicted by the one-locus haploid model (without the quantitative background). To our knowledge, this phenomenon has not yet been documented. 

\bl{\paragraph{The importance of the eco-evo framework and the influence of small segregational variance.}

In the last section, we provided an explanation for our main biological result, which is the unexpected loss of polymorphism at the major-effect locus with both strong and weak selection. With strong selection, the explanation relies on two factors. First, \ul{the mean quantitative background is constrained to move similarly in both patches and for bearers of $A$ and $a$ because of the small segregational variance}. This implies that any slight shift of the mean quantitative background necessarily increases local adaptation to one patch and decreases local adaptation to the other. Consequently, the latter creates \ul{an asymmetry in subpopulation sizes}, one being better adapted than the other. This asymmetry is significant when local selection is strong. The larger subpopulation then sends relatively more migrants to the other patch, undermining the local adaptation there even more, which contributes to raise the frequency of the allele favoured in the now larger patch everywhere. In turn, the combination of increasing specialization and increasing disparity in population sizes (and therefore migrant production results) in a vortex that can be identified as a \textit{migrational meltdown} (coined in \cite{Ronce_Kirkpatrick_2001}).

Therefore, one can observe that this phenomenon specifically \ul{relies first on our eco-evo framework}, which allows us to track the dynamics of subpopulation sizes. This highlights the importance of considering eco-evo dynamics when dealing with strong selection (which can heavily impact population sizes), as this loss of polymorphism at the major-effect locus with strong selection would not be captured by a more traditional approach in population genetics that considers the population sizes constant. Second, this phenomenon of loss of polymorphism at the major-effect locus also \ul{relies crucially on the small segregational variance of the quantitative background}, which is linked to the very small effect sizes of sufficiently many alleles. It is indeed worth noting that, if the segregational variance of the quantitative background is relatively large, then the first step of our explanation ("the mean quantitative background is constrained to move similarly in both patches and between bearers of $A$ and $a$") does not hold and the mean quantitative background can shift in opposite directions in the two patches, improving local adaptation in both patches. The impact of bimodality in the quantitative trait, with mean trait values in each patch near the local optimum, on the stability of polymorphism at the major-effects locus deserves further attention.
}

\bl{\paragraph{Robustness.} To assess the robustness of the swamping phenomenon that we identified, we performed various individual-based simulations. We found that these are in excellent quantitative agreement with our analysis. They also connect our framework to the evolution of an explicit genetic architecture, which provides a practical translation of the small variance regime \bl{that} underlies our study. This is important, because we have only shown that our results hold in this small variance regime. In particular, they might be different under parameter ranges \bl{that} violate this regime, for example under low or no migration (meaning, at a level of comparable order \bl{as} the small variance). Moreover, because the trajectories of the individual-based simulations are consistent with the deterministic trajectories produced by our model, we can validate essential \bl{assumptions} underlying our model, mostly the constancy of the small segregational variance for the quantitative background. The latter requires enough loci ($\boldsymbol{L}$) with relatively small effects ($\pm\frac{\boldsymbol{\sigma_{LE}}}{\sqrt{\boldsymbol{L}}}$), so that the segregational variance of the quantitative background (lower than $\boldsymbol{\sigma_{LE}^2}$) remains small while the phenotypic range produced by the polygenic background alone $\left[-\boldsymbol{\sigma_{LE}}\sqrt{\boldsymbol{L}}, \boldsymbol{\sigma_{LE}}\sqrt{\boldsymbol{L}}\right]$ spans well beyond the local optimal traits. The last condition is necessary to ensure genotypic redundancy (see also \cite{Yeaman_2022}), so that well adapted mates with similar phenotypes have on average relatively different genotypes, which in turn ensures that the variance of their offspring does not depend too much on their traits. In our simulations presented in \cref{sec:individual_based_simulations}, we showed that $\boldsymbol{L} = 200$ and $\boldsymbol{\sigma_{LE}}=0.1$ produced very similar trajectories to our deterministic model. Furthermore, in \cref{app:50}, we even lowered the number of \bl{loci} to $\boldsymbol{L} = 50$ and increased the segregational variance parameter $\boldsymbol{\sigma_{LE}}=0.2$ to assess the robustness of our conclusions with regard to less favorable parameters, with the same conclusions.}

\paragraph{Complete analytical outcomes.}

The analysis performed in \cref{sec:stability} is centered on the persistence of polymorphism at the major-effect locus. As stated in \cref{rem:fixation_alleles_system}, the loss of polymorphism by fixation would lead to the dynamics of the quantitative background alone, as covered in \textcite{Dekens_2022}. Hence, \cref{fig:outcomes} complements \cref{fig:stab_region} (for $\eta = 0.5$ and varying migration and selection)\bl{. First, it displays} the region of parameters where \bl{where polymorphism at the major-effect locus would be maintained, leading to a bimodal trait distribution in the metapopulation} (in yellow, corresponding to the analogous region in \cref{fig:eta05}). \bl{For the rest of the parameters, either the population goes extinct (purple region, when both migration and selection are too strong), or one major-effect allele fixes and the other one is lost. Therefore, in this case, the equilibria are given by the analysis done in \textcite{Dekens_2022} (as anticipated by \cref{rem:fixation_alleles_system}). These equilibria are monomorphic and are of two types. First, for bounded selection, there exists a critical threshold in the migration rate under which the polymorphism at the major-effect locus is lost due to the strong blending effect of migration. In that case of strong relative migration, the metapopulation occupies equally the two habitats as generalist and its trait distribution concentrates around the midpoint between the two habitats' optima. It is therefore constituted by individuals that do not suffer from a loss in fitness when migrating between the patches. This corresponds to the symmetrical monomorphic equilibrium (see \cite{Ronce_Kirkpatrick_2001,Dekens_2022}), where the population can be qualified as generalist (green region).  Second, for bounded migration rates, the polymorphism at the major-effect locus lost above a certain threshold in selection. In this case of relative strong selection (blue region), the metapopulation becomes specialized in one of the two patches, and deserts the other (at the exception of a few migrants who are too maladapted to establish there). This type asymmetrical equilibrium, highlighted as a source-sink scenario in \textcite{Ronce_Kirkpatrick_2001}, was analytically derived in \textcite{Dekens_2022}.}
\begin{figure}
    \centering
    \includegraphics[width=0.9\textwidth]{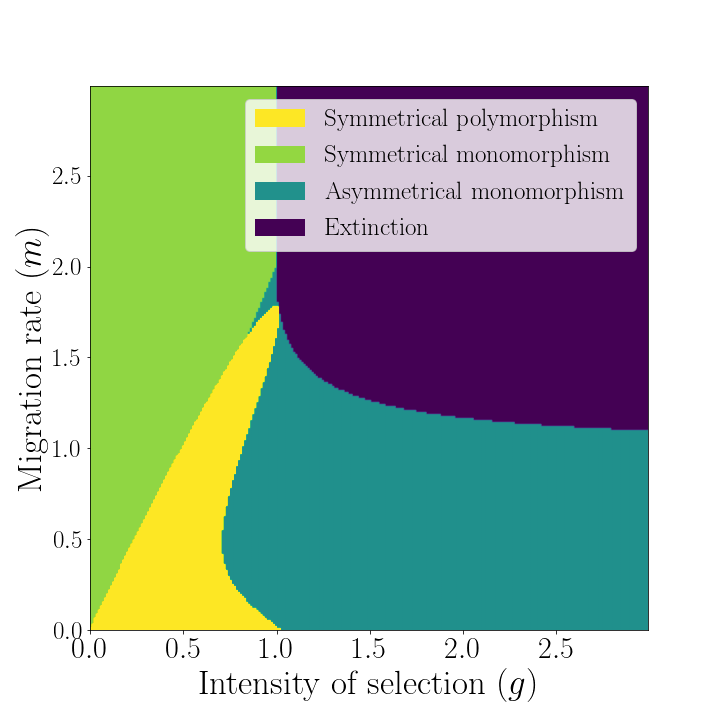}
    \caption{\textbf{Summary of the complete analytical outcomes of the model}, for $\eta = 0.5$ and varying migration ($y$-axis) and selection ($x$-axis). The figure combines the results obtained in \cref{sec:stability} on the stability of the symmetrical polymorphic equilibrium with the results of the model of \textcite{Dekens_2022} (equivalent to this model upon loss of polymorphism). For bounded selection, when migration increases, there is a threshold over which the polymorphism at the major-effect locus is lost due to the blending effect of migration (consistent with \cite{Yeaman_Whitlock_2011}). The population then becomes equally maladapted to both habitats (generalist - symmetrical monomorphism, in the green region). For this specific major-effect allelic effect $\eta = 0.5$, there exists additionally an interval of selection strength ($\approx [0.7, 1]$) for which the major polymorphism might not be stable at all migration rates below the critical threshold. This phenomenon does not seem to hold when the major effect is larger (see \cref{fig:eta07}). For bounded migration (below the threshold rate over which the strong migration blending hampers the major polymorphism), when selection strength increases, the polymorphism at the major-effect locus (yellow region) is lost, and the population becomes adapted to one of the two habitats (specialist - asymmetrical monomorphism, in the blue region).  As this figure is obtained in the small segregational variance regime (which should remain smaller than the other parameters of the system for the analysis to be valid), we warn that the outcomes displayed in the vicinity of the $x$-axis (very small migration rates) might not correspond to the limit when the migration rate is 0 (no migration).}
    \label{fig:outcomes}
\end{figure}

\bl{\paragraph{Filling a methodological gap.} In population genetics, one-locus or two-locus models in heterogeneous environments have been well studied (\cite{Nagylaki_Lou_2001,Burger_Akerman_2011}), with a nuanced picture when including the effect of drift (\cite{Yeaman_Otto_2011}). A two-deme two-locus model is analysed in \textcite{Geroldinger_Bürger_2014}, which in particular shows that a concentrated genetic architecture (a major-effect locus and a tightly linked minor one) maintains polymorphism (full or single-locus) even under high migration rates when selection acts in opposite directions in the two patches. Increasing the number of loci quickly leads to analytical complexity too great for general study. There also exist multi-loci models in heterogeneous environments (\cite{Lythgoe_1997,Szep_sachedva_barton_2021}), but they focus on equal
allelic effects. On the other end of the spectrum, quantitative genetic models do not typically account
for additional discrete major-effect allelic effects on the focal quantitative trait (for sexually reproducing
populations in heterogeneous environment, see Ronce and Kirkpatrick 2001; Hendry, Day, and Taylor
2001; Dekens 2022 and for asexually reproducing populations, see Débarre, Ronce, and Gandon 2013;
Mirrahimi 2017; Mirrahimi and Gandon 2020; Hamel, Lavigne, and Roques 2021).

To our knowledge, the first model that bridges this gap between quantitative traits and discrete loci appears in \textcite{Lande_1983}. In this work, the author considers the dynamics of a single major-effect locus where two alleles segregate along with a polygenic background, in a diploid panmictic population subjected to a sudden change of environment. He models the influence of the polygenic background on the trait by assuming that, among bearers of the same major-effect allele, the trait distribution is Gaussian, centered around the effect of the the major-effect allele on the trait. This study opened the way for more recent work on the genetic architecture of adaptation of panmictic populations in a suddenly changing environment, where the central question is whether this adaptation is due to major-effect allelic sweeps or to subtle shifts in the frequency of many small effect alleles. In \textcite{Chevin_Hospital_2008}, the authors extend the framework of \textcite{Lande_1983} to include less specific selection functions than exponential ones. Subsequent studies (\cite{Vladar_Barton_2014,Jain_Stephan_2017}) explicitly model the short-term dynamics of a polygenic trait \bl{at} mutation-selection balance, following a sudden change of environment. They show that there exists a sharp threshold in allelic sizes \bl{below} which polymorphism remains and \bl{above} which fixation occurs. Lately, in a similar context, \textcite{Höllinger_Pennings_Hermisson_2019} propose an extension to take genetic drift into account on the dynamics of adaptation with a polygenic binary trait under mutation-selection balance. However, all those works from \textcite{Lande_1983} to \textcite{Höllinger_Pennings_Hermisson_2019} study panmicitic populations, without spatial structure, even though spatial heterogeneities are known to generate gene flow, which indirectly shapes genetic architecture through local adaptation (see \textcite{Yeaman_Whitlock_2011}, or below for more details). Moreover, they focus solely on the dynamics of the allelic frequencies without considering their coupling with population size dynamics, assuming it to be constant.

In this paper, we presented a composite framework between population and quantitative genetics aiming at going beyond these methodological limitations. Our model and methodology allows us to study analytically the eco-evo dynamics of a sexually reproducing population characterized by a composite trait resulting from the interplay between a few major-effect loci and a quantitative polygenic background, in spatially heterogeneous environments (migration-selection balance). We want to emphasize that, by "eco-evo dynamics", we mean that we study both the ecological and evolutionary dynamics of the local trait distributions and therefore do not assume that the sizes of the  populations remain constant; rather, they are variables of the system. \bl{This modelling choice is crucial, because the migrational meltdown phenomenon provoking the loss of polymorphism at the major-effect locus with strong selection relies on the building of asymmetrical subpopulation sizes.}}

\paragraph{The role of the Gaussian assumption of quantitative trait values.}

In our work, we \bl{justify} the Gaussian assumption made by \textcite{Lande_1983} and \textcite{Chevin_Hospital_2008} to model the background polygenic effect on the trait via a framework that does not make a priori assumptions on the within-population distribution. Instead, our model relies on an extension of the standard infinitesimal model (\cite{Fisher_1919}) that encodes both the inheritance of the quantitative background and the major-effect alleles. Analytical progress is possible in a regime of small segregational variance for the quantitative component of the trait, despite not specifying the shape of the trait distribution. It relies on the fact that the variance introduced at each event of reproduction by the quantitative background is small compared to the discrete allelic effects at the major-effect locus. This allows us to use a methodology developed by \textcite{Diekmann_Jabin_Mischler_Perthame_2005}, meant to study trait distributions concentrated as Dirac masses, \bl{to justify that assuming Gaussian distributions of quantitative trait values is valid (\cref{sec:model}). Moreover, this Gaussian approximation appears here sufficient to capture the phenomenon of migration meltdown with strong selection that we identify through the rest of our analysis, as higher order moments do not seem to have a significant influence on it}.

\paragraph{Extensions to more complex population genetic models.}
The model and the line of methods that we use in this paper are quite robust. We thus provide a comprehensive toolbox at the end of this manuscript (\cref{toolbox}), to describe how to apply the method more broadly. In particular, the toolbox is meant to indicate how to extend the method to more complex population genetic models by adding a quantitative background. It relies on \ref{prop:constraints_extension}\bl{, which justifies that carrying the analysis under the Gaussian assumption of quantitative trait distribution is valid (in the regime of small variance indicated)}. In \cref{toolbox}, we detail the hypotheses that the population genetic models must satisfy in order to use it (see \cref{app:generalization} for details and examples).

\paragraph{Further prospects.}
The loss of the polymorphism at the major-effect locus \bl{with} strong selection in a symmetrical heterogeneous environment, where one might think that it is most favoured, illustrates the value of our method. However, two natural questions stem from our work:
\begin{enumerate}
    \item Would the stability region of polymorphism at the major-effect locus shrink as much in the presence of a quantitative background when considering asymmetrical levels of selection/migration between the two patches? Our analysis suggest that it should, and this can be investigated through an extension of the last step of our analysis.
    \item Would this phenomenon hold if mutations can accumulate at the major-effect locus? \Cref{fig:stab_region} for example suggests that polymorphism at the major-effect locus would persist over a wider range of parameters if the alleles at the major-effect locus evolve to match the difference in optima. This possibility was indicated by the numerical findings of \textcite{Yeaman_Whitlock_2011}, who found the emergence of tightly linked clusters of major-effect loci underlying local adaptation for intermediate migration rates.
\end{enumerate}

\appendix

\section{Toolbox: How to study the interplay between a quantitative background and a finite number of major-effect loci.}
\label{sec:toolbox}
 The aim is to study the interplay between a quantitative background and a finite number of major-effect loci.
 
 We start with a population genetic model. {\color{Black}Let us consider $K$ different genotypes $\mathcal{A}^{(k)}$ which have genotypic effects on the phenotype $a^{(k)}$ (we use the index $k$ to indicate genotypes)}. For our method to be applied, the genotypes should verify two hypotheses H1 and H2 described in Appendix \ref{app:generalization}. The metapopulation lives in a heterogeneous environement of $I$ patches (we use the index $i$ to indicate location). We denote the population of patch $i$ carrying genotype $k$ by $N^{(k)}_i$. Let us denote the system of equations that describes the dynamics of the genotypic local population sizes: $\frac{d\bar{N}}{dT} = \tilde{G}_{\bar{a}}\left(\bar{N}(T)\right)$ and of a viable stable equilibrium $\bar{N}^*$. \textit{We recall that $\bar{N}^*$ is an equilibrium of the system if  ${\color{Black}\tilde{G}}_{\bar{a}}(\bar{N}^*)$. This equilibrium is viable if all the population sizes are non-negative, and at least one is positive. Its local stability is determined by standard linear analysis (sign of the real parts of the eigenvalues of the system's Jacobian).}

 Let us modify the previous population genetic framework to include the effect of a quantitative background on the trait, generically denoted $z$. While previously, all individuals carrying the same genotype $\mathcal{A}^{(k)}$ shared the same phenotype, now their phenotypes can differ due to the quantitative background they present. Consequently, among individuals of the same patch $k$ carrying the same major genotype $\mathcal{A}^{(k)}$, we distinguish those sharing the same quantitative background $z$, and denote their number $n_i^{(k)}(z)$:
 \begin{align*}
     &\mathcal{A}^{(k)} \quad\leadsto\quad(\mathcal{A}^{(k)}, z)\\
    &a^{(k)} \quad\leadsto \quad a^{(k)} + z\\
    &N^{(k)}_i \quad \leadsto \quad n^{(k)}_i(z).
 \end{align*}
 The PDE system that we obtain on the trait distributions $n_i^{(k)}$ is not easily analysed. That is why we provide a five steps plan in order to guide the analysis when the diversity introduced by the segregation of the quantitative component of the trait is small compared to the variance generated by the major-effect loci (H6 - regime of small variance):
 
 \begin{enumerate}
     \item {\color{Black}First}, we operate a scaling of time according to the regime of small variance. It anticipates on the separation of time scales such that the major-effect allelic frequencies change rapidly, followed by the slow changes of the quantitative components (see step 3).
     \item {\color{Black}In} this regime of small variance, we can justify the Gaussian approximation of the local genotypic distributions $n_i^{(k)}$ centered at the same mean and the same variance $\varepsilon^2$, thanks to \ref{prop:constraints_extension}, as soon as the assumptions (H1) and (H2) are satisfied (see Appendix \ref{app:generalization}) and every genotypic population randomly mates with themselves and every other genotypic population (H3) (the latter excludes for example models that differentiate sexes). This guides the intuition toward which change of variables to perform in order to get a system separating time scales explicitly (see Step 3). We emphasize that \ref{prop:constraints_extension} is crucial to be able to apply this method.
     
     \item  From the PDE system on the distributions $n_i^{(k)}$,  we can deduce the ODE system of their moments. Since we have justified the Gaussian approximation for all local genotypic distributions $n_i^{(k)}$, the new system is closed in the regime of small variance $\varepsilon^2\ll 1$, and only involves the dynamics of the genotypic local sizes of populations $N_i^{(k)}$ and the genotypic local mean quantitative components $z_i^{(k)}$.
     \item This step aims at obtaining a system that explicitly separates time scales, in order to ultimately reduce the complexity of the analysis. It requires a technical change of variables, which is guided by the formal analysis of the step 1 (mean quantitative components roughly the same within patches), and the intuition that migration has a strong blending effect between patches in the small variance regime (which would result in the mean quantitative components roughly being equal between patches). These considerations bring the following new variables replacing the genotypic local mean quantitative component $z_i^{(k)}$:
     \begin{itemize}
         \item[$\diamond$] for each genotype $1\leq k \leq K$, $\delta_{i,\varepsilon}^{(k)}$ is the difference in the mean quantitative component of the genotypic population $k$ between the patch $i+1$ and patch $i$ ($1\leq i \leq I-1$). Dividing by $\varepsilon^2$ comes from the intuition given above.
          \item[$\diamond$] for each genotype $1\leq k \leq K-1$, $\delta_\varepsilon^{(k)}$ is the difference between the mean quantitative component averaged across patches of genotypic population $k+1$ and $k$. Dividing by $\varepsilon^2$ comes from the intuition given in Step 1.
          \item[$\diamond$] $Z_\varepsilon$ is the overall mean quantitative component across patches and major genotypes. It is the slow evolving variable.
     \end{itemize}
     Rewriting the dynamics of the genotypic local population sizes $\bar{N}_\varepsilon$ along these new variables $\bar{\delta}_\varepsilon$ and ${\color{Black}Z_\varepsilon}$ delivers a system in which all the differential equations are multiplied by $\varepsilon^2$ (fast dynamics of $\bar{N}_\varepsilon$ and $\bar{\delta}_\varepsilon$) except the one governing the dynamics of $Z_\varepsilon$ (slow dynamics). 
     
     To finally complete the separation of time scales and obtain the limit system by letting $\varepsilon^2$ vanish, it is sufficient to show that at each value $Z$ of the slow variable, the fast time-scale equilibria $\left(\bar{N},\bar{\delta}\right)$ are stable, for example by using the Routh Hurwitz criterion for linear analysis on the Jacobian $\text{Jac}_{\mathcal{G}_{\bar{a}}}\left(\bar{N},\bar{\delta}\right)$.
     \item The last step to determine the stability of the global equilibria of the full system of the genotypic population with the influence of the quantitative background $\left(\bar{N}^*,\bar{\delta}^*,Z^*\right)$, consists in applying the formula given in the last box (see the next page).
 \end{enumerate}
 
 \newpage
 \label{toolbox}
 \newgeometry{textheight = 32cm, textwidth = 19cm, left = 2cm, tmargin = -1.2cm, inner = 1cm}

\begin{figure}
  \centering
  \begin{tikzpicture}
  \node[draw, rounded corners, align = center, text width = \textwidth] (title) {
  \textbf{\Large{Toolbox: How to study the interplay between a quantitative background\\ and a finite number of major-effect loci dynamics}.}
  };
  \node[below=1.7cm of title] (auxnode1) {};
  \node[draw, rounded corners,left = 3cm of auxnode1, align=left, text width = 5cm] (scenery){
  \textbf{The stadium}: \\$I$ patches $P_i$ ($1\leq i \leq I$)
  };
  \node[draw, rounded corners, right = 1cm of scenery, align=left] (teams){
  \textbf{The teams}:  \\$K$ different genotypes $\mathcal{A}^{(k)}$ $(1\leq k \leq K)$
  \\ Vector of genotypic effects on phenotype: $\bar{a}=(a^{(1)},...,a^{(k)})$
  \\ Matrix of local genotypic population sizes: $\bar{N}=\left(N_i^{(k)}\right)$
  };
  \node[draw, rounded corners, below = 1.5cm of scenery, align= left, text width = 5cm] (popgenmodel) {\textbf{Pop. gen. model}:\\
  $\frac{d\bar{N}}{dT} = \tilde{G}_{\bar{a}}\left(\bar{N}(T)\right)$
  };
  \node[draw, rounded corners, right = 1cm of popgenmodel, text width = 9cm, align= left] (popgenresult) {\textbf{Pop gen. analysis}:
  \begin{enumerate}
      \item[(i)] Viable equilibria: $G_{\bar{a}}\left(\bar{N}^*\right) = 0$ and $\bar{N}^*>\bar{0}$.
      \item[(ii)] Stability: eigenvalues of $\text{Jac}_{G_{\bar{a}}}\left(\bar{N}^*\right)$ in open left plane.
  \end{enumerate}};
  \node[draw, rounded corners, thick, inner xsep=1em, inner ysep=1em, fit=(scenery) (teams) (popgenmodel) (popgenresult)] (popgenanalysis) {};
  \node[fill=white] at (popgenanalysis.north) {\textbf{Population genetic model}};
  
  \node[draw, rounded corners, below = 2.62cm of popgenmodel, align=left, text width = 5.5 cm] (players) {
  \textbf{The new players}: \begin{enumerate}
      \item[(i)]Quantitative background $z$
      \item[(ii)]Individuals carrying genotype $i$ and a quantitative background $z$ have a phenotype $z+a^{(k)}$.
      \item[(iii)]Distribution in patch $k$ : $n_i^{(k)}(z)$
  \end{enumerate}};
  \node[right=0.8cm of players, rounded corners, draw, align = left, text width = 10cm] (hypothesis) {\textbf{Work hypotheses}:
  \begin{enumerate}
        \item[H1] - H2 (reflexivity and irreducible graph - see App.\ref{app:generalization})
      \item[H3] every genotypic population reproduces randomly with themselves and every other in the same patch
      \item[H4] inheritance of the quantitative background in accordance with the infinitesimal model with segregational variance $\sigma^2$.
      \item[H5] the quantitative background is unlinked to $\mathcal{A}^{(k)}$
      \item[H6] $\sigma^2 \ll \min\left|a^{(k)}\right|^2$: small variance regime.
  \end{enumerate}};
  
  \node[draw, rounded corners, thick, inner xsep=1em, inner ysep=1em, fit=(players) (hypothesis)] (hybridmodel) {};
  \node[fill=white] at (hybridmodel.north) {\textbf{\Large{Composite model combining population and quantitative genetics}}};
  \node[below=1.8cm of players, rounded corners, draw, align = left, text width = 6.5cm] (step0) {\textbf{0) Scaling of time} $t := \varepsilon^2\,T$\\ ($\varepsilon^2:=\frac{\sigma^2}{\min\left|a^{(k)}\right|^2}\ll 1\leadsto$ few diversity via inf. model of reproduction)};
   \node[below=0.3cm of step0, rounded corners, draw, align = left, text width = 6.5cm] (step1) {\textbf{1) Formal analysis} (justify Gaussian distributions - \ref{prop:constraints_extension}):
   \begin{enumerate}
       \item[(i)] $n_{i,\varepsilon}^{(k)}(z) \approx N_{i,\varepsilon}^{(k)}\times \text{Gauss}\left(z_{i,\varepsilon}^{(k)}, \varepsilon^2\right)$
       \item[(ii)] $z_{i,\varepsilon}^{(k)}\approx z_{i,\varepsilon}^{(l)}$
          \end{enumerate}};
   \node[below=0.3cm of step1, rounded corners, draw, align = left, text width = 6.5cm] (step2) {\textbf{2) ODE system of moments} ($\bar{z}_\varepsilon := (z^{(k)}_{i,\varepsilon})$):
   \begin{equation*}
       \begin{aligned}
          \begin{cases}
                 \varepsilon^2 \frac{d\bar{N}_\varepsilon}{dt} = G_{\bar{a}}\left(\bar{N}_\varepsilon(t),\bar{z}_\varepsilon(t)\right),\\
                 \varepsilon^2 \frac{d\bar{z}_\varepsilon}{dt} = F_{\bar{a}}\left(\bar{N}_\varepsilon(t),\bar{z}_\varepsilon(t)\right).
          \end{cases}
       \end{aligned}
   \end{equation*}};
   \node[right=0.6cm of step1, rounded corners, draw, align = left, text width = 10.2cm] (step3){\textbf{3) Slow-fast analysis}:\\
   \begin{enumerate}
       \item[(i)] Change in variables: $\delta_{i,\varepsilon}^{(k)} = \frac{z_{i+1,\varepsilon}^{(k)}-z_{i,\varepsilon}^{(k)}}{2\varepsilon^2}$ $\;[K(I-1)]\,$; $\delta^{(k)}_{\varepsilon} = \frac{\sum_i z_{i,\varepsilon}^{(k+1)} -z_{i,\varepsilon}^{(k)} }{2I\varepsilon^2}$ $\;[(K-1)]\,$; $\quad Z_\varepsilon =\frac{\sum_{k,i} z_{i,\varepsilon}^{(k)}}{K\times I} $
       \item[(ii)] Slow-fast system:
       \begin{equation*}
       \begin{aligned}
          \begin{cases}
                 \varepsilon^2 \frac{d\left[\bar{N}_\varepsilon,\bar{\delta}_\varepsilon\right]}{dt} = \mathcal{G}_{\bar{a}}\left(\bar{N}_\varepsilon(t),\bar{\delta}_\varepsilon(t),Z_\varepsilon\right),\\
                 \frac{dZ_\varepsilon}{dt} = \mathcal{F}_{\bar{a}}\left(Z_\varepsilon,\bar{N}_\varepsilon,\bar{\delta}_\varepsilon\right).
          \end{cases}
       \end{aligned}
   \end{equation*}
   \item[(iii)] Separation of time scales (via stability of zeros of $\mathcal{G}_{\bar{a}}$ by Routh-Hurwitz criterion on $\text{Jac}_{\mathcal{G}_{\bar{a}}}(\bar{N},\bar{\delta})$)
   \begin{equation*}
       \begin{aligned}
          \begin{cases}
                 0 = \mathcal{G}_{\bar{a}}\left(\bar{N},\bar{\delta},Z\right),\\
                 \frac{dZ}{dt} = \mathcal{F}_{\bar{a}}\left(Z,\bar{N},\bar{\delta}\right).
          \end{cases}
       \end{aligned}
   \end{equation*}
   \end{enumerate}};
    \node[below=0.3cm of step3, rounded corners, draw, align = left, text width = 10.6cm] (step4){\textbf{4) Analysis of the limit system}:
    \begin{enumerate}
        \item[(i)] Viable equilibria: $\mathcal{G}_{\bar{a}}\left(\bar{N}^*,\bar{\delta}^*,Z^*\right)=\mathcal{F}_{\bar{a}}\left(Z^*,\bar{N}^*,\bar{\delta}^*\right) = 0$, $\bar{N}^*>0$
        \item[(ii)] Stability:$\left.\nabla_{\bar{N},\bar{\delta}}\mathcal{F}_{\bar{a}}\cdot\left(\left[\text{Jac}_{\mathcal{G}_{\bar{a}}}\left(\bar{N},\bar{\delta}\right)\right]^{-1}\partial_Z\mathcal{G}_{\bar{a}}\right)\right|_{Z^*,\bar{N}^*,\bar{\delta^*}}>0$.
    \end{enumerate}};
    \node[draw, rounded corners, thick, inner xsep=1em, inner ysep=1em, fit=(step0) (step1) (step2) (step3) (step4)] (hybridanalysis) {};
     \node[fill=white] at (hybridanalysis.north) {\textbf{\Large{Steps to apply the analysis}}};

  \end{tikzpicture}
\end{figure}
\newpage
\restoregeometry
\section{Generalization of \ref{prop:constraints} for more complex genotypes.}
\label{app:generalization}
To state a generalization of \ref{prop:constraints}, we first need to specify the targeted scope of population genetic models. Let us consider $K$ different genotypes $\mathcal{A}^{(k)}$ that satisfies the following hypotheses relating to how they interact with each other regarding the genotypes of their offspring:
\begin{enumerate}
    \item[\textbf{H1}]{\textbf{Reflexivity}: For all $k \in (1,K)$, the offspring of two parents with the same genotype $\mathcal{A}^{(k)}$ has a positive probability to be of genotype $\mathcal{A}^{(k)}$}.
\end{enumerate}
\textbf{H1} is a natural hypothesis when considering either haploid or diploid populations, even with non-Mendelian processes (genetic linkage/recombination, gene drives), provided that they are not too extreme (lowering the probability of inheriting a certain genotype is fine as long as it does not cancel it). The second hypothesis is more conveniently apprehendable by considering the graph $\mathcal{G}$ whose nodes are the genotypes $\mathcal{A}^{(k)}$. A vertex links two nodes $\mathcal{A}^{(k)}$ and $\mathcal{A}^{(l)}$ if and only if there exists a positive probability that their offspring has genotype $\mathcal{A}^{(k)}$ or $\mathcal{A}^{(l)}$.
\begin{enumerate}
    \item[\textbf{H2}] \textbf{Irreducible graph}: For all $(k,l) \in (1,K)^2$, there exists a path of vertices of $\mathcal{G}$ connecting $\mathcal{A}^{(k)}$ and $\mathcal{A}^{(l)}$.
\end{enumerate}
This last hypothesis is satisfied by any haploid models, regardless of how many loci are considered, because an offspring can inherit all their alleles from only one parent. Consequently, in that case, every node of the graph is connected to every other. In diploid models, where an offspring can have a different genotype from both its parents, which vertices of the graph $\mathcal{G}$ exist is not clear. However, for example, we can show that the graph corresponding to a diploid model, with $L$ loci and two alleles at each loci, is connected according to \textbf{H2}. Indeed, each genotype is directly connected to any other that differs from it from just one allele at one locus. Nevertheless, the interest of \textbf{H2} is that it is very easy to verify whether it is satisfied given any particular model.

To state the proposition that generalizes \eqref{prop:constraints}, we first need to define the index set of couples that can yield an offspring with a particular genotype. For $k\leq K$, we denote it by $C^{(k)}$, where $(l,k) \in C^{(k)}$ if and only if parents with genotypes $\mathcal{A}^{(l)}$ and $\mathcal{A}^{(k)}$ can produce an offspring with genotype $\mathcal{A}^{(k)}$. The following proposition characterizes the genotypic functions $u^{\mathcal{A}^{(k)}}$ that respect the following constraints analogous to \ref{constraint_u0}
\begin{equation}
    \tag{C'}
    \forall k \leq K,\quad \forall z \in \R, \quad 
    \underset{(l,k) \in C^{(k)}}{\max}\left[\underset{z_1,z_2}{\sup}\;u^{\mathcal{A}^{(k)}}(z)- \left(z-\frac{z_1+z_2}{2}\right)^2-u^{\mathcal{A}^{(l)}}(z_1)-u^{\mathcal{A}^{(k)}}(z_2)\right] = 0.
    \label{eq:constraints_extension}
\end{equation}
\begin{prop}
\label{prop:constraints_extension}
Suppose that \textbf{H1} and \textbf{H2} are satisfied. For $k \leq K$, we consider $u^{\mathcal{A}^{(k)}}$ a real valued non-negative function whose zero set is non-empty and of measure 0 (for example, is finite). If $\{u^{\mathcal{A}^{(k)}}, k \leq K\}$ respects \eqref{eq:constraints_extension}, then there exists $z^*\in \R$ such that for all $k \leq K$:
\[\forall z \in \R, \quad u^{\mathcal{A}^{(k)}}(z) = \frac{(z-z^*)^2}{2}.\]
\end{prop}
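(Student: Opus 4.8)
The plan is to follow the architecture of the proof of \ref{prop:constraints} (the case $K=2$) and upgrade every step to an arbitrary number of genotypes coupled through the sets $C^{(k)}$. Throughout I write $u^{(k)}:=u^{\mathcal{A}^{(k)}}$ and $\mathcal{N}(k):=\{\,l:(l,k)\in C^{(k)}\,\}$, so that $k\in\mathcal{N}(k)$ by reflexivity \textbf{H1}. The first move is to read \eqref{eq:constraints_extension} as the inf-convolution fixed point
\[
u^{(k)}(z)=\min_{l\in\mathcal{N}(k)}\ \inf_{z_1,z_2}\Big[\big(z-\tfrac{z_1+z_2}{2}\big)^2+u^{(l)}(z_1)+u^{(k)}(z_2)\Big].
\]
Evaluating the reflexive term $l=k$ at $z_1=z_2=z^{*}_k$, where $z^{*}_k$ is any zero of $u^{(k)}$, gives the quadratic upper bound $u^{(k)}(z)\le (z-z^{*}_k)^2$; this provides the coercivity that makes every infimum attained, and the inf-convolution structure makes each $u^{(k)}$ semiconcave, hence continuous with one-sided derivatives everywhere.

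I would then show that all the $u^{(k)}$ vanish at one common point. At a zero $z^{*}$ of $u^{(k)}$ every term of the attained inf-convolution is nonnegative and sums to zero, so $(z^{*}-\tfrac{z_1+z_2}{2})^2=u^{(l)}(z_1)=u^{(k)}(z_2)=0$ for some $l\in\mathcal{N}(k)$; that is, $z^{*}$ is the midpoint of a zero of a neighbour $u^{(l)}$ and a zero of $u^{(k)}$. Feeding the largest and the smallest zeros across all genotypes into this midpoint relation forces neighbouring genotypes to share those extremal zeros, and propagating along the reproduction graph through irreducibility \textbf{H2} collapses all the zero sets to a single common point $z^{*}$. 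Centring the Legendre transform there, $\hat u^{(k)}(y):=\sup_z[(z-z^{*})y-u^{(k)}(z)]$, and optimizing first over $z$ (which yields the term $\tfrac{y^2}{4}$) and then over $z_1,z_2$ (which halves the arguments) converts the fixed point into the self-referential max-system
\[
\hat u^{(k)}(y)=\frac{y^2}{4}+\hat u^{(k)}\!\Big(\tfrac y2\Big)+\max_{l\in\mathcal{N}(k)}\hat u^{(l)}\!\Big(\tfrac y2\Big),\qquad \hat u^{(k)}(0)=0,
\]
the $K$-genotype analogue of the conjugate identities used for $K=2$.

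The heart of the proof — and the part with no counterpart in the $K=2$ argument — is to deduce from this coupled system that all the $\hat u^{(k)}$ are equal. My plan is to track $M(y):=\max_k\hat u^{(k)}(y)$. Since $k\in\mathcal{N}(k)$, the inner maximum is at most $M$ with equality exactly at the maximizing genotypes, and a short comparison shows that the set of maximizers is invariant under $y\mapsto y/2$ and that $M$ obeys the scalar recursion $M(y)=\tfrac{y^2}{4}+2M(y/2)$. Introducing $e_k:=M-\hat u^{(k)}\ge 0$ turns the system into $e_k(y)=e_k(y/2)+\min_{l\in\mathcal{N}(k)}e_l(y/2)$; consequently, if some neighbour $l\in\mathcal{N}(k)$ satisfies $e_l\equiv 0$, then $e_k(y)=e_k(y/2)=\cdots=e_k(0)=0$ by continuity. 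Starting from the nonempty maximizer set, where $e$ vanishes, and propagating this implication along the connected graph via \textbf{H1}--\textbf{H2} (the neighbour relation is symmetric for haploid architectures, which makes each propagation step immediate) forces $e_k\equiv 0$, i.e. $\hat u^{(k)}\equiv M$ for all $k$.

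It remains to identify $M$ and invert the transform. A convex solution of $M(y)=\tfrac{y^2}{4}+2M(y/2)$ with $M(0)=0$ and one-sided slopes $\alpha\le 0\le\beta$ at the origin must be $M(y)=\tfrac{y^2}{2}+\beta y$ for $y>0$ and $M(y)=\tfrac{y^2}{2}+\alpha y$ for $y<0$. Since $\hat u^{(k)}\equiv M$, taking biconjugates makes each $u^{(k)}$ equal to the convex biconjugate of $M$, namely a parabola flattened on $[z^{*}+\alpha,z^{*}+\beta]$; the assumption that each zero set has measure zero rules out the flat part, forcing $\alpha=\beta=0$ and hence $u^{(k)}(z)=\tfrac{(z-z^{*})^2}{2}$ for every $k$. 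I expect the obstacle to be the third step: converting the invariance of the maximizer set and the irreducibility \textbf{H2} into global equality of all the $\hat u^{(k)}$, where the coupling through the reproduction graph must be exploited carefully and the directionality of the neighbour relation handled (it degenerates to symmetry, and hence triviality, precisely in the haploid case that is our main interest). The regularity, common-zero and biconjugation steps are otherwise faithful generalizations of the $K=2$ template.
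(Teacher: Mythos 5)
Your skeleton (regularity $\to$ common zero $\to$ Legendre conjugates $\to$ functional equation $\to$ biconjugation) matches the paper's, but two of your key mechanisms do not work as stated. The first genuine gap is the common-zero step. Writing $u^{(k)}$ for $u^{\mathcal{A}^{(k)}}$ and $Z_k$ for its zero set, your mechanism — at a zero $z^*$ of $u^{(k)}$, attainment of the inf-convolution produces $l\in\mathcal{N}(k)$, $z_1\in Z_l$, $z_2\in Z_k$ with $z^*=\tfrac{z_1+z_2}{2}$ — is vacuous: since \textbf{H1} puts $k\in\mathcal{N}(k)$, the triple $(l,z_1,z_2)=(k,z^*,z^*)$ always witnesses it, so the statement carries no information coupling distinct genotypes and cannot "force neighbouring genotypes to share extremal zeros" (it is satisfied, for instance, by any family with pairwise disjoint zero sets). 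What actually forces sharing is the \emph{inequality} direction of \eqref{eq:constraints_extension} applied to a cross pair: if $(l,k)\in C^{(k)}$, then $u^{(k)}\big(\tfrac{z_l^*+z_k^*}{2}\big)\le u^{(l)}(z_l^*)+u^{(k)}(z_k^*)=0$, i.e.\ the midpoint of the two zeros is itself a zero of $u^{(k)}$; combined with the fact that each $u^{(k)}$ has a \emph{unique} zero (midpoint-closure of $Z_k$ under \textbf{H1}, plus continuity and the measure-zero hypothesis — a step you never establish, although you need it again at the very end), this gives $z_l^*=z_k^*$, and \textbf{H2} propagates the identity along paths. This is exactly how the paper argues, and your proposal cannot avoid it.

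The second gap is in your "heart" step and the final inversion. Your deficiency relation $e_k(y)=e_k(y/2)+\min_{l\in\mathcal{N}(k)}e_l(y/2)$ is correct, but the propagation has two holes: (i) the seed — you need one genotype with $e_{k_0}\equiv 0$ \emph{identically}, whereas the maximizer set you start from may vary with $y$ (this one is repairable, working pointwise along dyadic scales via your invariance observation); and (ii) directionality — vanishing of $e_l$ propagates only to those $k$ with $l\in\mathcal{N}(k)$, i.e.\ $(l,k)\in C^{(k)}$, while \textbf{H2} guarantees only \emph{undirected} connectivity, so in the general (e.g.\ diploid) setting that the proposition covers your induction can stall; you flag this obstacle yourself but leave it unresolved. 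The paper sidesteps the graph entirely at this stage: using only \textbf{H1} it shows $\max_k\hat u^{(k)}(y)=\tfrac{y^2}{2}$ (scalar recursion; then the lower convex envelope of $\min_k u^{(k)}$ touches $\min_k u^{(k)}$ at the endpoints $z^*+\alpha$, $z^*+\beta$ of its flat piece, and uniqueness of zeros forces $\alpha=\beta=0$), and separately $\min_k\hat u^{(k)}(y)\ge\tfrac{y^2}{4}+2\min_k\hat u^{(k)}(y/2)$, which iterated squeezes $\min_k\hat u^{(k)}\ge\tfrac{y^2}{2}=\max_k\hat u^{(k)}$; \textbf{H2} is needed only for the common zero. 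Finally, your inversion asserts that $u^{(k)}$ \emph{equals} the biconjugate of $M$: this is unjustified, since $u^{(k)}$ is not known to be convex and biconjugation only gives $u^{(k)}\ge\hat{\hat{u}}^{(k)}$, and then "measure zero rules out the flat part" becomes circular — the flat part must be eliminated \emph{before} you can identify $u^{(k)}$ with its (then strictly convex) envelope, which is precisely what the touching-at-endpoints plus unique-zero argument accomplishes.
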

\begin{proof}

~\paragraph{0) $u^{\mathcal{A}^{(k)}}$ is continuous and has right and left derivatives everywhere.}
For $k \leq K$ and $z \in \R$, we have:
\begin{equation}
        u^{\mathcal{A}^{(k)}}(z) - z^2= \underset{(l,k) \in C^{(k)}}{\min}\underset{z_1,z_2}{\inf}\left[\;-z(z_1+z_2)+\left(\frac{z_1+z_2}{2}\right)^2+u^{\mathcal{A}^{(l)}}(z_1)+u^{\mathcal{A}^{(k)}}(z_2)\right].
\label{eq:eq1_convexity}
\end{equation}

Therefore, $ u^{\mathcal{A}^{(k)}}(z) - z^2$ is concave as infimum of affine functions, and thus continuous and has right and left derivatives. 
~\paragraph{1) $u^{\mathcal{A}^{(k)}}$ cancels only once.}

Let us fix $k \leq K$. Let us suppose that $u^{\mathcal{A}^{(k)}}$ has two zeros $z_1^*\neq z_2^*$. \textbf{H1} implies that $(k,k) \in C^{(k)}$. Then, we deduce from \eqref{eq:constraints_extension} that:
\begin{equation*}
u^{\mathcal{A}^{(k)}}(z)\leq\underset{z_1,z_2}{\inf} \left(z-\frac{z_1+z_2}{2}\right)^2+u^{\mathcal{A}^{(k)}}(z_1)+u^{\mathcal{A}^{(k)}}(z_2).
\end{equation*}
In particular, for $z = \frac{z_1^*+z_2^*}{2}, z_1 = z_1^*, z_2=z_2^*$, we obtain
\begin{equation}
    u^{\mathcal{A}^{(k)}}(\frac{z_1^*+z_2^*}{2})\leq 0.
\end{equation}
As $u^{\mathcal{A}^{(k)}}$ is non-negative, the midpoint between two zeros of $u^{\mathcal{A}^{(k)}}$ is also a zero of $u^{\mathcal{A}^{(k)}}$. $u^{\mathcal{A}^{(k)}}$ is also continuous from the previous point, therefore, we deduce that $u^{\mathcal{A}^{(k)}}$ cancels on $[z_1^*,z_2^*]$. The latter violates the assumption that $u^{\mathcal{A}^{(k)}}$ has a zero set of measure 0. Because it is also not empty, we get that $u^{\mathcal{A}^{(k)}}$ cancels exactly once, in a point that we denote $z_k^*$.
~\paragraph{2) The zero of $u^{\mathcal{A}^{(k)}}$ coincides with the zero of $u^{\mathcal{A}^{(l)}}$: $z^*_k=z^*_l$.}
First, let us consider the case where $(k,l) \in (1,K)^2$ is such that $\mathcal{A}^{(k)}$ and $u^{\mathcal{A}^{(l)}}$ are linked by a vertex in the graph $\mathcal{G}$. Then, we deduce that $(k,l) \in C^{(k)}$ or $(k,l) \in C^{(l)}$. We can assume the first without loss of generality. Similarly as the first part of the proof, we deduce that
\begin{equation*}
u^{\mathcal{A}^{(k)}}(z)\leq\underset{z_1,z_2}{\inf} \left(z-\frac{z_1+z_2}{2}\right)^2+u^{\mathcal{A}^{(k)}}(z_1)+u^{\mathcal{A}^{(l)}}(z_2).
\end{equation*}
Consequently, the midpoint between $z_k^*$ and $z_l^*$ is a zero of $u^{\mathcal{A}^{(k)}}$, which is necessarily $z_k^*$, which implies that $z_k^*= z_l^*$.

Let us now show the same for every couple $(k,l)$ not necessarily linked by a vertex in $\mathcal{G}$. \textbf{H2} implies that there exists a path of vertices between $u^{\mathcal{A}^{(k)}}$ and $u^{\mathcal{A}^{(l)}}$. As we showed that for every pair of nodes connected by a vertex, the zeros of their function is the same point, that property also holds for the extremities of the path of vertices, hence $z_k^*= z_l^*$. We denote $z^*$ the common zero.

\paragraph{3) Convex Legendre conjugates $\hat{u^{\mathcal{A}^{(k)}}}(y) = \underset{z}{\sup}(z-z^*)y-u^{\mathcal{A}^{(k)}}(z)$.}
Let us show that \eqref{eq:constraints_extension} implies that the convex Legendre conjugate satisfies
\begin{equation}
    \forall y \in \R,\quad\hat{u^{\mathcal{A}^{(k)}}}(y) =\frac{y^2}{4} + \underset{(l,k) \in C^{(k)}}{\max}\left[ \hat{u^{\mathcal{A}^{(l)}}}\left(\frac{y}{2}\right)+ \hat{u^{\mathcal{A}^{(k)}}}\left(\frac{y}{2}\right)\right].
\label{eq:conjugates_extension}
\end{equation}
Using \eqref{eq:eq1_convexity} and commuting the $\sup$, we  obtain, for $y\in\R$,
\begin{equation}
\begin{aligned}
   \hat{u^{\mathcal{A}^{(k)}}}(y) &= \underset{z}{\sup}\left[(z-z^*)y-\underset{(l,k) \in C^{(k)}}{\min}\underset{z_1,z_2}{\inf}\left(\;\left(z-\frac{z_1+z_2}{2}\right)^2+u^{\mathcal{A}^{(l)}}(z_1)+u^{\mathcal{A}^{(k)}}(z_2)\right)\right]\\
    &= \underset{(l,k) \in C^{(k)}}{\max}\left[\underset{z_1,z_2}{\sup}\left(-u^{\mathcal{A}^{(l)}}(z_1)-u^{\mathcal{A}^{(k)}}(z_2)+\underset{z}{\sup}\;(z-z^*)y-\left(z-\frac{z_1+z_2}{2}\right)^2\right)\right].
\end{aligned}
\label{eq:eq2_convexity}
\end{equation}
Moreover, a straight-forward calculus shows that the $\underset{z}{\sup}$ is reached at $z = \frac{y+z_1+z_2}{2}$, which leads to
\begin{equation}
\begin{aligned}
    \underset{z}{\sup}\;(z-z^*)y-\left(z-\frac{z_1+z_2}{2}\right)^2 &=  \left(\frac{y+z_1+z_2}{2}-z^*\right)y-\frac{y^2}{4}\\  &=
     \frac{y^2}{4} + \left(z_1-z^*\right)\frac{y}{2} +\left(z_2-z^*\right)\frac{y}{2}.
\end{aligned}
\label{eq:eq3_convexity}
\end{equation}
Combining \eqref{eq:eq3_convexity} and \eqref{eq:eq2_convexity} (the fact that $z^A = z^a=z^*$ plays a crucial part for the crossed term) leads to \eqref{eq:conjugates_extension}.

Moreover, we obtain classically that: \begin{equation}
    \hat{u^{\mathcal{A}^{(k)}}}(y) \geq (z^*-z^*)y-u^{\mathcal{A}^{(k)}}(z^*) = 0 = \hat{u^{\mathcal{A}^{(k)}}}(0)
    \label{eq:sign_hat}
\end{equation}.
\paragraph{4) $\underset{k \leq K}{\max}\;\hat{u^{\mathcal{A}^{(k)}}}:y \mapsto \frac{y^2}{2}$.}

We obtain from \eqref{eq:conjugates_extension} that:
\begin{equation}
    \forall y \in \R,\quad \underset{k \leq K}{\max}\;\hat{u^{\mathcal{A}^{(k)}}}(y) =\frac{y^2}{4} + \underset{k \leq K}{\max}\;\underset{(l,k) \in C^{(k)}}{\max}\left[ \hat{u^{\mathcal{A}^{(l)}}}\left(\frac{y}{2}\right)+ \hat{u^{\mathcal{A}^{(k)}}}\left(\frac{y}{2}\right)\right].
\label{eq:conjugates_max_1}
\end{equation}
For $y\in\R$, let $k_0\leq K$ be such that $\underset{k \leq K}{\max}\;\hat{u^{\mathcal{A}^{(k)}}}\left(\frac{y}{2}\right)=\hat{u_{\mathcal{A}_{k_0}}}\left(\frac{y}{2}\right)$. \textbf{H1} implies in particular $(k_0,k_0) \in C^{(k_0)}$ and therefore, the maximum of the \bl{right-hand side} of \eqref{eq:conjugates_max_1} is reached in $2\hat{u_{\mathcal{A}_{k_0}}}\left(\frac{y}{2}\right)$. Consequently, we deduce that
\begin{equation}
    \forall y \in \R,\quad \underset{k \leq K}{\max}\;\hat{u^{\mathcal{A}^{(k)}}}(y) =\frac{y^2}{4} +2 \underset{k \leq K}{\max}\;\hat{u^{\mathcal{A}^{(k)}}}\left(\frac{y}{2}\right).
\label{eq:conjugates_max}
\end{equation}
Moreover, one can notice that 
\begin{align*}
    \forall y \in \R,\quad \underset{k \leq K}{\max}\;\hat{u^{\mathcal{A}^{(k)}}}(y) &= \underset{k \leq K}{\max}\;\underset{z\in\R}{\max}\; (z-z^*)y - u^{\mathcal{A}^{(k)}}(z) \\
    &= \underset{z\in\R}{\max}\; (z-z^*)y - \underset{k \leq K}{\min}\;u^{\mathcal{A}^{(k)}}(z)\\
    &= \hat{\left(\underset{k \leq K}{\min}\;u^{\mathcal{A}^{(k)}}\right)(y).}
\label{eq:conjugates_max}
\end{align*}
Therefore, $\underset{k \leq K}{\max}\;\hat{u^{\mathcal{A}^{(k)}}}$ is a convex continuous function that has left and right derivative everywhere, in particular in 0. Hence,  iterating \eqref{eq:conjugates_max} implies first that:
\begin{equation}
        \forall y >0\quad (\text{resp.} <0), \quad \underset{k \leq K}{\max}\;\hat{u^{\mathcal{A}^{(k)}}}(y) = \frac{y^2}{2}+ \beta\,y \quad(\text{resp.} \;\alpha\,y),
        \label{eq:max_hat_u_0}
    \end{equation}
where $(\alpha,\beta) = \left(\underset{k \leq K}{\max}\;\hat{u^{\mathcal{A}^{(k)}}}'(0^-),\underset{k \leq K}{\max}\;\hat{u^{\mathcal{A}^{(k)}}}'(0^+)\right)$. From \eqref{eq:sign_hat}, we deduce that the $\alpha\leq0\leq \beta$.  Since $\underset{k \leq K}{\max}\;\hat{u^{\mathcal{A}^{(k)}}}$ is the convex conjugate of $\underset{k \leq K}{\min}\;u^{\mathcal{A}^{(k)}}$, we compute that the convex bi-conjugate of $\underset{k \leq K}{\min}\;u^{\mathcal{A}^{(k)}}$ is 
 \begin{equation}
         z\mapsto
        \begin{aligned}
        \begin{cases}
        \frac{(z-z^*-\alpha)^2}{2}\text{ if }z<z^*+\alpha\\
        \qquad 0\qquad\text{ if }z^*+\alpha\leq z \leq z^*+\beta\\
        \frac{(z-z^*-\beta)^2}{2}\text{ if }z>z^*+\beta.
        \end{cases}
        \end{aligned}
        \label{eq:max_u_0_bi_conjugate}
    \end{equation}
As the convex bi-conjugate of $\underset{k \leq K}{\min}\;u^{\mathcal{A}^{(k)}}$ is the lower convex envelope of $\underset{k \leq K}{\min}\;u^{\mathcal{A}^{(k)}}$, the two of them are equal at the extremal points of its graph, namely for $z = z^*+ \alpha$ and $z=z^*+\beta$. We deduce from \eqref{eq:max_u_0_bi_conjugate} that
\begin{equation*}
    \underset{k \leq K}{\min}\;u^{\mathcal{A}^{(k)}}(z^*+\alpha) = \underset{k \leq K}{\min}\;u^{\mathcal{A}^{(k)}}(z^*+\beta) = 0.
\end{equation*}
Since all the $u^{\mathcal{A}^{(k)}}, k \leq K$ only cancels for $z=z^*$, we obtain that $\alpha = \beta = 0$ and \eqref{eq:max_hat_u_0} yields that $\underset{k \leq K}{\max}\;\hat{u^{\mathcal{A}^{(k)}}}:y\mapsto \frac{y^2}{2}$.

\paragraph{5) $\underset{k \leq K}{\max}\;\hat{u^{\mathcal{A}^{(k)}}} = \underset{k \leq K}{\min}\;\hat{u^{\mathcal{A}^{(k)}}}$.}

First let us state that $\underset{k \leq K}{\min}\;\hat{u^{\mathcal{A}^{(k)}}}$ is continuous as minimum of a finite number of continuous functions and that it is non-negative and reaches its minimum in 0, with $\underset{k \leq K}{\min}\;\hat{u^{\mathcal{A}^{(k)}}}(0)=0$ (from \eqref{eq:sign_hat}). Moreover, \eqref{eq:conjugates_extension} implies that
\begin{equation}
    \forall y \in \R,\quad \underset{k \leq K}{\min}\;\hat{u^{\mathcal{A}^{(k)}}}(y) \leq \frac{y^2}{4} + 2\,\underset{k \leq K}{\max}\;\hat{u^{\mathcal{A}^{(k)}}}\left(\frac{y}{2}\right) = \frac{y^2}{2}.
\label{eq:conjugates_min_1}
\end{equation}
Therefore $\underset{k \leq K}{\min}\;\hat{u^{\mathcal{A}^{(k)}}}$ has left and right derivatives in 0, and 
$\underset{k \leq K}{\min}\;\hat{u^{\mathcal{A}^{(k)}}}'\left(0^+\right) = \underset{k \leq K}{\min}\;\hat{u^{\mathcal{A}^{(k)}}}'\left(0^-\right)=0$.
Furthermore, \eqref{eq:conjugates_extension} also implies that \begin{equation*}
    \forall y \in \R,\quad \underset{k \leq K}{\min}\;\hat{u^{\mathcal{A}^{(k)}}}(y) \geq \frac{y^2}{4} + 2\,\underset{k \leq K}{\min}\;\hat{u^{\mathcal{A}^{(k)}}}\left(\frac{y}{2}\right).
\label{eq:conjugates_min_2}
\end{equation*}
Iterating the last inequality, and knowing that \[\underset{k \leq K}{\min}\;\hat{u^{\mathcal{A}^{(k)}}}\left(0\right)=\underset{k \leq K}{\min}\;\hat{u^{\mathcal{A}^{(k)}}}'\left(0^+\right) = \underset{k \leq K}{\min}\;\hat{u^{\mathcal{A}^{(k)}}}'\left(0^-\right)=0,\]
leads to
\begin{equation*}
    \forall y \in \R,\quad \underset{k \leq K}{\min}\;\hat{u^{\mathcal{A}^{(k)}}}(y) \geq \frac{y^2}{2} =\underset{k \leq K}{\max}\;\hat{u^{\mathcal{A}^{(k)}}}(y).
\label{eq:conjugates_min}
\end{equation*}
Consequently, we deduce that $\underset{k \leq K}{\min}\;\hat{u^{\mathcal{A}^{(k)}}} = \underset{k \leq K}{\max}\;\hat{u^{\mathcal{A}^{(k)}}}$. 
\paragraph{End of proof.}
The last result implies that 
\begin{equation*}
    \forall k \leq K,\quad\forall y \in \R, \quad \hat{u^{\mathcal{A}^{(k)}}}(y) =\underset{k \leq K}{\max}\;\hat{u^{\mathcal{A}^{(k)}}}(y) = \frac{y^2}{2}.
\end{equation*}
From the latter we compute the bi-conjugates $\hat{\hat{u^{\mathcal{A}^{(k)}}}}:z\mapsto \frac{(z-z^*)^2}{2}$. Since $z\mapsto \frac{z-z^*}{2}$ is strictly convex and it is the lower convex envelope of $u^{\mathcal{A}^{(k)}}$, we obtain that
\begin{equation*}
    \forall k \leq K,\quad\forall z \in \R, \quad u^{\mathcal{A}^{(k)}}(z) =\frac{(z-z^*)^2}{2}.
\end{equation*}
\end{proof}

\section{Formal justification of the constraints \eqref{constraint_u0} on the main terms $u_0^A$ and $u_0^a$}
\label{app:constraints}
We drop the index $i$ indicating the habitat and the time dependence $t$ for this appendix for the sake of simpler notations.

Let us first formally justify that $U_0^A$ and $U_0^a$ are positive almost everywhere and cancelling somewhere. As we are interested in the maintenance of the polymorphism at the major-effect locus, we consider that no major-effect allele has yet fixed. Hence, $N^A_\varepsilon$ and $N^a_\varepsilon$ need to remain positive and bounded when $\varepsilon$ vanishes. Using the Hopf-Cole transforms on $n^A_{\varepsilon}$ and $n^a_{\varepsilon}$ \eqref{eq:HCprel} along with the formal Taylor expansions \eqref{eq:taylor_u} on $U^A_{\varepsilon}$ and $U^a_{\varepsilon}$ leads to

\begin{equation}
    N_\varepsilon^A = \displaystyle\int_\R n_\varepsilon^A(z')\,dz' = \displaystyle\int_\R \frac{1}{\sqrt{2\pi}\varepsilon}e^{-\frac{U^A_{\varepsilon}(z')}{\varepsilon^2}}\,dz' = \displaystyle\int_\R \frac{1}{\sqrt{2\pi}\varepsilon}e^{-\frac{u^A_{0}(z')}{\varepsilon^2}}e^{-u_1^A+\varepsilon^2v^A_\varepsilon}\,dz'.
    \label{eq:N_eps_bounded}
\end{equation}
If we assume that the residues $u_1^A$ and $v_\varepsilon^A$ stay bounded when $\varepsilon$ vanishes (as \textcite{Calvez_Garnier_Patout_2019} suggests it), then \eqref{eq:N_eps_bounded} implies that $u^A_{0}$ must be non-negative for $N_\varepsilon^A$ to remain bounded when $\varepsilon$ vanishes. For $N_\varepsilon^A$ not to vanish asymptotically, $u^A_{0}$ must cancel. Moreover, for any interval $I\subset \R$, $u^A_{0}$ cannot cancel on $I$, or we would have:
\[N^A_\varepsilon > \displaystyle\int_I \frac{1}{\sqrt{2\pi}\varepsilon}e^\frac{1}{\varepsilon^2}e^{-u_1^A+\varepsilon^2v^A_\varepsilon}\,dz' \rightarrow +\infty.\]
So $u^A_{0}$ is positive almost everywhere, and cancelling somewhere. The same holds for $u^a_0$.

Now, for determining the constraints $\eqref{constraint_u0}$, let us notice that if we divide the \bl{right-hand side} of the first equality of \eqref{systscaled} by $n^A_{\varepsilon}(z)$, the reproduction term $\frac{\mathcal{B^A_\varepsilon}(n^A_{\varepsilon},n^a_{\varepsilon})(z)}{n^A_{\varepsilon}(z)}$ has to remain positive and bounded for all $z\in \R$ when $\varepsilon$ vanishes for the effect of reproduction to remain well-balanced with selection, migration and competition. We assume henceforth that \eqref{eq:taylor_u} is the correct ansatz (as suggested by \cite{Calvez_Garnier_Patout_2019}). Using the Hopf-Cole transforms on $n^A_{\varepsilon}$ and $n^a_\varepsilon$ \eqref{eq:HCprel} along with the formal Taylor expansions \eqref{eq:taylor_u} on $U^A_{\varepsilon}$ and $U^a_{\varepsilon}$ in \eqref{operator_2} leads to

\begin{equation*}
    \footnotesize
    \begin{aligned}
    &\frac{\mathcal{B^A_\varepsilon}(n^A_{\varepsilon},n^a_{\varepsilon})(t,z)}{n^A_{\varepsilon}(z)}=\frac{\mathcal{B^A_\varepsilon}(n^A_{\varepsilon},n^a_{\varepsilon})(z)}{\frac{1}{\sqrt{2\pi}\varepsilon}e^{-\frac{u^A_{0}(z)}{\varepsilon^2}}e^{-u_1^A(z)+\mathcal{O}(\varepsilon^2)}}\\
    &=  \frac{\sqrt{2}}{N_\varepsilon^A}\times\\
    &\left[\int_{\R^2} \exp\left(\frac{1}{\varepsilon^2}\left[u_0^A(z)-\left(z-\frac{z_1+z_2}{2}\right)^2-u_0^A(z_1)-u_0^A(z_2)\right]\right)\exp\left(u_1^A(z)-u_1^A(z_1)-u_1^A(z_2)+\mathcal{O}(\varepsilon^2)\right)dz_1 dz_2\right.\\
    &\left.+\!\int_{\R^2} \exp\left(\frac{1}{\varepsilon^2}\left[u_0^A(z)-\left(z-\frac{z_1+z_2}{2}\right)^2-u_0^A(z_1)-u_0^a(z_2)\right]\right)\exp\left(u_1^A(z)-u_1^A(z_1)-u_1^a(z_2)+\mathcal{O}(\varepsilon^2)\right)dz_1 dz_2\right]\!\!.
    \end{aligned}
\end{equation*}
As $N^A_\varepsilon$ remains bounded and does not vanish asymptotically, we need the maximum of the two integrals nor to vanish, nor to diverge to infinity when $\varepsilon$ vanishes, for all $z\in\R$. Therefore the maximum of the terms into brackets that are multiplied by $\frac{1}{\varepsilon^2}$ needs to be null for all $z\in\R$:
\begin{align*}
    \forall z \in \R,\quad \max\Big[\underset{z_1,z_2}{\sup}\;u_{0}^A(z)&- \left(z-\frac{z_1+z_2}{2}\right)^2-u_{0}^A(z_1)-u_{0}^A(z_2),\\ &\underset{z_1,z_2}{\sup}\; u_{0}^A(z) - \left(z-\frac{z_1+z_2}{2}\right)^2 -u_{0}^A(z_1)-u_{0}^a(z_2) \Big] = 0,
\end{align*}
which is the first constraint of \eqref{constraint_u0}. The same holds for $\frac{\mathcal{B^a_\varepsilon}(n^a_{\varepsilon},n^A_{\varepsilon})(z)}{n^a_{\varepsilon}(z)}$, which gives the second constraint of \eqref{constraint_u0}.
\section{Slow-fast analysis underlying the separation of time scales}

\label{app:slow_fast_analysis}
Under the change of variable \eqref{eq:slow_fast_variables}, the system \eqref{eq:syst_moment_small_var} is equivalent to the following:

\begin{equation}
\label{eq:syst_moment_small_var_fast_slow_friendly}
    \begin{aligned}
\begin{cases}
\varepsilon^2 \frac{d\,N^a_{\varepsilon,i}}{dt} = \lambda^{i-1}N^a_{\varepsilon,i} - \left[N^A_{\varepsilon,i}+N^a_{\varepsilon,i}\right] N^a_{\varepsilon,i} - g_i\left[Z_\varepsilon+\eta^a-(-1)^i\right]^2 N^a_{\varepsilon,i}\\\qquad\qquad\qquad\qquad\qquad\qquad\qquad\qquad\qquad\qquad\qquad+\alpha^{(-1)^j}m_j\,N^a_{\varepsilon,j}-m_i\,N^a_{\varepsilon,i} +\mathcal{O}(\varepsilon^2),
\\\\
\varepsilon^2 \frac{d\,N^A_{\varepsilon,i}}{dt} =\lambda^{i-1}\,N^A_{\varepsilon,i} - \left[N^A_{\varepsilon,i}+N^a_{\varepsilon,i}\right] N^A_{\varepsilon,i} - g_i\left[Z_\varepsilon+\eta^A-(-1)^i\right]^2 N^A_{\varepsilon,i} \\\qquad\qquad\qquad\qquad\qquad\qquad\qquad\qquad\qquad\qquad\qquad+\alpha^{(-1)^j}m_j\,N^A_{\varepsilon,j}-m_i\,N^A_{\varepsilon,i}+\mathcal{O}(\varepsilon^2),
\\\\
\varepsilon^2 \frac{d\,\delta^a_\varepsilon}{dt} = g_1+g_2 +(g_1-g_2)\,(Z_\varepsilon+\eta^a)+\frac{\delta_\varepsilon}{2}\left[\frac{N_{\varepsilon,2}^A}{N_{\varepsilon,2}^a+N_{\varepsilon,2}^A}-\frac{N_{\varepsilon,1}^A}{N_{\varepsilon,1}^a+N_{\varepsilon,1}^A}\right]
 \\\qquad\qquad\qquad\qquad\qquad- \delta^a_\varepsilon\left[\frac{m_2\alpha\,N^a_{\varepsilon,2}}{N^a_{\varepsilon,1}}+\frac{m_1N^a_{\varepsilon,1}}{\alpha\,N^a_{\varepsilon,2}}\right]+\frac{\delta^A_\varepsilon-\delta^a_\varepsilon}{4}\left[\frac{N_{\varepsilon,2}^A}{N_{\varepsilon,2}^a+N_{\varepsilon,1}^A}+\frac{N_{\varepsilon,1}^A}{N_{\varepsilon,1}^a+N_{\varepsilon,1}^A}\right]+\mathcal{O}(\varepsilon^2),\\
\varepsilon^2 \frac{d\,\delta^A_\varepsilon}{dt} = g_1+g_2+(g_1-g_2)\,(Z_\varepsilon+\eta^A) +\frac{\delta_\varepsilon}{2}\left[\frac{N_{\varepsilon,1}^a}{N_{\varepsilon,1}^a+N_{\varepsilon,1}^A}-\frac{N_{\varepsilon,2}^a}{N_{\varepsilon,2}^a+N_{\varepsilon,2}^A}\right]\\\qquad\qquad\qquad\qquad\qquad- \delta^A_\varepsilon\left[\frac{m_2\alpha N^A_{\varepsilon,2}}{N^A_{\varepsilon,1}}+\frac{m_1N^A_{\varepsilon,1}}{\alpha N^A_{\varepsilon,2}}\right]+\frac{\delta^a_\varepsilon-\delta^A_\varepsilon}{4}\left[\frac{N_{\varepsilon,2}^a}{N_{\varepsilon,2}^a+N_{\varepsilon,2}^A}+\frac{N_{\varepsilon,1}^a}{N_{\varepsilon,1}^a+N_{\varepsilon,1}^A}\right]
+\mathcal{O}(\varepsilon^2),\\

\varepsilon^2 \frac{d\,\delta_\varepsilon}{dt} = - \frac{\delta_\varepsilon}{2}-(g_1+g_2)\frac{\eta^A-\eta^a}{2}+\left(\frac{\delta_\varepsilon^A}{2}\left[\frac{\alpha m_2N^A_{\varepsilon,2}}{N^A_{\varepsilon,1}}-\frac{m_1N^A_{\varepsilon,1}}{\alpha N^A_{\varepsilon,2}}\right]-\frac{\delta_\varepsilon^a}{2}\left[\frac{\alpha m_2N^a_{\varepsilon,2}}{N^a_{\varepsilon,1}}-\frac{m_1N^a_{\varepsilon,1}}{\alpha N^a_{\varepsilon,2}}\right] \right)
\\
\qquad\qquad\qquad\qquad\qquad\qquad\qquad\qquad\qquad\qquad\qquad\qquad\qquad\qquad\qquad\qquad\qquad\;\;+ \mathcal{O}(\varepsilon^2),\\
\\
\frac{dZ_\varepsilon}{dt} = (g_2-g_1) -(g_1+g_2)\,\left(Z_\varepsilon+\frac{\eta^A+\eta^a}{2}\right)+\frac{\delta_\varepsilon}{2}\left[\frac{N^A_{1,\varepsilon}}{N^A_{1,\varepsilon}+N^a_{1,\varepsilon}}+\frac{N^A_{2,\varepsilon}}{N^A_{2,\varepsilon}+N^a_{2,\varepsilon}}-1\right] \\ \qquad\qquad\qquad\qquad\qquad\qquad+\left(\frac{\delta_\varepsilon^a}{2}\left[\frac{\alpha m_2N^a_{\varepsilon,2}}{N^a_{\varepsilon,1}}-\frac{m_1N^a_{\varepsilon,1}}{\alpha N^a_{\varepsilon,2}}\right]+\frac{\delta_\varepsilon^A}{2}\left[\frac{\alpha m_2N^A_{\varepsilon,2}}{N^A_{\varepsilon,1}}-\frac{m_1N^A_{\varepsilon,1}}{\alpha N^A_{\varepsilon,2}}\right]\right)\\
\qquad\qquad\qquad\qquad\qquad\qquad\qquad\qquad\qquad\qquad\quad+\frac{\delta_\varepsilon^A-\delta_\varepsilon^a}{4}\left[\frac{N^A_{2,\varepsilon}}{N^A_{2,\varepsilon}+N^a_{2,\varepsilon}}-\frac{N^A_{1,\varepsilon}}{N^A_{1,\varepsilon}+N^a_{1,\varepsilon}}\right]+\mathcal{O}(\varepsilon^2).
\end{cases}
\end{aligned}
\end{equation}
The system \eqref{eq:syst_moment_small_var_fast_slow_friendly} can be recasted more compactly into \ref{eq:slowfastvareps}. The main slow-fast analysis result is \cref{thm:slow_fast_theorem}, which states the convergence of \ref{eq:slowfastvareps} towards a limit system \ref{eq:slowfastvarlimit} which separates ecological and evolutionary time scales. The arguments of the proof of \cref{thm:slow_fast_theorem} are similar to the analogous theorems proved in \textcite{Levin_1954,Dekens_2022}. The proof requires some preliminaries results, particularly of stability, to which we dedicate the rest of this section. The structure of this section is represented in \cref{fig:slow_fast_layout}.
\begin{figure}
\centering
\begin{tikzpicture}
\centering
 \node[rounded corners=3pt, draw, color = RoyalBlue] at (-5,0) (Peps) {$\boldsymbol{(P_\varepsilon)}$ 
 \begin{math}
     \begin{aligned}
     \begin{cases}
         \text{\textbf{Fast dynamics} (allelic subpopulations sizes}\\\text{and spatial discrepancies between mean infinitesimal parts)},\\
         \\
       \text{\textbf{Slow dynamics} (mean infinitesimal part)}.
         \end{cases}
     \end{aligned}
 \end{math}};
 \node[rounded corners = 3pt,draw] at (-3.7,-3.) (thm) {\cref{thm:slow_fast_theorem}};
 \node[rounded corners = 3pt, below=0.05cm of thm] (cv) {$\varepsilon \rightarrow 0$};
 \node[rounded corners = 3pt, draw, below = 4cm of Peps, color = ForestGreen] (P0) {
 $\boldsymbol{(P_0)}$
\begin{math}\begin{aligned}
    \begin{cases}
    \text{\textbf{Fast ecological equilibria}},\\
         \\
       \text{\textbf{Slow evolutionary dynamics}}.
    \end{cases}
\end{aligned}\end{math}
};
\node[rounded corners = 3pt, draw, color = PineGreen] at (3.3, -4.3) (S0) {\begin{math}
    \begin{aligned}
    &\text{\textbf{\ref{S_0}} (allelic subpopulation sizes)},\\
    &\text{\scriptsize \emph{Resolution} \ref{prop:solut_S0}, \cref{rem:degrees_of_freedom}},\\
    &\text{\scriptsize $\colorboxed{Black}{\text{\emph{Stability}}}$ \ref{prop:stability_S0}}.
\end{aligned}
\end{math}} ;
\node[rounded corners = 3pt, draw, below= 0.2cm of S0, color = PineGreen] (SL) 
{\begin{math}\begin{aligned}
    &\text{\textbf{\ref{linear_syst}} (mean trait discrepancies)},\\
    &\text{\scriptsize \emph{Resolution} \ref{prop:linear_syst}},\\
    &\text{\scriptsize $\colorboxed{Black}{\text{\emph{Stability}}}$ \ref{prop:stability_linear}}.
\end{aligned}\end{math}};

\node[rounded corners = 3pt, draw, above= 3cm of S0, color = Gray] (OLHM) {One-locus haploid model};
\node [rounded corners = 3pt, draw, color = Gray]at  (5.2, -1.8) {\cref{rem:OLHM_fast_eq} ($Z=0$)};
 \draw[-stealth] (Peps) -- (P0);
 \draw[-stealth, dashed, color = PineGreen] (P0.9) -- (S0.170);
 \draw[-stealth, dashed, color = PineGreen] (P0.9) -- (SL.168);
 \draw[stealth-stealth, dashed, color = Gray] (S0) -- (OLHM) ;
\draw[-stealth, dashed, looseness=10] (SL.190) -- (thm.335);
\draw[-stealth, dashed, looseness=10] (S0.190) -- (thm.335);
 \end{tikzpicture}
        \caption{\textbf{Layout of the slow-fast analysis in \cref{app:slow_fast_analysis}}. This figure presents the key elements of the separation of time scales leading from \ref{eq:slowfastvareps} to \ref{eq:slowfastvarlimit}. The stability of the fast equilibria (studied in the two subsystems \ref{S_0} and \ref{linear_syst}) is the crucial argument underlying the convergence result stated in \cref{thm:slow_fast_theorem}. The resolution of \ref{S_0} leads to a uniqueness result that is unexpected with regard to the analogous resolution in \textcite{Dekens_2022} (see \cref{rem:degrees_of_freedom}). }
        \label{fig:slow_fast_layout}
\end{figure}
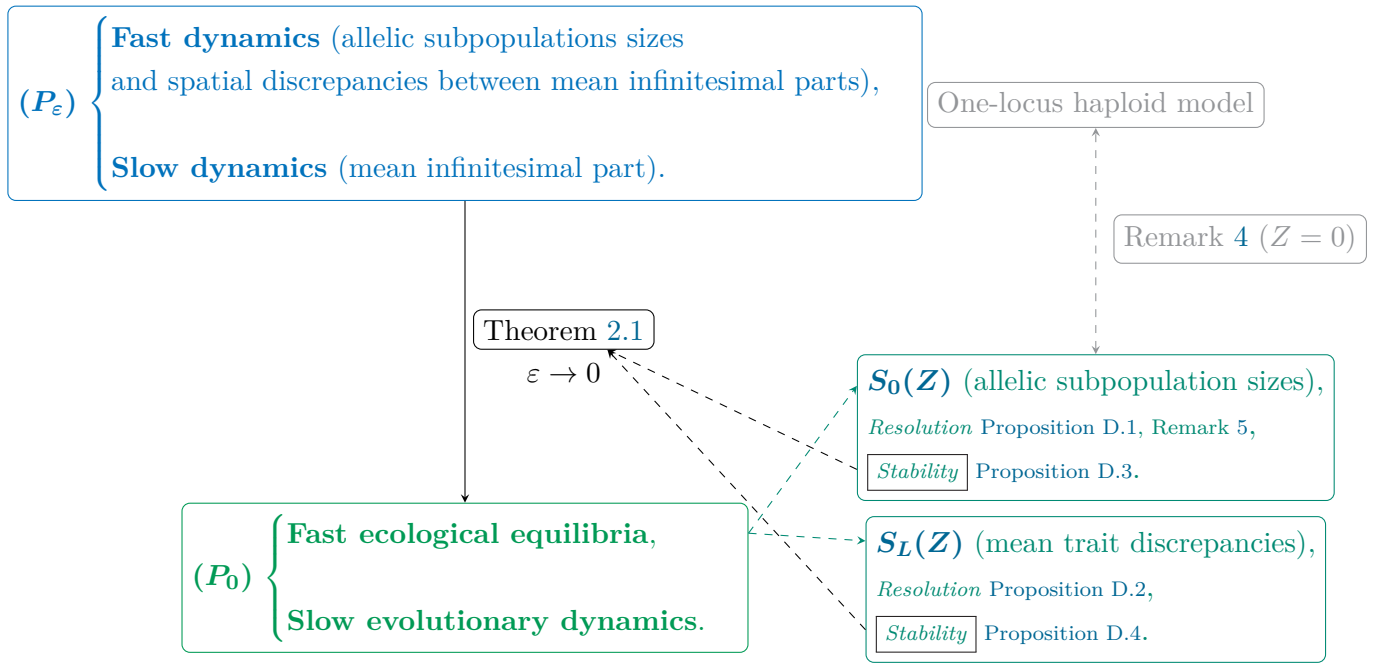
In the rest of this section, we first solve the slow manifold algebraic system $G(\Bar{Y},Z)=0$, showing that there can only exist one instantaneous ecological equilibrium at a given $Z\in]-1,1[$ (\ref{prop:solut_S0} and \ref{prop:linear_syst}). Surprisingly, this resolution is easier than the analogous one in \textcite{Dekens_2022} (see \cref{rem:degrees_of_freedom}). Next, in \cref{subsubsec:stability_slow_manifold}, we show a stability criterion of the slow manifold (\ref{prop:stability_S0} and \ref{prop:stability_linear}), which justifies the separation of time scales approach.
\subsection{Analysis of the fast equilibria.}
\label{subsec:slow_manifold}
The fast equilibria, for $Z\in]-1,1[$, are defined as the solutions $\Bar{Y}$ to the algebraic system $G(\Bar{Y},Z) = 0$, or equivalently seven equations that we group in two subsystems \ref{S_0} and \ref{linear_syst}:
\begin{equation}
    \begin{aligned}
    \begin{cases}
    \alpha \,m_2\,N_2^a -m_1 N_1^a + N_1^a\,\left[1 - (N_1^a + N_1^A) - g_1 \,(Z + \eta^a +1 )^2 \right] &=0,\\
    \alpha \,m_2\,N_2^A -m_1 N_1^A+N_1^A\,\left[1 - (N_1^a + N_1^A) - g_1\, (Z+ \eta^A+1 )^2 \right]&=0,\\
    \frac{m_1}{\alpha}\,N_1^a - m_2\,N_2^a+N_2^a\,\left[\lambda - (N_2^a + N_2^A) - g_2\, (Z + \eta^a -1 )^2  \right] &=0,\\
    \frac{m_1}{\alpha}\,N_1^A - m_2\,N_2^A+N_2^A\,\left[\lambda - (N_2^a + N_2^A) - g_2\, (Z + \eta^A -1  )^2 \right]&=0.\\
    \end{cases}
    \end{aligned}
    \label{S_0}
    \tag{$S_0(Z)$}
\end{equation}
\begin{equation}
J_{S_L} \begin{pmatrix}
        \delta\\\delta^A\\\delta^a
\end{pmatrix}=\begin{pmatrix}
        (g_1+g_2)\frac{\eta^A-\eta^a}{2}\\-(g_1+g_2) +(g_2-g_1)(Z+\eta^A)\\-(g_1+g_2) +(g_2-g_1)(Z+\eta^a)
\end{pmatrix},
\label{linear_syst}
\tag{$S_L(Z)$}
\end{equation}
where:
\begin{equation*}
\tiny{J_{S_L}\!\!:=\!\!\begin{pmatrix}
 
                   -\frac{1}{2}& \frac{1}{2}\left[\frac{\alpha\,m_2\,N_2^A}{N_1^A}-\frac{m_1\,N_1^A}{\alpha\,N_2^A}\right]&-\frac{1}{2}\left[\frac{\alpha\,m_2\,N_2^a}{N_1^a}-\frac{m_1\,N_1^a}{\alpha\,N_2^a}\right]\\\frac{\frac{N_1^a}{N_1^A+N_1^a}-\frac{N_2^a}{N_2^A+N_2^a}}{2}&-\left[\frac{\alpha\,m_2\,N_2^A}{N_1^A}+\frac{m_1\,N_1^A}{\alpha\,N_2^A}\right] - \frac{\frac{N_1^a}{N_1^A+N_1^a}+\frac{N_2^a}{N_2^A+N_2^a}}{4}&\frac{\frac{N_1^a}{N_1^A+N_1^a}+\frac{N_2^a}{N_2^A+N_2^a}}{4}\\
                    \frac{\frac{N_2^A}{N_2^A+N_2^a}-\frac{N_1^A}{N_1^A+N_1^a}}{2}&\frac{\frac{N_1^A}{N_1^A+N_1^a}+\frac{N_2^A}{N_2^A+N_2^a}}{4}&-\left[\frac{\alpha\,m_2\,N_2^a}{N_1^a}+\frac{m_1\,N_1^a}{\alpha\,N_2^a}\right] - \frac{\frac{N_1^A}{N_1^A+N_1^a}+\frac{N_2^A}{N_2^A+N_2^a}}{4}
\end{pmatrix}}
\end{equation*}
\subsubsection{Resolution.}
\label{subsubsec:resolution_slow_manifold}
Following \cref{rem:fixation_alleles_system}, we recall that we assume that no major-effect allele has fixed. Here, we show that there is at most one instantaneous ecological equilibrium at each $Z$-level (for $Z\in]-1-\frac{\eta^A+\eta^a}{2},1-\frac{\eta^A+\eta^a}{2}[$), thanks to \ref{prop:solut_S0} and \ref{prop:linear_syst}.
\begin{prop}
Suppose that no major-effect allele has fixed. Then, for $Z\in]-1-\frac{\eta^A+\eta^a}{2},1-\frac{\eta^A+\eta^a}{2}[$, \ref{S_0} has exactly one solution $(N_1^a, N_1^A, N_2^a, N_2^A) \in (\R^*)^4$, given by:
\begin{equation*}
    N^a_1 = \frac{Y^A\,N_1 - N_2}{Y^A-Y^a}, \quad N^a_2 = Y^a\frac{Y^A\,N_1 - N_2}{Y^A-Y^a}, \quad N^A_1 = \frac{N_2-Y^a\,N_1}{Y^A-Y^a}, \quad N^A_2 = Y^A\frac{N_2-Y^a\,N_1}{Y^A-Y^a},
    \label{eq:solutionS0}
\end{equation*}where the quantities $(Y^A, Y^a, N_1, N_2)$ are defined by \eqref{eq:YaYAN1N2}:
\begin{equation}
\label{eq:YaYAN1N2}
\begin{aligned}
    \begin{cases}
    Y^A = \frac{g_1}{\alpha\,m_2}\,\left(\eta^A+\eta^a+2\,(Z+1)\right)\frac{\eta^A-\eta^a}{2}\left[\sqrt{1+\frac{m_1\,m_2}{4\,g_1\,g_2\,\left(\frac{\eta^A-\eta^a}{2}\right)^2\left(1-\left(\frac{\eta^A+\eta^a}{2}+Z\right)^2\right)}}+1\right],\\
    Y^a = \frac{g_1}{\alpha\,m_2}\,\left(\eta^A+\eta^a+2\,(Z+1)\right)\frac{\eta^A-\eta^a}{2}\left[\sqrt{1+\frac{m_1\,m_2}{4\,g_1\,g_2\,\left(\frac{\eta^A-\eta^a}{2}\right)^2\left(1-\left(\frac{\eta^A+\eta^a}{2}+Z\right)^2\right)}}-1\right],\\
    N_1 = 1-g_1\,(Z +1+ \eta^A )^2-m_1+\alpha\,m_2\,Y^A,\\
    N_2 = \lambda-g_2\,(Z -1+ \eta^A )^2-m_2+\frac{m_1}{\alpha\,Y^A}.
    \end{cases}
    \end{aligned}
\end{equation}
This solution $(N_1^a, N_1^A, N_2^a, N_2^A)$ defines viable numbers of each allele in each sub-populations if and only if:

\begin{equation}
\label{eq:viability}
    [Y^A\,N_1 > N_2] \text{ and } [N_2>Y^a\,N_1].
\end{equation}
\label{prop:solut_S0}
\end{prop}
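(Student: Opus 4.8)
The plan is to exploit a decoupling of \ref{S_0} by allele. Introducing the total local sizes $s_1:=N_1^a+N_1^A$, $s_2:=N_2^a+N_2^A$ and the shorthands $P_1:=1-s_1-m_1$, $P_2:=\lambda-s_2-m_2$, the two equations carrying allele $a$ (the first and third lines of \ref{S_0}) form, \emph{for fixed} $s_1,s_2$, a homogeneous linear system
\begin{equation*}
\begin{pmatrix} P_1-g_1(Z+1+\eta^a)^2 & \alpha m_2 \\ \tfrac{m_1}{\alpha} & P_2-g_2(Z-1+\eta^a)^2\end{pmatrix}\begin{pmatrix} N_1^a\\ N_2^a\end{pmatrix}=\begin{pmatrix}0\\0\end{pmatrix},
\end{equation*}
and the lines carrying $A$ give the analogous system with $\eta^a$ replaced by $\eta^A$. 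First I would note that, since no major-effect allele has fixed (\cref{rem:fixation_alleles_system}), all four sizes are nonzero: the first line forces $N_1^a=0\Rightarrow N_2^a=0$ and the third forces the converse, so each allele is present in both patches or in neither. Hence each homogeneous system is singular, giving the two scalar constraints $\bigl(P_1-g_1(Z+1+\eta^a)^2\bigr)\bigl(P_2-g_2(Z-1+\eta^a)^2\bigr)=m_1m_2$ and its $\eta^A$-analogue — two equations in the two unknowns $(P_1,P_2)$.

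Next I would pass to the allelic ratios $Y^a:=N_2^a/N_1^a$ and $Y^A:=N_2^A/N_1^A$. The top rows give $Y^a=\frac{g_1(Z+1+\eta^a)^2-P_1}{\alpha m_2}$ and $Y^A=\frac{g_1(Z+1+\eta^A)^2-P_1}{\alpha m_2}$; subtracting cancels $P_1$, so $Y^A-Y^a=\frac{g_1(\eta^A-\eta^a)(\eta^A+\eta^a+2(Z+1))}{\alpha m_2}=:2K$ is known a priori. The bottom rows give $\tfrac1{Y^a}=\frac{\alpha(g_2(Z-1+\eta^a)^2-P_2)}{m_1}$ and its analogue; subtracting cancels $P_2$, so $\tfrac1{Y^a}-\tfrac1{Y^A}$ is known, which upon using $Y^A-Y^a=2K$ determines the product $Y^aY^A$. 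The algebraic identity $\bigl(\eta^A+\eta^a+2(Z+1)\bigr)\bigl(\eta^A+\eta^a+2(Z-1)\bigr)=-4\bigl(1-(\tfrac{\eta^A+\eta^a}{2}+Z)^2\bigr)$ then collapses this product into $Y^aY^A=K^2W$ with $W:=\frac{m_1m_2}{4g_1g_2(\frac{\eta^A-\eta^a}{2})^2\bigl(1-(\frac{\eta^A+\eta^a}{2}+Z)^2\bigr)}$.

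With the difference and product of $Y^a,Y^A$ in hand, $Y^a$ solves $t^2+2Kt-K^2W=0$, so $Y^a=K(-1\pm\sqrt{1+W})$ and $Y^A=Y^a+2K$. On the range $Z\in\,]-1-\tfrac{\eta^A+\eta^a}{2},\,1-\tfrac{\eta^A+\eta^a}{2}[$ one checks $K>0$ and $W>0$ (the radicand $1-(\tfrac{\eta^A+\eta^a}{2}+Z)^2$ is positive there), so only the $+$ root keeps both ratios positive; \emph{this is where viability selects a single solution}, reproducing \eqref{eq:YaYAN1N2}. Reading $P_1$ off $Y^A$ returns $s_1=1-m_1-P_1=N_1$, and reading $P_2$ off $Y^A$ returns $s_2=N_2$; the four sizes are then recovered from the invertible (because $\eta^A\neq\eta^a$) linear system $N_1^a+N_1^A=N_1$, $Y^aN_1^a+Y^AN_1^A=N_2$, yielding the displayed formulas, with positivity equivalent to \eqref{eq:viability}. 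A final routine check, reversing the ratio identities, confirms the reconstruction solves \ref{S_0}.

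The main obstacle is not the decoupling but the algebraic bookkeeping in the fully asymmetric setting ($\alpha,\lambda,g_1\neq g_2,m_1\neq m_2$): one must carry these parameters through the two determinant conditions and recognize the two cancellations — that $Y^A-Y^a$ is free of $P_1$ while $\tfrac1{Y^a}-\tfrac1{Y^A}$ is free of $P_2$ — which is exactly what turns a coupled pair of quadratics in $(P_1,P_2)$ into a single quadratic in one ratio. The second delicate point is the root selection: one must argue that positivity of the population sizes forbids the negative root, since the determinant conditions themselves admit a spurious second algebraic solution; this positivity-driven uniqueness is the mechanism behind the reduction in degrees of freedom highlighted in \cref{rem:degrees_of_freedom}.
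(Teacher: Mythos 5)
Your proof is correct and follows essentially the same route as the paper's own: the passage to totals and per-allele ratios $(N_1,N_2,Y^A,Y^a)$, the observation that the difference and the product of the ratios are determined independently of the patch terms (the paper phrases this as $Y^A$ and $-Y^a$ being the roots of $X^2-A_1(Z)X+A_0(Z)$, which is exactly your quadratic $t^2+2Kt-K^2W$ after the shift $X=t+2K$), selection of the positive root on the stated $Z$-interval, and reconstruction of the four sizes with positivity equivalent to \eqref{eq:viability}. Your singular-homogeneous-system framing and your explicit remark that positivity is what discards the sign-flipped algebraic solution are only cosmetic refinements of the same argument (the paper makes that root selection implicitly).
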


\begin{proof}
Let us introduce the following change of variables, valid under the assumption that no major-effect allele has fixed:
\[N_1 := N_1^A+N_1^a,\quad N_2 := N_2^A+N_2^a,\quad Y^A := \frac{N_2^A}{N_1^A},\quad Y^a := \frac{N_2^a}{N_1^a}.\]
Then, under the assumptions made in \cref{rem:fixation_alleles_system}, the system \eqref{S_0} is equivalent to:
\begin{equation}
    \begin{aligned}
    \begin{cases}
    \alpha \,m_2\,Y^a -m_1 + \left[1 - N_1 - g_1 \,(Z + \eta^a +1 )^2 \right] &=0,\\
    \alpha \,m_2\,Y^A -m_1 +\left[1 - N_1 - g_1\, (Z+ \eta^A+1 )^2 \right]&=0,\\
    \frac{m_1}{\alpha}\,\frac{1}{Y^a} - m_2+\left[\lambda - N_2 - g_2\, (Z + \eta^a -1 )^2  \right] &=0,\\
    \frac{m_1}{\alpha}\,\frac{1}{Y^A} - m_2+\left[\lambda - N_2 - g_2\, (Z + \eta^A -1  )^2 \right]&=0.\\
    \label{eq:resolution_YA_Ya_N1_N2}
    \end{cases}
    \end{aligned}
\end{equation}
This is equivalent to the following system:
\begin{equation*}
    \begin{aligned}
    \begin{cases}
     \alpha\,m_2\, (Y^a - Y^A) +\,g_1\,\left(\eta^A+\eta^a+2\,(Z+1)\right)(\eta^A-\eta^a)&=0,\\
    \frac{m_1}{\alpha}\, (\frac{1}{Y^a} - \frac{1}{Y^A}) +g_2\,\left(\eta^A+\eta^a+2\,(Z-1)\right)(\eta^A-\eta^a) &=0,\\
    N_1 - \left(1-g_1\,(Z +1+ \eta^A )^2-m_1+\alpha\,m_2\,Y^A\right) &=0,\\
    N_2 - \left(\lambda-g_2\,(Z -1+ \eta^A )^2-m_2+\frac{m_1}{\alpha\,Y^A}\right) &=0.\\
    \end{cases}
    \end{aligned}
\end{equation*}
As $Z\neq 1-\frac{\eta^A+\eta^a}{2}$, the closed subsystem on $(Y^A,Y^a)$ is, in turn, equivalent to:
\begin{equation*}
    \begin{aligned}
    \begin{cases}
     Y^A - Y^a &=A_1(Z):=\frac{g_1}{\alpha\,m_2}\,\left(\eta^A+\eta^a+2\,(Z+1)\right)(\eta^A-\eta^a),\\
    -Y^A\,Y^a &=A_0(Z):= \frac{g_1\,m_1}{\alpha^2\,g_2\,m_2}\frac{\eta^A+\eta^a+2\,(Z+1)}{\eta^A+\eta^a+2\,(Z-1)}.\\
    \end{cases}
    \end{aligned}
\end{equation*}
$Y^A$ and $-Y^a$ are the roots of the polynomial:
\[P(X) = X^2-A_1(Z)\,X+A_0(Z).\]
$P$ has two real roots of opposite signs if and only if:
\[\left[A_0(Z)<0\right],\]
which is equivalent to:
\[-1-\frac{\eta^A+\eta^a}{2}<Z<1-\frac{\eta^A+\eta^a}{2}.\]
Under the last condition on $Z$, $A_1(Z)$ is positive, $A_0(Z)$ is negative and we get:

\begin{equation}
    \begin{aligned}
        \begin{cases}
        Y^A = \frac{A_1(Z)}{2}\,\left[\sqrt{1-\frac{A_0(Z)}{\left(\frac{A_1(Z)}{2}\right)^2}}+1\right],\\
        Y^a = \frac{A_1(Z)}{2}\,\left[\sqrt{1-\frac{A_0(Z)}{\left(\frac{A_1(Z)}{2}\right)^2}}-1\right],
        \end{cases}
    \end{aligned}
\end{equation}which is equivalent to \eqref{eq:YaYAN1N2}.

Inverting the initial change of variables leads to:
\begin{equation*}
    N^a_1 = \frac{Y^A\,N_1 - N_2}{Y^A-Y^a}, \quad N^a_2 = Y^a\frac{Y^A\,N_1 - N_2}{Y^A-Y^a}, \quad N^A_1 = \frac{N_2-Y^a\,N_1}{Y^A-Y^a}, \quad N^A_2 = Y^A\frac{N_2-Y^a\,N_1}{Y^A-Y^a},
\end{equation*}
hence \eqref{eq:YaYAN1N2}. It defines a viable solution with positive entries if and only if $Y^A\,N_1 > N_2$ and $N_2>Y^a\,N_1$.
\end{proof}
\begin{prop}
For all allelic sizes of subpopulations $(N_1^a, N_1^A, N_2^a, N_2^A) \in \left(\R_+^*\right)^4$ and {\color{Black}$Z \in \R$}, there exists a unique solution $(\delta,\delta^A,\delta^a)$ to the system \ref{linear_syst}.
\label{prop:linear_syst}
\end{prop}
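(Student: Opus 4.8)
The system $S_L(Z)$ is linear in the unknown $(\delta,\delta^A,\delta^a)$, and its right-hand side depends on $Z$ only through the constant vector; the matrix $J_{S_L}$ itself does not involve $Z$. Hence existence and uniqueness, for every $Z$ and every choice of positive sizes, is equivalent to the single statement that $J_{S_L}$ is invertible, i.e. $\det J_{S_L}\neq 0$. The plan is to prove the sharper fact that $\det J_{S_L}<0$ throughout $(\R_+^*)^4$.

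First I would lighten the notation. Write $p_i^A=\frac{N_i^A}{N_i^A+N_i^a}$ and $p_i^a=1-p_i^A$ for the allelic frequencies, and set $S^A=\frac{\alpha m_2 N_2^A}{N_1^A}+\frac{m_1 N_1^A}{\alpha N_2^A}$, $D^A=\frac{\alpha m_2 N_2^A}{N_1^A}-\frac{m_1 N_1^A}{\alpha N_2^A}$, and analogously $S^a,D^a$. Two elementary identities will drive the argument: since the two migration terms in $S^A$ multiply to $m_1 m_2$ irrespective of the allele, one gets $|D^A|<S^A$ and $|D^a|<S^a$ strictly (all sizes being positive); and since $p_i^A+p_i^a=1$, the quantities $Q^A=\frac{p_1^A+p_2^A}{4}$, $Q^a=\frac{p_1^a+p_2^a}{4}$ satisfy $Q^A+Q^a=\frac12$. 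I would also record that the two lower entries of the first column of $J_{S_L}$ coincide and both equal $r:=\frac{p_1^a-p_2^a}{2}$.

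Next I would eliminate the $2\times2$ block. The bottom-right block $B=\begin{pmatrix}-(S^A+Q^a)&Q^a\\Q^A&-(S^a+Q^A)\end{pmatrix}$ has determinant $\Delta_M:=S^A S^a+S^A Q^A+S^a Q^a>0$, so it is invertible and a Schur-complement computation reduces the full determinant to the scalar identity
\[
\det J_{S_L}=\tfrac12\Big(r\big[D^A(S^a+\tfrac12)-D^a(S^A+\tfrac12)\big]-\Delta_M\Big).
\]
It therefore suffices to show that the bracketed product, call it $L$, satisfies $L<\Delta_M$ for all positive sizes (the case $r=0$ being immediate since $\Delta_M>0$).

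The delicate point, and the step I expect to be the main obstacle, is that the naive estimate $L\le|r|\big(|D^A|(S^a+\tfrac12)+|D^a|(S^A+\tfrac12)\big)$ combined with $|D^A|<S^A$, $|D^a|<S^a$ and $|r|<\tfrac12$ is \emph{not} enough: after reduction it demands $\tfrac14(S^A+S^a)\le S^AQ^A+S^aQ^a$, i.e. $\tfrac12(S^A-S^a)(Q^A-Q^a)\ge0$, which fails exactly when $S^A-S^a$ and $Q^A-Q^a$ have opposite signs. The fix I would use is a sharper control of $r$ by the frequency sums. Writing $P=p_1^a$, $P'=p_2^a$, the inequalities $|P-P'|\le P+P'$ and $|P-P'|\le 2-P-P'$ (the latter is simply $\max(P,P')\le1$) give the crucial bound $|r|\le 2\min(Q^A,Q^a)$. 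Applying $|r|\le\tfrac12$ only to the leading $S^AS^a$ term and $|r|\le 2\min(Q^A,Q^a)$ to the lower-order term, and combining with the elementary fact $\min(Q^A,Q^a)(S^A+S^a)\le S^AQ^A+S^aQ^a$, yields $L<S^AS^a+S^AQ^A+S^aQ^a=\Delta_M$, with strictness coming from $|D^A|<S^A$ when $r\neq0$. This gives $\det J_{S_L}<0$ throughout, and hence the asserted unique solvability of $S_L(Z)$.
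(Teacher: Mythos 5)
Your proof is correct, but it follows a genuinely different route from the paper's. The paper's proof is a brute-force computation: it evaluates $\det(J_{S_L})$ with a symbolic computation tool (Mathematica) and exhibits the result as $-\tfrac{1}{4}$ times a sum of manifestly positive terms, hence $\det(J_{S_L})<0$. You arrive at the same strict inequality by hand: your reading of the matrix is accurate (the two lower entries of the first column do both equal $r=\tfrac{p_1^a-p_2^a}{2}$, because $p_i^A+p_i^a=1$); the Schur-complement identity $\det J_{S_L}=\tfrac12\bigl(r[D^A(S^a+\tfrac12)-D^a(S^A+\tfrac12)]-\Delta_M\bigr)$ is exact (it uses $Q^A+Q^a=\tfrac12$); and your inequality chain --- $|r|\le\tfrac12$ on the term $r(D^AS^a-D^aS^A)$, the sharper bound $|r|\le 2\min(Q^A,Q^a)$ on $\tfrac{r}{2}(D^A-D^a)$, then $\min(Q^A,Q^a)(S^A+S^a)\le S^AQ^A+S^aQ^a$ --- is valid, with strictness coming from $|D^A|<S^A$, i.e. from both migration rates being positive, an assumption the paper's proof needs as well. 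Your diagnosis that the naive estimate fails exactly when $(S^A-S^a)(Q^A-Q^a)<0$ is also right, so the refined control of $r$ is genuinely needed. Comparing the two approaches: the paper's is shorter but opaque and depends on computer algebra; in fact its printed expansion is not invariant under relabelling the alleles and does not match the determinant of the printed matrix for asymmetric population sizes (the final term $2N_1^aN_2^a/(N_1N_2)$ appears to be a typo for the crossed product $2N_1^AN_2^a/(N_1N_2)$, although the negativity conclusion is unaffected since all terms stay positive). Your argument is self-contained and checkable line by line, it explains structurally why the determinant is negative, and it still delivers the quantitative fact $\det(J_{S_L})<0$ that the paper reuses downstream in the proof of \ref{prop:stability_linear}.
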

\begin{proof}
Using the notation $N_1 := N_1^A+N_1^a$ and $N_2 := N_2^A+N_2^a$, we compute thanks to a symbolic computation tool (Mathematica\copyright):
\begin{align*}
    \det(J_{S_L}) = &-\frac{1}{4}\left[\frac{m_1}{\alpha}\frac{N_1}{N_2}+\alpha\,m_2\frac{N_2}{N_1}+2\frac{m_1^2}{\alpha^2}\frac{N_1^a\,N_1^A}{N_2^a\,N_2^A}+2\alpha^2m_2^2\frac{N_2^a\,N_2^A}{N_1^a\,N_1^A} \right.\\
    &+ \left.2m_1m_2\left(\frac{{N_1^A}^2N_2^a}{N_1^aN_1N_2}+\frac{{N_1^a}^2N_2^A}{N_1^AN_1N_2}+\frac{{N_2^A}^2N_1^a}{N_2^aN_1N_2}+\frac{{N_2^a}^2N_1^A}{N_2^AN_1N_2} + 2\frac{N_2^AN_1^a}{N_1\,N_2}+2\frac{N_1^aN_2^a}{N_1N_2}\right)\right]\\\ &<0.
\end{align*}
\end{proof}
\subsubsection{Stability.}
\label{subsubsec:stability_slow_manifold}

Convergence toward a limit system locally in time in a slow-fast analysis relies essentially on a stability criterion of the fast equilibria which constitute the slow manifold (\cite{Levin_1954,Dekens_2022}). In this subsection, we show that all fast equilibria found in \ref{prop:solut_S0} and \ref{prop:linear_syst} for a level $Z\in]-1-\frac{\eta^A+\eta^a}{2},1-\frac{\eta^A+\eta^a}{2}[$, are stable. Due to the particular shape of the slow manifold, it is sufficient to study separately the Jacobian matrix associated to \ref{S_0} denoted $J_{S_0}$ (\ref{prop:stability_S0}) and the Jacobian matrix associated to the linear system \ref{linear_syst}, which is exactly $J_{S_L}$ (\ref{prop:stability_linear}).
\begin{prop}
Let $Z\in]-1-\frac{\eta^A+\eta^a}{2},1-\frac{\eta^A+\eta^a}{2}[$ such that (\ref{S_0}) has a unique solution $(N_1^a, N_1^A, N_2^a, N_2^A) \in \left(\R_+^*\right)^4$. Let us define the following matrix:
\begin{equation}
    J_{S_0}=\begin{pmatrix}
     
 -\frac{\alpha\,m_2\,N_2^a}{N_1^a}-N_1^a & \alpha\,m_2 & -N_1^a & 0 \\
 \frac{m_1}{\alpha} & -\frac{m_1\,N_1^a}{\alpha\,N_2^a}-N_2^a & 0 & -N_2^a \\
 -N_1^A & 0 & -\frac{\alpha\,m_2\,N_2^A}{N_1^A}-N_1^A & \alpha\,m_2 \\
 0 & -N_2^A & \frac{m_1}{\alpha} & -\frac{m_1\,N_1^A}{\alpha\,N_2^A}-N_2^A \\
    \end{pmatrix}.
\end{equation}
Then:
\begin{enumerate}
    \item $J_{S_0}$ is the Jacobian of \ref{S_0} at $(N_1^a, N_1^A, N_2^a, N_2^A)$.
    \item All the eigenvalues of $J_{S_0}$ are located in the left open half plane.
\end{enumerate}
\label{prop:stability_S0}
\end{prop}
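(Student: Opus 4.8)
The plan is to treat the two claims separately: the first is a direct computation, the second a spectral-stability statement whose key is to exploit the block structure of $J_{S_0}$ together with the equilibrium relations defining \ref{S_0}.

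For part 1 I would differentiate the four left-hand sides of \ref{S_0} with respect to $(N_1^a, N_2^a, N_1^A, N_2^A)$, in that order. Every off-diagonal entry is immediate; the only entries needing an argument are the four diagonal ones, where differentiation produces the bracketed growth--death term together with an extra $-N_i^{\bullet}$. To recover the stated form I use the equilibrium identities: dividing the first equation of \ref{S_0} by $N_1^a$ gives $1-(N_1^a+N_1^A)-g_1(Z+\eta^a+1)^2-m_1 = -\alpha m_2 N_2^a/N_1^a$, so the $(1,1)$ entry becomes $-\alpha m_2 N_2^a/N_1^a - N_1^a$, matching $J_{S_0}$; the other three diagonal entries follow identically. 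This uses crucially that no allele has fixed, so every $N_i^{\bullet}>0$ and the divisions are legitimate.

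For part 2 I would first record the sign pattern: negative diagonal, positive same-allele/between-patch (migration) entries at $(1,2),(2,1),(3,4),(4,3)$, and negative same-patch/between-allele (competition) entries at $(1,3),(3,1),(2,4),(4,2)$. Conjugating by $S=\mathrm{diag}(1,-1,1,-1)$ turns $J_{S_0}$ into $W=SJ_{S_0}S$ with all off-diagonal entries $\le 0$, so that $-W$ is Metzler; this exposes the structure and is the natural setting for a Lyapunov search. I then split $J_{S_0}=M+C$, where $M$ collects migration (block-diagonal in the allele grouping, each $2\times2$ block singular since migration conserves a weighted allele total: $\det M_a = m_1 m_2 - m_1 m_2 = 0$) and $C$ is the competition coupling. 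Thus $M$ alone has spectrum $\{0,0,\mathrm{tr}\,M_a,\mathrm{tr}\,M_A\}$, two negative and two neutral modes, and the whole point is that competition $C$ must move the two neutral modes strictly into the open left half-plane. The cleanest rigorous route is then Routh--Hurwitz on $\det(\mu I - J_{S_0})=\mu^4+a_1\mu^3+a_2\mu^2+a_3\mu+a_4$: the bound $a_1=-\mathrm{tr}\,J_{S_0}>0$ is immediate, and $a_4=\det J_{S_0}>0$ follows because $C$ breaks the degeneracy of $M$ (here one uses the explicit entries, i.e. the equilibrium relations). I would express $a_2,a_3$ as sums of principal minors, simplify each with the equilibrium identities, and verify $a_3>0$ and the Hurwitz inequality $a_1a_2a_3-a_3^2-a_1^2a_4>0$ — a step that, as elsewhere in the paper (cf. the computation of $\det J_{S_L}$), I would discharge with a symbolic-algebra check.

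The main obstacle is precisely the stabilization of the two migration-neutral modes, which cannot be settled by the sign pattern alone: the migration entries form mutualistic (sign $(+,+)$) two-cycles between patches, so $J_{S_0}$ is \emph{not} sign-stable and stability genuinely depends on the magnitudes fixed by the equilibrium. An alternative I would attempt in parallel is diagonal (Volterra--Lyapunov) stability, choosing $D>0$ that symmetrizes each migration block (forcing $d_1^{\bullet}/d_2^{\bullet}=m_1/(\alpha^2 m_2)$), so that $DM+M^\top D$ is negative semidefinite with kernel exactly the neutral modes; but then $DC+C^\top D$ splits over patches into $2\times2$ blocks of determinant $-(p-q)^2\le 0$, hence generically indefinite (equality would force $Y^A=Y^a$, which fails). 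So that decomposition is inconclusive on the neutral subspace, and one is forced to control the coupling between the neutral modes and the fast negative migration modes — which is exactly the content of the Hurwitz-determinant inequality (or a non-diagonal Lyapunov refinement). This is where I expect the real effort to lie; the reward is that the very same computation also yields the stability of the one-locus haploid equilibrium at $Z=0$ invoked in \cref{rem:OLHM_fast_eq}.
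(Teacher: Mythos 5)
Your proposal is correct and follows essentially the same route as the paper: part 1 is proved exactly as you describe, by differentiating the left-hand sides of \ref{S_0} and substituting the equilibrium identities (division by the positive $N_i^{\bullet}$) into the diagonal entries, and part 2 is settled via the Routh--Hurwitz criterion, with $-\tr(J_{S_0})>0$ and $\det J_{S_0}>0$ handled explicitly and the two remaining Hurwitz inequalities verified as sums of positive terms by a symbolic computation tool, precisely the division of labour you anticipate. Your additional structural observations (the migration/competition splitting, the sign conjugation by $\mathrm{diag}(1,-1,1,-1)$, and the inconclusive diagonal-Lyapunov attempt) are commentary the paper does not use, but they do not alter the fact that the core argument coincides.
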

\begin{proof}

1. Let $(N_1^a, N_1^A, N_2^a, N_2^A)$ be solution of \ref{S_0}. One can verify that:

\begin{align*}
    \frac{\partial \left[\alpha \,m_2\,N_2^a -m_1 N_1^a + N_1^a\,\left[1 - (N_1^a + N_1^A) - g_1 \,(Z + \eta^a +1 )^2 \right]\right]}{\partial N_1^a}\\=
    \left[1 - (N_1^a + N_1^A) - g_1 \,(Z + \eta^a +1 )^2-m_1\right]-N_1^a = -\frac{\alpha \,m_2\,N_2^a}{N_1^a}-N_1^a,
\end{align*}
for $(N_1^a, N_1^A, N_2^a, N_2^A)$ solves \ref{S_0}. The same holds for the other diagonal entries.

2. Let:\[\chi_{J_{S_0}} (X) = X^4-\tr\left(J_{S_0}\right)X^3+b\,X^2+c\,X +\det{J_{S_0}},\] be the characteristic polynomial of $J_{S_0}$. Let us verify the Routh-Hurwitz criterion: all the eigenvalues of $J_{S_0}$ are located in the left open half plane if and only if: 
\begin{itemize}
    \item[$(i)$]$\det{J_{S_0}}>0$,
    \item[$(ii)$]$-\tr\left(J_{S_0}\right)>0$,
    \item[$(iii)$]$-\tr\left(J_{S_0}\right)\,b-c>0$,
    \item[$(iv)$]$(-\tr\left(J_{S_0}\right)\,b-c)\,c-\tr\left(J_{S_0}\right)^2\,\det{J_{S_0}}>0$.
\end{itemize}
We have:
\[\det{J_{S_0}} = m_1\,m_2\left(N_1^a\,N_2^a-N_1^A\,N_2^A\right)^2\left(\frac{1}{N_1^a\,N_2^A}+\frac{1}{N_1^A\,N_2^a}\right) >0.\] 
and: \begin{align*}
    -\tr\left(J_{S_0}\right) &= N_1+N_2+
    \sqrt{m_1\,m_2}\left(\frac{\alpha \sqrt{m_2}N_2^a}{\sqrt{m_1}N_1^a}+\frac{\sqrt{m_1}N_1^a}{\alpha \sqrt{m_2}N_2^a}+\frac{\alpha\sqrt{m_2}N_2^A}{\sqrt{m_1}N_1^A}+\frac{\alpha\sqrt{m_2}N_2^A}{\sqrt{m_1}N_1^A}\right)>0.
\end{align*}
With the help of a symbolic computation tool (Mathematica\copyright), we verify that the left hand side of the two last conditions are sums of positive terms, but are too long to be displayed here.
\end{proof}
The Jacobian matrix of the linear system \ref{linear_syst} is exactly $J_{S_L}$ and we also show that $J_{S_L}$ satisfies the Routh-Hurwitz criterion:
\begin{prop}
$J_{S_L}$ has all its eigenvalues located in the left open half plane.
\label{prop:stability_linear}
\end{prop}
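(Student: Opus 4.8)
The statement asserts that the real $3\times 3$ matrix $J_{S_L}$, evaluated at a viable fast equilibrium (so that the allelic sizes $N_1^a,N_1^A,N_2^a,N_2^A$ are all strictly positive), is Hurwitz. As in the proof of \ref{prop:stability_S0}, I would apply the Routh--Hurwitz criterion, now in its cubic form: writing the characteristic polynomial as $\chi_{J_{S_L}}(X)=X^3+a_1X^2+a_2X+a_3$ with $a_1=-\tr(J_{S_L})$, $a_2$ the sum of the three principal $2\times2$ minors, and $a_3=-\det(J_{S_L})$, all eigenvalues lie in the open left half-plane if and only if $a_1>0$, $a_3>0$ and $a_1a_2-a_3>0$.

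To make the three conditions tractable, I would first recast $J_{S_L}$ in a reduced notation isolating the signs. Setting $p_i=\frac{N_i^A}{N_i^A+N_i^a}$, $q_i=1-p_i$, and
\[
\mu^A=\frac{\alpha m_2 N_2^A}{N_1^A}+\frac{m_1 N_1^A}{\alpha N_2^A},\qquad \nu^A=\frac{\alpha m_2 N_2^A}{N_1^A}-\frac{m_1 N_1^A}{\alpha N_2^A},
\]
with $\mu^a,\nu^a$ defined analogously, together with $r=\frac{q_1+q_2}{4}$, $s=\frac{p_1+p_2}{4}$ (so $r,s\in(0,\tfrac12)$ and $r+s=\tfrac12$) and $d=\frac{p_2-p_1}{2}$, one checks that $q_1-q_2=p_2-p_1$, whence the first column is constant below the diagonal and
\[
J_{S_L}=\begin{pmatrix}-\tfrac12 & \tfrac12\nu^A & -\tfrac12\nu^a\\ d & -\mu^A-r & r\\ d & s & -\mu^a-s\end{pmatrix}.
\]
From this reduction, $a_1=-\tr(J_{S_L})=1+\mu^A+\mu^a>0$ is immediate, since $\mu^A,\mu^a>0$ by positivity of the sizes (indeed $\mu^A,\mu^a\ge 2\sqrt{m_1m_2}$ by AM--GM), while $a_3=-\det(J_{S_L})>0$ is exactly the content of \ref{prop:linear_syst}. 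Thus the two conditions $a_1>0$ and $a_3>0$ come for free.

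The remaining inequality $a_1a_2-a_3>0$ is the crux. Expanding the principal minors yields $a_2=\tfrac12\bigl(\mu^A+\mu^a+\tfrac12\bigr)+(\mu^A+r)(\mu^a+s)-rs+\tfrac12(\nu^a-\nu^A)d$; the first three groups are manifestly positive (note $(\mu^A+r)(\mu^a+s)>rs$ because $\mu^A,\mu^a>0$), but the cross term $\tfrac12(\nu^a-\nu^A)d$ has no fixed sign, and the same mixed-sign structure reappears in $\det(J_{S_L})$. Hence positivity cannot be read off termwise. The plan is to form $a_1a_2-a_3$ in the reduced variables and, exactly as the authors do for $J_{S_0}$ in \ref{prop:stability_S0}, to verify with a symbolic computation tool that after full expansion the expression reorganizes into a sum of positive terms, exploiting $\mu^A,\mu^a>0$, $r,s>0$ and $r+s=\tfrac12$. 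The main obstacle is precisely the control of the indefinite entries $\nu^A,\nu^a,d$: one must show that the positive-definite part generated by $\mu^A,\mu^a$ and by the minor $(\mu^A+r)(\mu^a+s)-rs$ dominates every product of indefinite terms, which is the nontrivial fact that the symbolic expansion is meant to certify.
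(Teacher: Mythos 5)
Your proposal is correct and takes essentially the same route as the paper: the cubic Routh--Hurwitz criterion, with $-\tr(J_{S_L})>0$ read off the diagonal, $-\det(J_{S_L})>0$ imported from \ref{prop:linear_syst}, and the remaining condition $a_1a_2-a_3>0$ certified by a symbolic expansion into a sum of positive terms. Your reduced notation ($\mu^A,\mu^a,\nu^A,\nu^a,r,s,d$ with $r+s=\tfrac12$) is a clean repackaging of the matrix but does not alter the argument.
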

\begin{proof}
Let the following be the characteristic polynomial of $J_{S_L}$:\[\chi_{J_{S_L}}(X) = X^3 - \tr(J_{S_L})X^2 -\frac{1}{2}\left(\tr(J_{S_L}^2)-\tr(J_{S_L})^2\right)X-\det(J_{S_L}).\]
We show that $J_{S_L}$ satisfies the Routh-Hurwitz criterion:
\begin{itemize}
    \item[$(i)$] $-\det(J_{S_L})>0$,
    \item[$(ii)$] $-\tr(J_{S_L})>0$,
    \item[$(iii)$] $\frac{1}{2}\left(\tr(J_{S_L}^2)-\tr(J_{S_L})^2\right)\,\tr(J_{S_L})+\det(J_{S_L})>0.$
\end{itemize}
We have $-\det(J_{S_L})>0$ from the proof of \ref{prop:linear_syst} and:
\[-\tr(J_{S_L}) = 1 + \sqrt{m_1\,m_2}\left(\frac{\alpha \sqrt{m_2}N_2^a}{\sqrt{m_1}N_1^a}+\frac{\sqrt{m_1}N_1^a}{\alpha \sqrt{m_2}N_2^a}+\frac{\alpha\sqrt{m_2}N_2^A}{\sqrt{m_1}N_1^A}+\frac{\alpha\sqrt{m_2}N_2^A}{\sqrt{m_1}N_1^A}\right) >1 +4\sqrt{m_1m_2}.\]
We verify that the l.h.s. of the last condition is a sum of positive terms.
\end{proof}
\section{Proof of \ref{prop:sym_dim_eq}.}
\label{app:proof_sym_eq}
Let us define the quantities:
\begin{equation}
\label{eq:YaYAN1N2_0}
\begin{aligned}
    \begin{cases}
    Y^{A,*} = \frac{2\,g\,\eta}{m}\left[\sqrt{1+\frac{m^2}{4\,g^2\,\eta^2}}+1\right],\\
    Y^{a,*} = \frac{2\,g\,\eta}{m}\left[\sqrt{1+\frac{m^2}{4\,g^2\,\eta^2}}-1\right],\\
    N^*_1 = 1-g\,\eta^2-g-m+2\,g\,\eta\,\sqrt{1+\frac{m^2}{4\,g^2\,\eta^2}},\\
    N^*_2 = N^*_1.
    \end{cases}
    \end{aligned}
\end{equation}
\ref{prop:solut_S0} states that the latter defines a solution to \ref{S_0} given that $Z=0$:
\[(N^{a,*}_1,N^{a,*}_2,N^{A,*}_1,N^{A,*}_2) = \left(N^*_1 \frac{Y^{A,*} - 1}{Y^{A,*}-Y^{a,*}},\; N^*_1 \frac{1 - Y^{a,*}}{Y^{A,*}-Y^{a,*}},\;N^*_1 \frac{1 - Y^{a,*}}{Y^{A,*}-Y^{a,*}},\;N^*_1 \frac{Y^{A,*} - 1}{Y^{A,*}-Y^{a,*}}\right).\] Since $Y^{A,*}>1$ and $Y^{A,*}\; Y^{a,*}=1$, this solution is viable under the condition: $N_1>0$, hence requiring :
\[1+\sqrt{4\,g^2\,\eta^2+m^2}>g\,\eta^2+g+m,\]
which in turn is equivalent to \eqref{eq:condition_viability_0}.

\ref{prop:linear_syst} next states that \ref{linear_syst} has a unique solution $(\delta^*,\delta^{A,*},\delta^{a,*})$ for such allelic population sizes $(N^{a,*}_1,N^{a,*}_2,N^{A,*}_1,N^{A,*}_2)$. One can compute that:

\[\delta^{A,*} = \delta^{a,*} = \frac{g\left(1+\eta+Y^{A,*}(1-\eta)\right)}{m(1+Y^{A,*})}, \quad\delta^* = -\frac{2g\left(1+\eta-{Y^{A,*}}^2(1-\eta)\right)}{Y^{A,*}}.\]

Finally, one can verify that $(N^{a,*}_1,N^{a,*}_2,N^{A,*}_1,N^{A,*}_2,\delta^*,\delta^{A,*},\delta^{a,*})$ along with setting $Z^*=0$ is a solution of the last equation of \eqref{eq:limit_stationary_system}.
\section{Supplementary IBS with fixed subpopulations sizes}
\label{app:fixed_sizes}

\bl{In this appendix, we show in \cref{fig:comp_slim_fixed_sizes} the analogous results as those presented in \cref{fig:comp_slim_increasing_selection_sym}, but with a slightly different procedure for the IBS, which adjusts the birth rates to compensate exactly for the deaths by selection at each generation, thus keeping the subpopulations sizes fixed. As mentioned in the main text, the loss of the polymorphism at the major-effect locus still occurs with weak selection, but not with strong selection.
\begin{figure}\centering
\begin{subfigure}{.6\linewidth}
\centering
    \includegraphics[width=\linewidth]{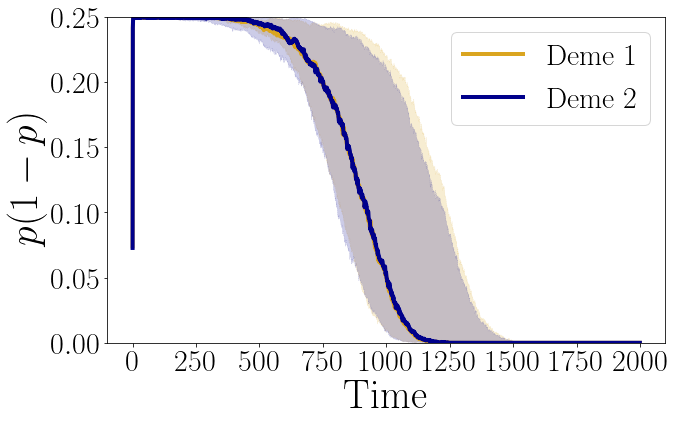}
    \subcaption{\small{\color{Black}Major-effect locus with polygenic background: weak selection ($g=0.1$)}.}
    \label{fig:m03g01_fixed_sizes}
\end{subfigure}
\\
\begin{subfigure}{.6\linewidth}
\centering
\includegraphics[width=\linewidth]{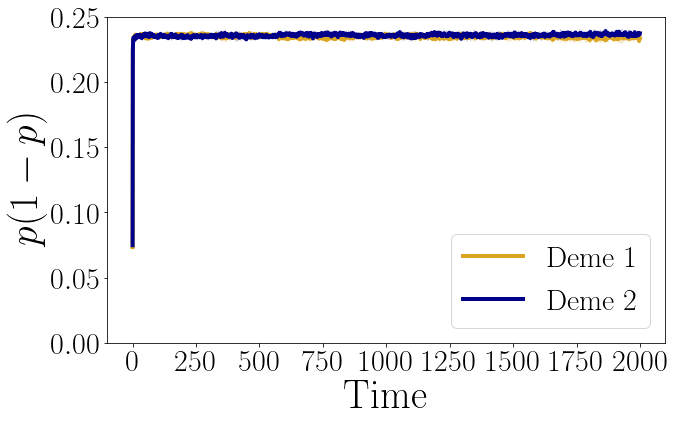}
    \subcaption{\small Major-effect locus with polygenic background: intermediate selection ($g=0.5$).}
    \label{fig:m03g05_fixed_sizes}
\end{subfigure}
\\
\begin{subfigure}{.6\linewidth}
\centering
    \includegraphics[width=\linewidth]{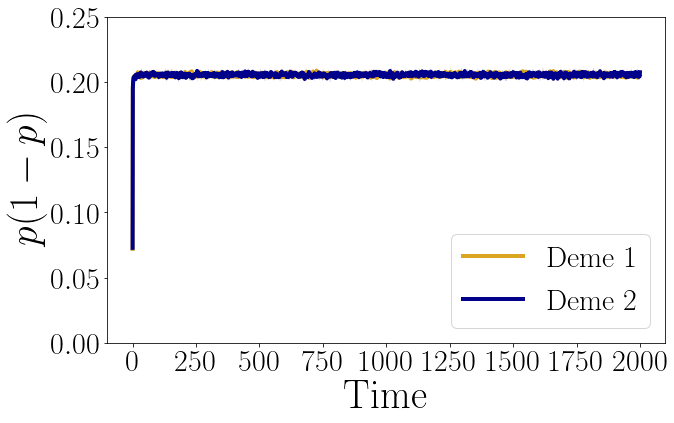}
    \subcaption{\small{\color{Black}Major-effect locus with polygenic background: strong selection ($g=1$).}}
    \label{fig:m03g1_fixed_sizes}
\end{subfigure}

\caption{\small\textbf{Same as the left panel of \cref{fig:comp_slim_increasing_selection_sym}, but with fixed subpopulations sizes}.}   
\label{fig:comp_slim_fixed_sizes}
\end{figure}}

\section{Supplementary IBS with asymmetrical initial conditions or different parameters for the genetic architecture ($\boldsymbol{L} = 50$ loci and $\boldsymbol{\sigma_{LE}}$ = 0.2)}
\label{app:50}
{\color{Black}
In this appendix, \bl{we first show that the phenomenon of loss of polymorphism in the presence of a quantitative background with weak or strong selection is robust to asymmetrical initial population sizes, and even occurs much faster (\cref{fig:comp_slim_increasing_selection_asym}). We emphasize on the excellent agreement of the deterministic iterations with the individual-based simulations}

\begin{figure}
\begin{subfigure}{.45\linewidth}
\centering
    \includegraphics[width=\linewidth]{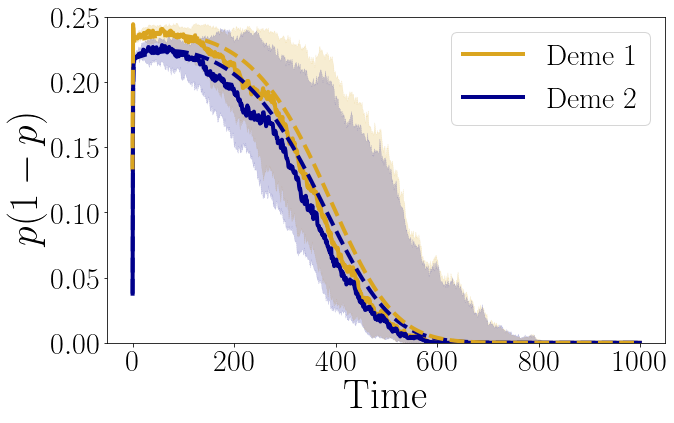}
    \subcaption{\small{\color{Black}Major-effect locus with polygenic background: weak selection ($g=0.1$)}.}
    \label{fig:m03g01_asym}
\end{subfigure}
\qquad\begin{subfigure}{.45\linewidth}
\centering
    \includegraphics[width=\linewidth]{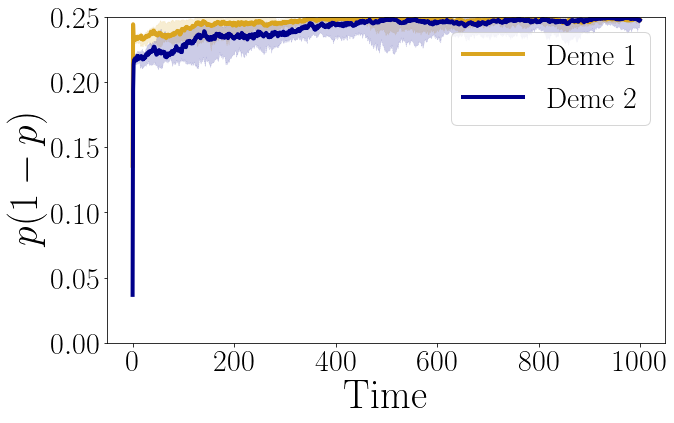}
    \subcaption{\small Control case without polygenic background: weak selection ($g=0.1$).}
    \label{fig:m03g01control_asym}
\end{subfigure}
\\
\begin{subfigure}{.45\linewidth}
\centering
\includegraphics[width=\linewidth]{p1_p2_8_2_all_m=0.80_g=0.50_a=0.1_Nloci=200_sym.png}
    \subcaption{\small major-effect locus with polygenic background: intermediate selection ($g=0.5$).}
    \label{fig:m03g05_asym}
\end{subfigure}
\qquad\begin{subfigure}{.45\linewidth}
\centering
    \includegraphics[width=\linewidth]{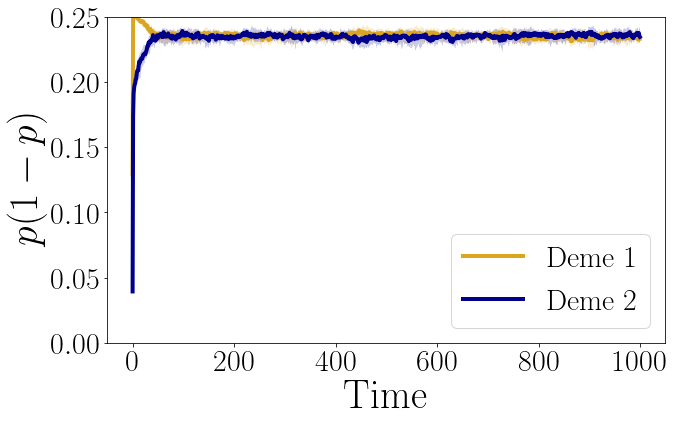}
    \subcaption{\small Control case without polygenic background: intermediate selection ($g=0.5$).}
    \label{fig:m03g05control_asym}
\end{subfigure}
\\
\begin{subfigure}{.45\linewidth}
\centering
    \includegraphics[width=\linewidth]{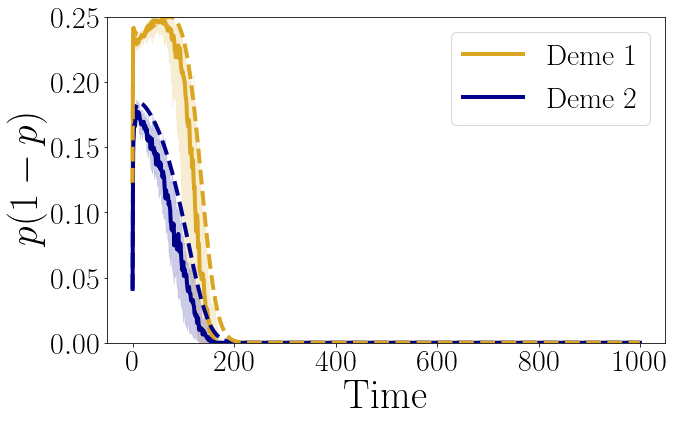}
    \subcaption{\small{\color{Black}Major-effect locus with polygenic background: strong selection ($g=1$).}}
    \label{fig:m03g1_asym}
\end{subfigure}
\qquad\begin{subfigure}{.45\linewidth}
\centering
    \includegraphics[width=\linewidth]{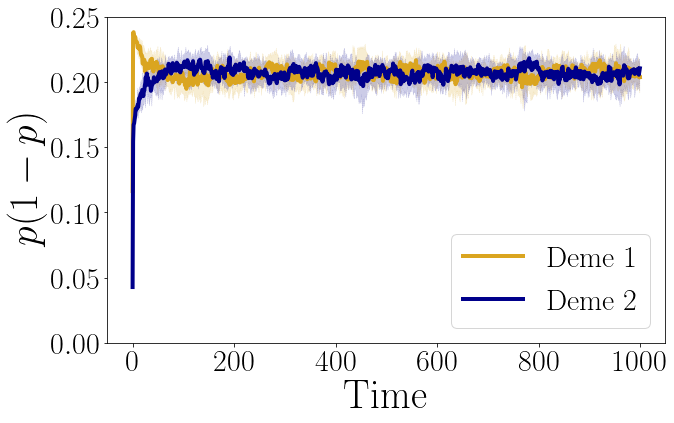}
    \subcaption{\small Control case without polygenic background : strong selection ($g=1$).}
    \label{fig:m03g1control_asym}
\end{subfigure}
\caption{\small\textbf{Same as \cref{fig:comp_slim_increasing_selection_sym}, {\color{Black} but with asymmetrical initial subpopulation sizes}}.}   
\label{fig:comp_slim_increasing_selection_asym}
\end{figure}
\bl{Furthermore, we also show} that our findings hold when considering a smaller number of loci involved in the quantitative background ($\boldsymbol{L} = 50$ instead of 200), with increased relative effect ($\boldsymbol{\sigma_{LE}}=0.2$ instead of 0.1), so that the trait range $[-\boldsymbol{\eta}-\boldsymbol{\sigma_{LE}}\sqrt{\boldsymbol{L}}, \boldsymbol{\eta}+\boldsymbol{\sigma_{LE}}\sqrt{\boldsymbol{L}}] \approx[-\boldsymbol{\eta}-1.4,\boldsymbol{\eta}+1.4]$ extends beyond the local optima (-1 and 1) even in the absence of major effects. We display the results of the IBS with symmetrical initial subpopulation sizes in \cref{fig:comp_slim_increasing_selection_sym_50} and with asymmetrical initial subpopulation sizes in \cref{fig:comp_slim_increasing_selection_asym_50}. Note that the right panel of each figure does not change from \cref{fig:comp_slim_increasing_selection_sym} and \cref{fig:comp_slim_increasing_selection_asym}, because the control case does not depend on the number of loci, but we choose to display it anyway for consistency of comparison. One can notice that the time to fixation at the major-effect locus in \bl{the} presence of \bl{a} quantitative background \bl{under} weak (\cref{fig:m03g1_sym_50}, \cref{fig:m03g01_asym_50}) and strong selection (\cref{fig:m03g1_sym_50}, \cref{fig:m03g1_asym_50}) is reduced compared to when the quantitative background comes from a larger number of loci (\cref{fig:comp_slim_increasing_selection_sym}, \cref{fig:comp_slim_increasing_selection_asym}). Moreover, the sensitivity of the numerical resolutions of \eqref{systnonstat} with regard to symmetrical initial states is more pronounced here.
\begin{figure}
\begin{subfigure}{.45\linewidth}
\centering
    \includegraphics[width=\linewidth]{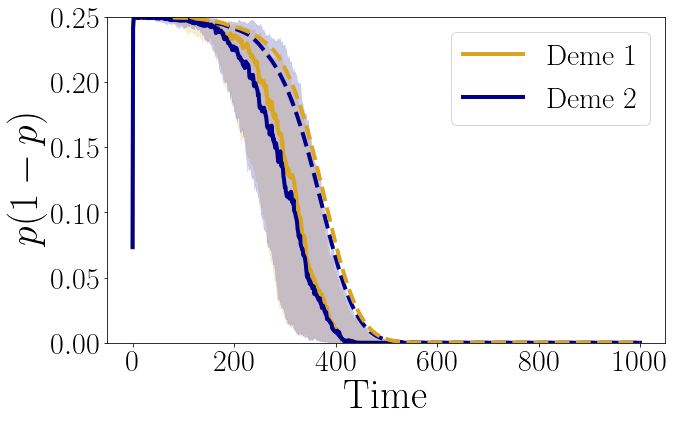}
    \subcaption{\small{\color{Black}Major-effect locus with polygenic background: weak selection ($g=0.1$)}.}
    \label{fig:m03g01_sym_50}
\end{subfigure}
\qquad\begin{subfigure}{.45\linewidth}
\centering
    \includegraphics[width=\linewidth]{p1_p2_8_2_control_m=0.80_g=0.10_sym.png}
    \subcaption{\small Control case without polygenic background: weak selection ($g=0.1$).}
    \label{fig:m03g01control_sym_50}
\end{subfigure}
\\
\begin{subfigure}{.45\linewidth}
\centering
\includegraphics[width=\linewidth]{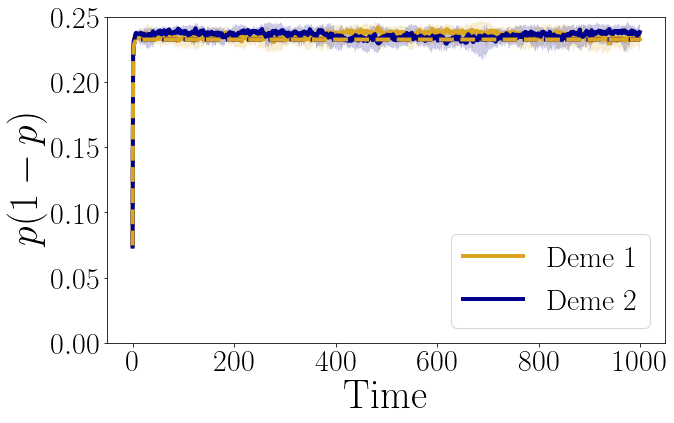}
    \subcaption{\small Major-effect locus with polygenic background: intermediate selection ($g=0.5$).}
    \label{fig:m03g05_sym_50}
\end{subfigure}
\qquad\begin{subfigure}{.45\linewidth}
\centering
    \includegraphics[width=\linewidth]{p1_p2_8_2_control_m=0.80_g=0.50_sym.png}
    \subcaption{\small Control case without polygenic background: intermediate selection ($g=0.5$).}
    \label{fig:m03g05control_sym_50}
\end{subfigure}
\\
\begin{subfigure}{.45\linewidth}
\centering
    \includegraphics[width=\linewidth]{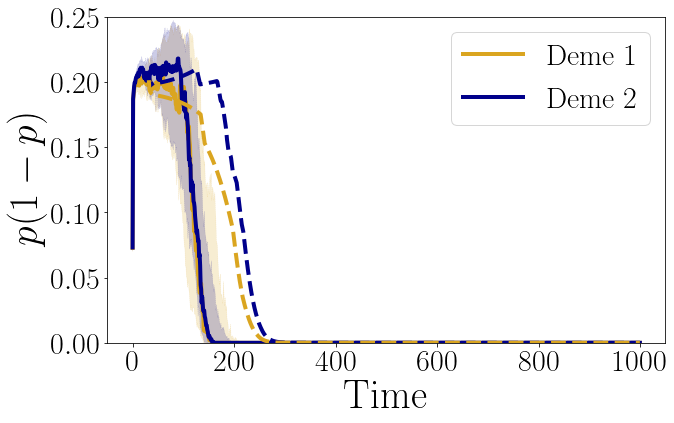}
    \subcaption{\small{\color{Black}Major-effect locus with polygenic background: strong selection ($g=1$)}}
    \label{fig:m03g1_sym_50}
\end{subfigure}
\qquad\begin{subfigure}{.45\linewidth}
\centering
    \includegraphics[width=\linewidth]{p1_p2_8_2_control_m=0.80_g=1.00_sym.png}
    \subcaption{\small Control case without polygenic background : strong selection ($g=1$).}
    \label{fig:m03g1control_sym_50}
\end{subfigure}
\caption{{\color{Black}\small\textbf{Same as \cref{fig:comp_slim_increasing_selection_sym}, but with a quantitative background of $\boldsymbol{L} = 50$ loci and $\boldsymbol{\sigma_{LE}} = 0.2$.}}}
\label{fig:comp_slim_increasing_selection_sym_50}
\end{figure}

\begin{figure}
\begin{subfigure}{.45\linewidth}
\centering
    \includegraphics[width=\linewidth]{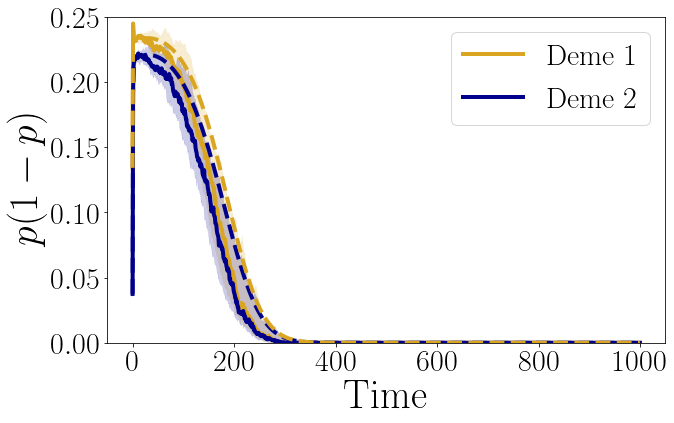}
    \subcaption{\small{\color{Black}Major-effect locus with polygenic background: weak selection ($g=0.1$)}.}
    \label{fig:m03g01_asym_50}
\end{subfigure}
\qquad\begin{subfigure}{.45\linewidth}
\centering
    \includegraphics[width=\linewidth]{p1_p2_8_2_control_m=0.80_g=0.10_asym.png}
    \subcaption{\small Control case without polygenic background: weak selection ($g=0.1$).}
    \label{fig:m03g01control_asym_50}
\end{subfigure}
\\
\begin{subfigure}{.45\linewidth}
\centering
\includegraphics[width=\linewidth]{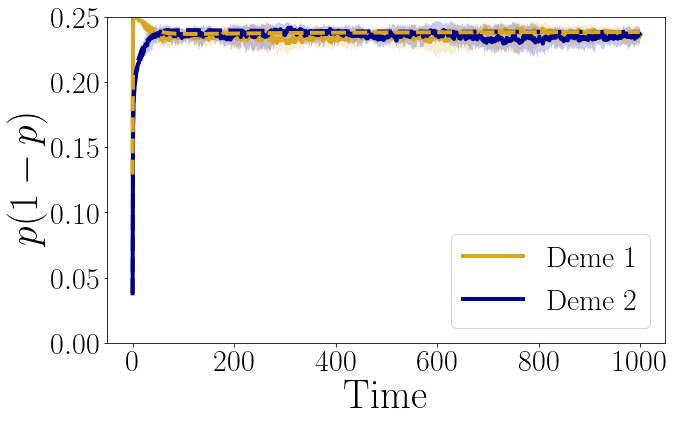}
    \subcaption{\small Major-effect locus with polygenic background: intermediate selection ($g=0.5$).}
    \label{fig:m03g05_asym_50}
\end{subfigure}
\qquad\begin{subfigure}{.45\linewidth}
\centering
    \includegraphics[width=\linewidth]{p1_p2_8_2_control_m=0.80_g=0.50_asym.png}
    \subcaption{\small Control case without polygenic background: intermediate selection ($g=0.5$).}
    \label{fig:m03g05control_asym_50}
\end{subfigure}
\\
\begin{subfigure}{.45\linewidth}
\centering
    \includegraphics[width=\linewidth]{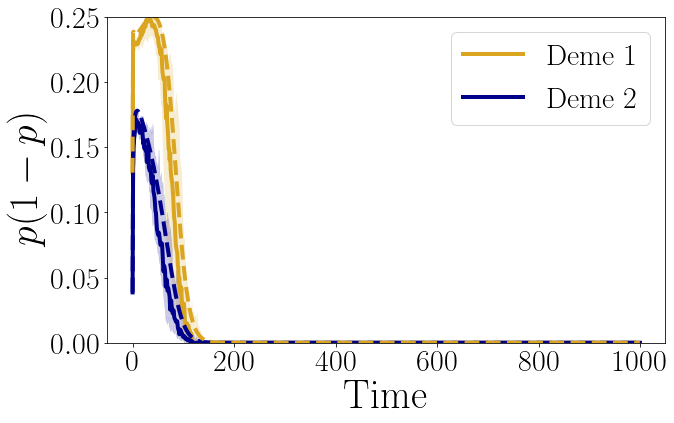}
    \subcaption{\small{\color{Black}Major-effect locus with polygenic background: strong selection ($g=1$).}}
    \label{fig:m03g1_asym_50}
\end{subfigure}
\qquad\begin{subfigure}{.45\linewidth}
\centering
    \includegraphics[width=\linewidth]{p1_p2_8_2_control_m=0.80_g=1.00_asym.png}
    \subcaption{\small Control case without polygenic background : strong selection ($g=1$).}
    \label{fig:m03g1control_asym_50}
\end{subfigure}
\caption{\color{Black}\small\textbf{Same as \cref{fig:comp_slim_increasing_selection_sym}, but with a quantitative background of $\boldsymbol{L} = 50$ loci and $\boldsymbol{\sigma_{LE}} = 0.2$ and asymmetrical initial subpopulations sizes.}}
\label{fig:comp_slim_increasing_selection_asym_50}
\end{figure}
}
\section*{Acknowledgements}

The authors thank John Wakeley and two anonymous reviewers for very constructive and detailed comments that, in their opinion, greatly improved the manuscript. L.D also thanks Sepideh Mirrahimi for valuable discussions and comments, and Florence D\'ebarre for insightful conversations. L.D has received partial funding from the
ANR project DEEV ANR-20-CE40-0011-01 and a Mitacs Globalink Research Award. S.P.O. has received funding from the NSERC Discovery Grant: RGPIN-2016-03711.
This project has received funding from the European Research Council (ERC) under the European Union's Horizon 2020 research and innovation programm (grant agreement No 865711).
\printbibliography
\end{document}